\newcolumntype{H}{>{\setbox0=\hbox\bgroup}c<{\egroup}@{}}
\newcommand{\ATLSi}{{\rm mv-ATL}$^*_{\ra}$\xspace}
\newcommand{\ATLi}{{\rm mv-ATL}$_{\ra}$\xspace}
\def\ttt{\mbox{\bf t}}
\def\uu{\mbox{\bf u}}
\def\ff{\mbox{\bf f}}
\def\Ltrans{LT}
\newcommand{\designated}{{\cal D}}
\newcommand{\join}{\sqcup}
\newcommand{\meet}{\sqcap}
\newcommand{\bigjoin}{\bigsqcup}
\newcommand{\bigmeet}{\bigsqcap}
\newcommand{\pol}[1]{\prop{pol_{#1}}}
\newcommand{\dronepol}[1]{\prop{d\text{-}pol_{#1}}}
\newcommand{\groundpol}[1]{\prop{g\text{-}pol_{#1}}}
\newcommand{\droneok}[1]{\prop{d\text{-}ok_{#1}}}
\newcommand{\groundok}[1]{\prop{g\text{-}ok_{#1}}}
\newcommand{\dlattice}{{\bf 2 + 2 \times 2 + 2 \times 2}}
\newenvironment{myalgorithm}[1]{
  \begin{tabular}{|p{\textwidth-0.5cm}|}
  \hline
  \vspace{0.01cm}
  \textbf{Algorithm}\quad #1;
  \smallskip
  \\ \hline
  \begin{enumerate}
}{
  \end{enumerate}
  \\ \hline
  \end{tabular}
}
\begin{document}
\setcounter{page}{1}   \issue{175~(2020)}        \papernumber{1940}

\title{Multi-Valued Verification of Strategic Ability}

\author{Wojciech Jamroga, Beata Konikowska,
             Damian Kurpiewski, Wojciech Penczek\thanks{Address for correcpondence:
                       Institute of Computer Science, Polish Academy of Sciences,  Jana Kazimierza 5, 01-248 Warsaw, Poland}
 \\
Institute of Computer Science \\
Polish Academy of Sciences \\
 Jana Kazimierza 5, 01-248 Warsaw, Poland \\
 \{jamroga, konikowska, kurpiewski, penczek\}@ipipan.waw.pl
}

\maketitle

\runninghead{W. Jamroga et al.}{Multi-Valued Verification of Strategic Ability}

\vspace*{-4mm}
\begin{abstract}
Some multi-agent scenarios call for the possibility of evaluating specifications in a richer domain of truth values.
Examples include runtime monitoring of a temporal property over a growing prefix of an infinite path, inconsistency
analysis in distributed databases, and verification methods that use incomplete anytime algorithms, such as bounded model checking.
In this paper, we present \emph{multi-valued alternating-time temporal logic} (\ATLSi), an expressive logic to specify strategic
abilities in multi-agent systems.
It is well known that, for branching-time logics, a general method for model-independent translation from multi-valued
to two-valued model checking exists.
We show that the method cannot be directly extended to \ATLSi.
We also propose two ways of overcoming the problem.
Firstly, we identify constraints on formulas for which the model-independent translation can be suitably adapted.
Secondly, we pre\-sent a model-dependent reduction that can be applied to all formulas of \ATLSi.
We show that, in all cases, the complexity of verification increases only linearly when
new truth values are added to the evaluation domain.
We also consider several examples that show possible applications of \ATLSi
and motivate its use for model checking multi-agent systems.
\end{abstract}

%-------------------------------------------------------------------------------
\section{Introduction}\label{sec:intro}
%-------------------------------------------------------------------------------

Alternating-time temporal logic \ATLs and its less expressive variant \ATL~\cite{Alur97ATL,Alur02ATL} are probably the most popular
logics that allow for reasoning about agents' abilities in strategic encounters. \ATLs combines features of temporal logic and basic
game theory, encapsulated in the main language construct of the logic, $\coop{A}\gamma$, which can be read as ``the group of agents
$A$ has a strategy to enforce $\gamma$''. Property $\gamma$ can include operators $\Next$ (``next''), $\Always$ (``always''),
$\Sometm$ (``eventually'') and/or $\Until$ (``until'').
Much research on \ATLs has focused on the way it can be used for verification of multi-agent systems,
including theoretical studies on the complexity of model checking, as well as practical verification algorithms.

Multi-valued semantics have been already proposed for various temporal and temporal-epistemic logics, and applied in verification of
distributed and multi-agent systems.
In this paper, we extend the general approach of~\cite{kp02a,kp06} to verification of strategic abilities for agents and their coalitions.
Many relevant properties of multi-agent systems are intuitively underpinned by the ability (or inability) of particular agents to obtain particular outcomes.
For example, most functionality requirements can be specified as the ability of the authorized users to achieve their legitimate goals.
At the same time, many security properties can be phrased in terms of the inability of unauthorized users to compromise the system.
We show that reasoning about, and verification of such statements can be naturally conducted in richer domains of truth values, regardless of the actual notion of strategic play and constraints on the agents' epistemic capabilities.
Moreover, multiple truth values incur no significant increase of either the theoretical or the practical complexity of the model checking problem.
In fact, we show that multi-valued verification of strategic abilities can be usually done by means of an efficient translation to classical verification.
We also present a case study that demonstrates the modeling value of the approach and its efficiency.

%-------------------------------------------------------------------------------
\subsection{Multi-valued model checking: Why bother?}\label{sec:motivation}
%-------------------------------------------------------------------------------

Typically, model checking is a yes/no problem.
That is, the output is a classical truth value: $\top$ for ``yes'', $\bot$ for ``no.''
However, it is sometimes convenient to consider the output of verification in a richer domain of values.
This can be due to at least three reasons.
First, our reasoning about the world can be based on a non-classical notion of truth (e.g., graded, fuzzy, constructive, defeasible etc.).
For example, if an atomic statement \prop{pol} (``the area is polluted'') is evaluated according to measurements conducted in the environment,
the possible outputs can include ``highly,'' ``much,'' ``little,'' and ``not at all,'' instead of simply ``yes'' and ``no.''
Multi-valued semantics take those outputs directly as the possible truth values of \prop{pol}, and hence also of more complex formulas involving \prop{pol}.
This is particularly useful in case of model checking realistic systems, as it allows to lift logical reasoning to a richer domain of answers in a compact way.

Secondly, we can understand the properties of the world in a classical way (being always completely true or fully false), but our model of the system can be only partially conclusive. A good example is runtime monitoring of temporal properties, where a temporal formula (interpreted with an infinite time horizon in mind) is checked on a finite but constantly growing sequence of events, observed so far.
Consider for instance specification $\Sometm\prop{cool}$ (``the reactor will be cooled down sometime in the future'').
If \prop{cool} has already occurred then the formula is clearly true whatever happens next. What if it has not occurred? Then, the formula may still turn out true (because \prop{cool} may occur in subsequent steps), but it can also turn out false; effectively, the truth value is unknown in our current model. Likewise, formula $\Always\prop{cool}$ (``\prop{cool} will always hold'') can be only proved false in the course of monitoring, or the model is inconclusive regarding its value.
Indeed, well known approaches to runtime monitoring use multi-valued interpretation of temporal formulas in finite runs~\cite{Bauer06runtime3val,Bauer07runtime4val}.

Conflicting evidence coming from different sources is another reason why one may need to deal with an imperfect model of the system. This can happen, e.g., in case of a distributed knowledge base, where some components may be outdated and/or contain unreliable information. In classical logic, the deductive explosion would make any attempt at reasoning useless. In multi-valued logics, one can assign a special truth value (``inconsistent'') to situations when conflicting evidence about the value of \prop{p} exists, and conduct the verification in the usual way.

Thirdly, even if the model is complete and faithful, the verification procedure can be only partially conclusive.
For instance, in {bounded model checking}~\cite{Biere99bmc,Penczek03ctlk}, a full transition system is given but the formula is checked
on runs of length at most $n$. Now, again, $\Sometm\prop{p}$ is clearly true if we find \prop{p} to occur on every path in up to $n$ steps.
Otherwise, the output is inconclusive (because \prop{p} might or might not occur in subsequent steps).
The case of $\Always\prop{p}$ is analogous.

Note that different sources of multiple truth values can easily combine.
For example, runtime monitoring of a distributed knowledge base may require handling both the possible inconsistency
of data across different locations and the uncertainty about how the system state will evolve.
This kind of scenarios are most conveniently modeled by {partially}, rather than linearly ordered sets of answers.
Overall, it should be up to the designer to decide which domain of truth values will be used for verification.
In this paper, we provide a general methodology, together with flexible algorithms, that can be used for any distributive lattice of interpretation.

%-------------------------------------------------------------------------------
\subsection{Technical contribution and outline of the paper}
%-------------------------------------------------------------------------------

The focus of the article is on general multi-valued verification of strategic abilities in multi-agent systems.
The main contributions are:
\begin{enumerate}
\item We propose a multi-valued variant of alternating-time temporal logic \ATLs, called \ATLSi, over arbitrary lattices of logical values.
  We also show that the new logic is a conservative extension of the standard, 2-valued \ATLs.
\item\label{it:transl} We study the multi-valued model checking problem for \ATLSi.
Similarly to the previous work on temporal model checking~\cite{kp02a,kp06}, we do not propose dedicated algorithms for \ATLSi.
 Instead, we look for general efficient translations from the multi-valued case to the 2-valued case. In this respect, we:
  \begin{enumerate}
  \item prove that no model-independent translation exists for the whole language of \ATLSi,
  \item identify a broad subclass of formulas for which such a translation can be obtained,
  \item propose a recursive model-dependent translation that works for all instances of the problem.
  \end{enumerate}
\item We show that all results are insensitive to the actual notion of strategy and strategic play. In particular, they easily extend to verification of strategies under imperfect information.
\item Finally, we report an implementation of the verification algorithm based on the translation to 2-valued model checking, and evaluate its performance experimentally.
\end{enumerate}

Point~\ref{it:transl} might require further explanation.
Multi-valued model checking often provides a \emph{conceptual} approximation of the classical (two-valued) verification problem,
especially in case the precise model is difficult to obtain.
On the other hand, we show a \emph{technical} reduction of multi-valued model checking to the two-valued variant.
Thus, typically, a designer who wants to verify properties of a complex system, expressed in (classical) \ATLs,
would first come up with a conceptual approximation of the problem in \ATLSi, then use the technical translation of the specification
back to model checking of \ATLs, and finally run the latter by means of an existing tool (such as MCMAS~\cite{Lomuscio15mcmas}).

The structure of the paper is as follows.
We begin by presenting the context of our work: the alternating time logic \ATLs in Section~\ref{sec:preliminaries},
and distributive lattices in Section~\ref{sec:lattices}. We also introduce our working example of drones monitoring the pollution in a city.
In Section~\ref{sec:multivalued}, we define the syntax and semantics of our multi-valued variant of \ATLs, the logic \ATLSi.
Section~\ref{sec:mcheck-mv} discusses
the model checking problem for \ATLSi (in general and under certain restrictions). Until that point, we assume the classical interpretation
of transitions in models, i.e., the multi-valued approach is applied only to the labeling of \emph{states}.
Section~\ref{sec:mvtrans} extends our definitions and results to models with many-valued transitions.
We also show that the idea of \emph{may/must abstraction} can be seen as a special case of model checking \ATLSi.
Section~\ref{sec:imperfinfo} adapts our results to verification of strategies for agents with imperfect information.
In Section~\ref{sec:drones}, we present an experimental evaluation of our algorithms, based on the drones scenario.
Finally, we conclude in Section~\ref{sec:conclusions}.

\para{Previous version of the article.}
The main concepts and some of the results have appeared in a preliminary form in the conference paper~\cite{Jamroga16mvATL}.
The present version extends it with proper proofs, a comprehensive discussion on motivation and applicability,
and a generalization of the framework to models with multi-valued transitions.
We also add a case study (complete with a detailed experimental evaluation) based on a newly proposed benchmark model.

\subsection{Related work}\label{sec:related}

Multi-valued interpretation of modal formulas has been used in multiple approaches to verification.
The main idea was proposed by Fitting~\cite{fitting91,fitting92} already more than 25 years ago.
Further fundamental work on multi-valued modal logic includes, e.g.,~\cite{Vijzelaar15mv-abstraction}
where general properties of multi-valued abstractions were studied, and demonstrated on an example 9-valued bilattice.

In the 2000s, a number of works adapted the idea to verification of distributed and multi-agent systems.
A variant of \CTLs for models over finite quasi-boolean lattices was proposed in~\cite{kp02a}, together with a general
translation scheme that reduced multi-valued model checking of \CTLs specifications to the standard 2-valued case.
This was later extended to multi-valued modal $\mu$-calculus~\cite{gcconcur03,BrunsG04,ShohamG12,PanLCM15}, and
to multi-valued modal $\mu$-calculus with knowledge~\cite{kp06}.\footnote{
  Note that, despite having ``games'' in the title, \cite{ShohamG12} was not concerned with strategic specifications.
	``Games'' were used there only in the technical sense, to define the semantics of $\mu$-calculus.
However, \cite{AminofKM12} deals with game applications of \cite{ShohamG12}. }
Our paper follows this line of work, and extends the techniques to strategic operators of \ATLs.
We also enrich the language with the two-valued ``implication'' (or comparison) and ''equivalence'' operators $\ra$ and $\equtil$,
which provide:
(i) the notions of material implication and biconditional, useful in specifying general properties
of multi-valued models;
(ii) a way of model checking ``threshold properties'' analogous to probabilistic temporal logics behind PRISM~\cite{Kwiatkowska02prism}.
As it turns out, the new operators require non-trivial treatment, significantly different
from the previous works~\cite{kp02a,gcconcur03,BrunsG04,kp06,ShohamG12,PanLCM15}.

All the above papers consider \emph{general} multi-valued verification, i.e., the interpretation can be based on an arbitrary finite lattice.
Another line of work considers model checking over specific domains of truth values, tailored to a particular class of scenarios.
Model checking methods for the special case of {3}-valued temporal logics were discussed in~\cite{gj03,hjs03,hp03-proc} and,
recently, in~\cite{Kouvaros17predicateAbstraction}.
Two different methods for approximating the standard two-valued semantics of \ATLs under imperfect information
by using three-valued \ATLs are presented and analysed in \cite{BelardinelliLM18,BelardinelliLM19}.
However, these approaches use only Kleene's three valued logic rather than general distributive lattices.
Moreover, a partial algorithm for model checking two-valued perfect recall via its approximation
as three-valued bounded recall is constructed in \cite{BelardinelliLM18}.

Related approaches include runtime verification, which often uses 3-valued~\cite{Bauer06runtime3val}
or 4-valued interpretation~\cite{Bauer07runtime4val,gcconcur03} of temporal formulas.
Moreover, a 4-valued semantics has been used to evaluate database queries~\cite{MaluszynskiS11-4ql,MaluszynskiS13partiality}.
A {3}-valued semantics of strategic abilities in models of perfect information was considered in~\cite{Ball06mv-AMC}
for verification of alternating $\mu$-calculus, and in~\cite{LomuscioM15} for abilities expressed in \ATL.\footnote{
  More precisely, the semantics in~\cite{LomuscioM15} includes agents' indistinguishability relations in the model,
	but it allows for non-uniform play, thus effectively assuming that the agents have perfect information about the current
	state of the system while playing. }
In~\cite{Belardinelli17abstraction}, another 3-valued semantics of \ATL was studied, for both perfect and imperfect information.
In all those papers (i.e.,~\cite{Ball06mv-AMC,LomuscioM15,Belardinelli17abstraction}), the main aim was to verify may/must
abstractions of multi-agent systems.
Note that, while the agenda of our paper comes close to that of~\cite{LomuscioM15,Belardinelli17abstraction},
our semantics differs from~\cite{LomuscioM15} even in the 3-valued case.
Moreover, our multi-valued semantics of \ATL is a conservative extension of standard \ATL, whereas the one in~\cite{LomuscioM15} is not.
In contrast, the \ATL variant in~\cite{Belardinelli17abstraction} \emph{is} a conservative extension of the 2-valued semantics,
and can be in fact considered as a very special case of our general semantics.

A quite different but related strand of research concerns real-valu\-ed logics over probabilistic models for temporal~\cite{Alfaro05discounted,Lluch05quantitative,Jamroga08mtl-aamas} and strategic specifications~\cite{Jamroga08mtl-prima}.
We also mention the research on probabilistic model checking of temporal and strategic
logics~\cite{Huth97quantitative,Baier99computing,Kwiatkowska02prism,Bulling09patl-fundamenta,Huang12probabilisticATL,Chen13prismgames}
that evaluates specifications in the 2-valued domain but recognizes different degrees of success and the need to aggregate
them over available strategies and possible paths.

In sum, there have been approaches to general multi-valued model checking of temporal and epistemic properties, but no analogous proposals for strategic properties.
Furthermore, there were proposals for multi-valued verification of strategic properties over specific (and usually very simple) sets of truth vales,
but no framework that studies the same problem for arbitrary lattices.
This paper fills the gap, and combines elements of both strands to obtain a general framework for multi-valued verification of ability in systems
of interacting agents.

%-------------------------------------------------------------------------------
\section{How to specify strategic abilities}\label{sec:preliminaries}
%-------------------------------------------------------------------------------

We begin by recalling the basics of two-valued alternating-time temporal logic.
We also introduce our working example that will be used throughout the paper.

%-------------------------------------------------------------------------------
\subsection{Syntax}
%-------------------------------------------------------------------------------

\emph{Alternating-time temporal logic}~\cite{Alur97ATL,Alur02ATL} generalizes the branching-time temporal
logic \CTLs by replacing path quantifiers $\Epath,\Apath$ with \emph{strategic modalities} $\coop{A}$.
Informally, $\coop{A}\gamma$ says that a group of agents $A$ has a collective
strategy to enforce temporal property $\gamma$.
\ATLs formulas can include temporal operators: $\Next$ (``in the next state''),
$\Always$ (``always from now on''), $\Sometm$ (``now or sometime in the future''),
and $\Until$ (strong ``until'').
Similarly to \CTLs\ and \CTL, we consider two syntactic variants of the alternating-time logic,
namely \ATLs\ and \ATL.
Formally, let $\Agt$ be a finite set of agents, and $\AP$ a countable set of atomic propositions.
The language of \ATLs is defined as follows:
\begin{center}
$\varphi::= \prop{p} \mid \neg \varphi \mid \varphi\wedge\varphi
  \mid \coop{A}\gamma$, \qquad
$\gamma::=\varphi \mid \neg\gamma \mid \gamma\land\gamma \mid
  \Next\gamma \mid \gamma\Until\gamma$.
\end{center}
where $A\subseteq\Agt$ and $\prop{p} \in \AP$.
Traditionally, only the state formulas $\varphi$ are called formulas of \ATLs.
Derived boolean connectives and constants ($\lor,\true,\false$) are defined as usual.
``Sometime'', ``weak until'', and ``always from now on'' are defined as
$\Sometm\gamma \equiv \true \Until \gamma$,
$\gamma_1\WeakUntil\gamma_2 \equiv \neg((\neg\gamma_2)\Until(\neg\gamma_1\land\neg\gamma_2))$,
and $\Always\gamma \equiv \gamma\WeakUntil\false$.
Also, we can use $\noavoid{A}\gamma \equiv \neg \coop{A} \neg\gamma$ to express that, for each strategy of $A$, property $\gamma$ holds on some paths.

\medskip
\ATL (without ``star'') is the syntactic variant in which strategic and temporal operators are combined into compound modalities:
$\varphi ::= \prop{p}  \mid  \lnot \varphi  \mid  \varphi\land\varphi
\mid  \coop{A} \Next\varphi  \mid  \coop{A} \varphi\Until\varphi  \mid \coop{A} \varphi\WeakUntil\varphi$.

%-------------------------------------------------------------------------------
\subsection{Models}
%-------------------------------------------------------------------------------

The semantics of \ATLs is typically defined over synchronous multi-agent transition systems,
i.e., models where all the agents simultaneously decide on their next actions, and the combination
of their choices determines the next state, see the following definition.

\begin{definition}[CGS]
A \emph{concurrent game structure{ (CGS)}} is a tuple
$M = \tuple{\Agt, \States,
Act, d, \trans, \AP, V}$, which includes  nonempty finite sets of:
agents $\Agt$,
states $\States$,
actions $\Actions$,
atomic propositions $\AP$,
and a propositional valuation $V: \States \rightarrow 2^{\AP}$.
The function $d: \Agt \times \States \rightarrow \powerset{Act}\setminus\set{\emptyset}$ defines the availability of actions.
The (deterministic) transition function
$\trans$ assigns a successor state $q' = \trans(q,\alpha_1,\dots,\alpha_{|\Agt|})$ to
each state $q\in \States$ and any tuple of actions $\alpha_i \in d(i,q)$, one per agent $i\in\Agt$, that
can be executed in $q$.
A \emph{pointed CGS} is a pair $(M,q_0)$, where $M$ is a CGS
and $q_0\in\States$ is the initial state of $M$.
\end{definition}

A \emph{path} $\lambda=q_0q_1q_2\dots$ in a CGS is an infinite sequence of states
such that there is a transition between each $q_i, q_{i+1}$ for each $i \geq 0$.

$\lambda[i]$ denotes the $i$th position on $\lambda$ (starting from $i=0$)
and $\lambda[i,\infty]$ the suffix of $\lambda$ starting with $i$.

%-------------------------------------------------------------------------------
\subsection{Semantics}
%-------------------------------------------------------------------------------

Given a CGS, we define the strategies and their outcomes as follows.
A \emph{perfect recall strategy} (or \emph{\IR-strategy}) for agent $a$ is
a function $s_a:\States^+ \rightarrow \Actions$ such that $s_a(q_0q_1\dots q_n)\in d(a,q_n)$.
A \emph{memoryless strategy} (or \emph{\Ir-strategy}) for $a$
is a function $s_a:\States\rightarrow \Actions$ such that $s_a(q)\in d(a,q)$.
A \emph{collective strategy} for a group of agents $A=\set{a_1,\ldots,a_r}$
is a tuple of individual strategies $s_A = \tuple{s_{a_1}, \ldots, s_{a_r}}$.
Note that $s_A$ only binds the agents in $A$, while agents outside $A$ can act as they wish.
The set of such strategies is denoted by $\Sigma_A^\IR$ (resp.~$\Sigma_A^\Ir$).
The ``outcome'' function $out(q,s_A)$ returns the set of all paths that
can occur when agents $A$ execute strategy $s_A$ from state $q$ onward.
The semantics of perfect recall \ATLs\ is defined as follows:
\begin{description}
\itemsep=0.7pt
\item $\model,\state \models \prop{p}$ iff $\prop{p}\in\PVal(\state)$, for $\prop{p}\in\AP$;
\item $\model,\state \models \neg\neg\varphi$ iff $\model,\state \models\varphi$, and $\model,\state \models \neg\varphi$ iff $\model,\state \not\models  \varphi$ for $\varphi\neq\neg\psi$ for any $\psi$;
\item $\model,\state \models \varphi_1\wedge\varphi_2$
  iff $\model,\state \models \varphi_1$ and $\model,\state \models \varphi_2$;
\smallskip
\item $\model,q \models \coop{A}\gamma$\quad iff there is a strategy $s_A\in\Sigma_A^\IR$
  such that, for each path $\lambda\in out(q,s_A)$, we have $\model,\lambda \models \gamma$.
\smallskip
\item $\model,\lambda \models \varphi$ iff $\model,\lambda[0]\models \varphi$;
\item $\model,\lambda \models \neg\neg\gamma$ iff $\model,\lambda \models \gamma$, and $\model,\lambda \models \neg\gamma$ iff $\model,\lambda \not\models \gamma$ for $\lambda\neq\neg\gamma$ for any $\gamma$;
\item $\model,\lambda \models \gamma_1\wedge\gamma_2$ iff  $\model,\lambda\models \gamma_1$
  and $\model,\lambda \models \gamma_2$;
\item $\model,\lambda \models \Next\gamma$ iff $\model,\onepath[1,\infty] \models \gamma$; and
\item $\model,\lambda \models \gamma_1\Until\gamma_2$ iff there is an $i\in\mathbb{N}_0$ such that
  $\model,\lambda[i,\infty] \models \gamma_2$ and $\model, \lambda[j,\infty] \models \gamma_1$
  for all $0\leq j< i$,
\end{description}
where $\mathbb{N}_0$ is the set of non-negative integers.
The memoryless semantics of ATL* uses strategies of $\Sigma_A^\Ir$ rather than $\Sigma_A^\IR$.

\begin{figure}[!b]
\vspace{1mm}
\centering
  \includegraphics[scale=0.45]{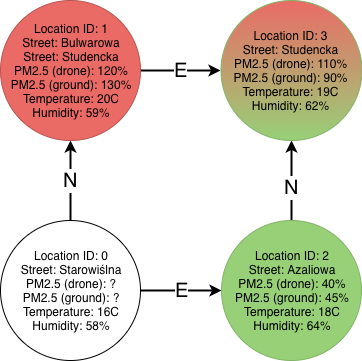}
\caption{Map: drone navigation and measurements in an area of Cracow. Location colors indicate whether the PM2.5 readings  are within or beyond the norm}\label{fig:map}%\vspace*{-2mm}
\end{figure}

\begin{example}[Drones patrolling for pollution]\label{ex:drones-cgs}
Consider a team of $k$ drones monitoring air pollution in the city of Cracow, Poland.
For this scenario, we use the map shown in Figure~\ref{fig:map}.
To keep the resulting CGS small, the map is a grid that only includes four locations, and the drones
can move between the locations in one direction only (either North or East).\footnote{A more complex map
will be used for the experiments in Section~\ref{sec:drones}. }
The initial location is $0$, and the drones are supposed to reach location $3$ (called their ``target'' location).
Each drone has a sensor to measure the level of PM2.5 in the air, that is the rate of particles with a diameter less than 2.5 microns. The drone can also communicate with the nearest sensor
on the ground that reports the PM2.5 rate on the ground level, as well as the current measurements of temperature, air pressure, and humidity.
Note that some measurements may be unobtainable at some locations. In particular, no measurements are available at the start location.
We also assume, for the sake of simplicity, that the environment can be viewed as stationary with respect to the movement of drones,
i.e., the measurements at a location do not change throughout the patrolling mission.

\medskip
Figure~\ref{fig:drones} presents a pointed concurrent game structure $\Mdrones$ that models the above scenario for $k=2$ drones.
Every drone is a separate agent, with two actions available: $N$ (fly North) and $E$ (fly East).
Due to a limited battery capacity, a drone can only visit $l=2$ locations before the battery dies.
Each state of the model includes information about the current locations of the drones, possibly distinguishing
the locations that have been already visited by the team.
Note that in our simple scenario the only pair of locations that can be reached by the drones via two different routes is $(3,3)$.
The corresponding two states are labeled accordingly $(3,3)_1$ and $(3,3)_2$.

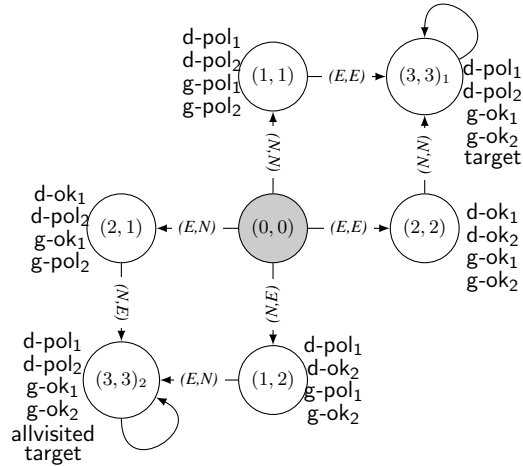
\begin{figure}[!h]
\vspace*{-3mm}
\centering
  \begin{tikzpicture}[>=latex,scale=1.0, every node/.style={scale=0.8}]
    \tikzstyle{state}=[circle,draw,trans, minimum size=8mm]
    \tikzstyle{initstate}=[circle,draw,trans, minimum size=8mm, fill=lightgrey]
    \tikzstyle{trans}=[font=\footnotesize]

    \path (0,0) node[initstate] (q00) {$(0,0)$}
		  (0,2) node[state] (q11) {$(1,1)$}
		  						+(-0.8,0.5) node {\dronepol{1}}
		  						+(-0.8,0.2) node {\dronepol{2}}
		  						+(-0.8,-0.1) node {\groundpol{1}}
		  						+(-0.8,-0.4) node {\groundpol{2}}		  						
		  (2,0) node[state] (q22) {$(2,2)$}
		  						+(0.9,0.2) node {\droneok{1}}
		  						+(0.9,-0.1) node {\droneok{2}}
		  						+(0.9,-0.4) node {\groundok{1}}		
		  						+(0.9,-0.7) node {\groundok{2}}		  						
		  (2,2) node[state] (q33) {$(3,3)_1$}
		  						+(0.9,0.1) node {\dronepol{1}}
		  						+(0.9,-0.2) node {\dronepol{2}}
		  						+(0.9,-0.5) node {\groundok{1}}
		  						+(0.9,-0.8) node {\groundok{2}}
		  						+(0.9,-1.1) node {\prop{target}}		  												  								
		  (-2,-2) node[state] (q33v) {$(3,3)_2$}
		  						+(-0.9,0.5) node {\dronepol{1}}
		  						+(-0.9,0.2) node {\dronepol{2}}
		  						+(-0.9,-0.1) node {\groundok{1}}
		  						+(-0.9,-0.4) node {\groundok{2}}	
		  						+(-0.9,-0.7) node {\prop{allvisited}}		  												
		  						+(-0.9,-1) node {\prop{target}}		  												
		  (0,-2) node[state] (q12) {$(1,2)$}
		  						+(0.8,+0.45) node {\dronepol{1}}
		  						+(0.8,+0.15) node {\droneok{2}}
		  						+(0.8,-0.15) node {\groundpol{1}}
		  						+(0.8,-0.45) node {\groundok{2}}		  									
		  (-2,0) node[state] (q21) {$(2,1)$}
		  						+(-0.8,0.45) node {\droneok{1}}
		  						+(-0.8,0.15) node {\dronepol{2}}
		  						+(-0.8,-0.15) node {\groundok{1}}
		  						+(-0.8,-0.45) node {\groundpol{2}}		  							  	
        ;

    % TRANSITIONS
   \path[->,font=\scriptsize] (q00)
         edge
           node[midway,sloped]{\onlabel{(N,N)}} (q11)
         edge
           node[midway,sloped]{\onlabel{(E,E)}} (q22)
         edge
           node[midway,sloped]{\onlabel{(N,E)}} (q12)
         edge
           node[midway,sloped]{\onlabel{(E,N)}} (q21);
   \path[->,font=\scriptsize] (q11)
         edge
           node[midway,sloped]{\onlabel{(E,E)}} (q33);
   \path[->,font=\scriptsize] (q22)
         edge
           node[midway,sloped]{\onlabel{(N,N)}} (q33);
   \path[->,font=\scriptsize] (q12)
         edge
           node[midway,sloped]{\onlabel{(E,N)}} (q33v);
   \path[->,font=\scriptsize] (q21)
         edge
           node[midway,sloped]{\onlabel{(N,E)}} (q33v);

    % LOOPS
   \draw[-latex,black](q33) ..controls +(1.2,0.6) and +(0,1.4).. (q33);
   \draw[-latex,black](q33v) ..controls +(0,-1.4) and +(1.2,-0.6).. (q33v);

  \end{tikzpicture}\vspace*{-3mm}
\caption{Model $\Mdrones$: autonomous drones monitoring pollution\label{fig:drones}}
\end{figure}

Most of the atomic propositions refer to the pollution measurements available to a drone at its current location.
That is, $\dronepol{i}$ indicates that the $i$th drone registers pollution; more precisely:
the sensor of drone $i$ reports that the level of PM2.5 exceeds the norm. Similarly, $\droneok{i}$
indicates that the sensor of drone $i$ reports a level of PM2.5 within the norm.
Moreover, $\groundpol{i}$ (resp.~$\groundok{i}$) says that the ground sensor nearest to drone $i$
reports a level of PM2.5 exceeding the norm (resp.~within the norm).
Proposition \prop{target} indicates states where all drones have reached the target location.
Finally, proposition \prop{allvisited} labels the states where the team has already visited
all locations in the map: in case of $\Mdrones$, the only such state is $(3,3)_2$.
\finis
\end{example}

\begin{example}[Drones, ctd.]\label{ex:drones-atl}
For the model in Example~\ref{ex:drones-cgs}, we have, for instance,
$\Mdrones,(0,0) \models \coop{1}\Sometm\dronepol{1}$: drone $1$ has a strategy ensuring that its sensor will eventually register pollution. The stratety itself is simple: when in the state $(0,0)$ fly North.
Moreover, $\Mdrones,(0,0) \models \coop{1}\Sometm\droneok{1}$: it also has a strategy for reaching a location where it registers no pollution. This time the simplest strategy is to fly East.
In fact, $\Mdrones,(0,0) \models \coop{1}(\Sometm\dronepol{1} \land \Sometm\droneok{1})$: there is a single strategy for achieving both goals. The drone needs to fly East first, and then fly North.

Furthermore, no drone can assure on its own that all of the locations will eventually be visited:
$\Mdrones,(0,0) \models \neg\coop{1}\Sometm\prop{allvisited} \land \neg\coop{2}\Sometm\prop{allvisited}$.
This can only be ensured if both drones cooperate:
$\Mdrones,(0,0) \models  \coop{1,2}\Sometm\prop{allvisited}$.
On the other hand, the drones are bound to end up at the target location, no matter what they decide to do:
$\Mdrones,(0,0) \models  \coop{\emptyset}\Sometm\prop{target}$.
\finis
\end{example}

%-------------------------------------------------------------------------------
\section{Multi-valued domains of interpretation}\label{sec:lattices}
%-------------------------------------------------------------------------------

As the formulas of our multi-valued version of \ATLs will be interpreted in distributive lattices
of truth values~\cite{Birkhoff48lattice}, we recall the relevant notions and results in this section.

%-------------------------------------------------------------------------------
\subsection{Lattices of  truth values}\label{sec:lattices-intro}
%-------------------------------------------------------------------------------

\begin{definition}
A {\em lattice} is a partially ordered set $\L = (\LL,\leq)$, where every pair of elements $x,y \in \LL$ has the greatest lower bound
(called {\em meet} and denoted by $x \meet y$) and the least upper bound (called {\em join} and denoted by $x \join y$).
\end{definition}
Note that the meet and join of any $x,y \in \LL$ are uniquely determined due to the antisymmetry of $\leq$.

\medskip
In what follows, we only consider finite lattices.
We denote the least and the greatest elements of $\L$ by $\bot, \top$, respectively.
Also,
we write
(i) $x_1 < x_2 \mbox{ iff } x_1\leq x_2 \mbox{ and } x_1\neq x_2$,
and (ii) $x_1 \bowtie \nobreakspace x_2  \mbox{ iff neither } \\  x_1\leq \nobreakspace x_2 \mbox{ nor } x_2\leq \nobreakspace x_1$.
Moreover, let:
\begin{itemize}
\item $\uparrow x = \{y \in L \mid x \leq y\}$ denote the upward closure of $x$, and
\item $\downarrow x\: = \{y\in L \mid y \leq x\}$ denote the downward closure of $x$.
\end{itemize}
A lattice $\L' = (\LL',\leq')$ is a sublattice of a lattice $\L = (\LL,\leq)$
if $\LL' \subseteq \LL$ and $\leq' = \leq \cap (\LL' \times \LL')$.

\begin{definition}
A lattice $\L=(\LL, \leq)$ is \emph{distributive} if, for any $x,y,z \in \LL$,
the following two conditions hold:
\ (i) $z \join (x \meet y) = (z \join x) \meet (z \join y)$,
\ (ii) $z \meet (x \join y) = (z \meet x) \join (z \meet y)$.
\end{definition}
Note that, for any lattice $\L$, the operations $\join$ and $\meet$ are associative.

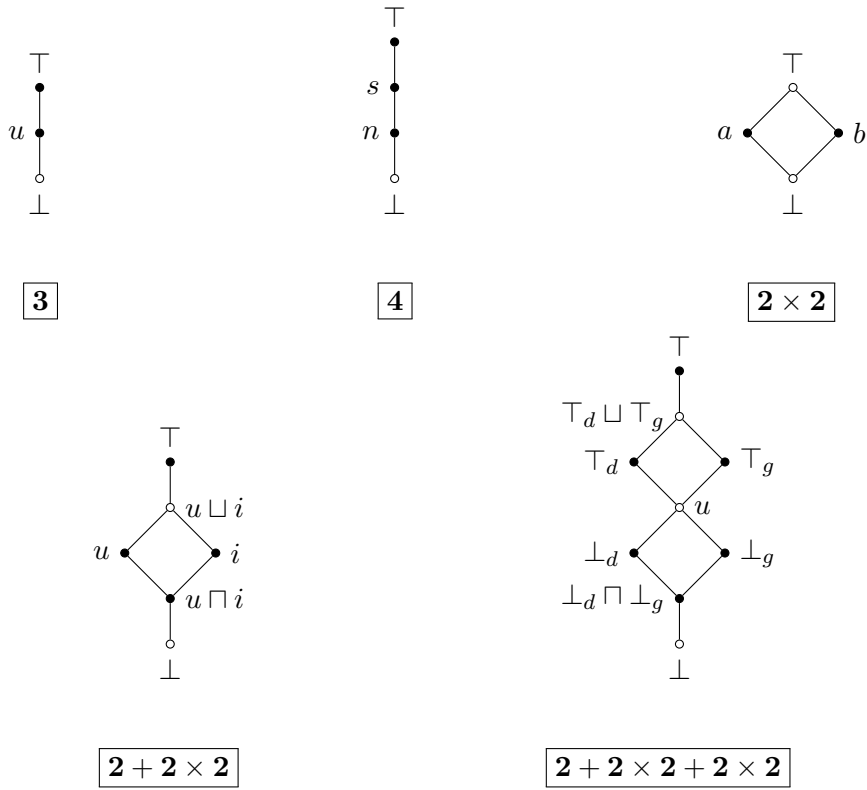
\begin{figure}[!h] %[t][!h]
\centering
\begin{tabular}{c@{\qquad\qquad\qquad\qquad\qquad}c@{\qquad\qquad\qquad\qquad\qquad}c}
\begin{tikzpicture}[scale=0.6]
	\tikzstyle{black}=[draw,circle,fill=black,minimum size=3pt,
                            inner sep=0pt]
    \tikzstyle{white}=[draw,circle,fill=white,minimum size=3pt,
                            inner sep=0pt]

    \draw (0,0) node[black] (t) [label=above:$\top$] {};
    \draw (0,-1) node[black] (u) [label={[align=right]left:$u$}] {};
    \draw (0,-2) node[white] (f) [label=below:$\bot$] {};

    \draw (t) -- (u);
    \draw (u) -- (f);
\end{tikzpicture}
  &
\begin{tikzpicture}[scale=0.6]
	\tikzstyle{black}=[draw,circle,fill=black,minimum size=3pt,
                            inner sep=0pt]
    \tikzstyle{white}=[draw,circle,fill=white,minimum size=3pt,
                            inner sep=0pt]

    \draw (0,0) node[black] (t) [label=above:$\top$] {};
    \draw (0,-1) node[black] (s) [label={[align=right]left:$s$}] {};
    \draw (0,-2) node[black] (n) [label=left:$n$] {};
    \draw (0,-3) node[white] (f) [label=below:$\bot$] {};

    \draw (t) -- (s);
    \draw (s) -- (n);
    \draw (n) -- (f);
\end{tikzpicture}
  &
\begin{tikzpicture}[scale=0.6]
	\tikzstyle{black}=[draw,circle,fill=black,minimum size=3pt,
                            inner sep=0pt]
    \tikzstyle{white}=[draw,circle,fill=white,minimum size=3pt,
                            inner sep=0pt]

    \draw (0,0) node[white] (t) [label=above:$\top$] {};
    \draw (-1,-1) node[black] (a) [label=left:$a$] {};
    \draw (1,-1) node[black] (b) [label=right:$b$] {};
    \draw (0,-2) node[white] (f) [label=below:$\bot$] {};

    \draw (t) -- (a);
    \draw (t) -- (b);
    \draw (a) -- (f);
    \draw (b) -- (f);
\end{tikzpicture}
  \\ \\

\ \ \ \fbox{${\bf 3}$} & \ \ \ \fbox{${\bf 4}$} & \fbox{${\bf 2 \times 2}$}
\end{tabular}

\begin{tabular}{c@{\qquad\qquad\qquad\qquad\qquad}c}
\begin{tikzpicture}[scale = 0.6]
	\tikzstyle{black}=[draw,circle,fill=black,minimum size=3pt,
                            inner sep=0pt]
    \tikzstyle{white}=[draw,circle,fill=white,minimum size=3pt,
                            inner sep=0pt]

    \draw (0,0) node[black] (t) [label=above:$\top$] {};
    \draw (0,-1) node[white] (s) [label=right:$\undec\join\incons$] {};
    \draw (-1,-2) node[black] (dk) [label=left:$\undec$] {};
    \draw (1,-2) node[black] (dc) [label=right:$\incons$] {};
    \draw (0,-3) node[black] (n) [label=right:$\undec\meet\incons$] {};
    \draw (0,-4) node[white] (f) [label=below:$\bot$] {};

    \draw (t) -- (s);
    \draw (s) -- (dk);
    \draw (s) -- (dc);
    \draw (dk) -- (n);
    \draw (dc) -- (n);
    \draw (n) -- (f);
\end{tikzpicture}
  &
\begin{tikzpicture}[scale = 0.6]
	\tikzstyle{black}=[draw,circle,fill=black,minimum size=3pt,
                            inner sep=0pt]
    \tikzstyle{white}=[draw,circle,fill=white,minimum size=3pt,
                            inner sep=0pt]

    \draw (0,0) node[black] (t) [label=above:$\top$] {};
    \draw (0,-1) node[white] (tdutg) [label=left:$\top_d\join\top_g$]{};
    \draw (-1,-2) node[black] (td) [label=left:$\top_d$] {};
    \draw (1,-2) node[black] (tg) [label=right:$\top_g$] {};
    \draw (0,-3) node[white] (u) [label=right:$\undec$] {};
    \draw (-1,-4) node[black] (fd) [label=left:$\bot_d$] {};
    \draw (1,-4) node[black] (fg) [label=right:$\bot_g$] {};
    \draw (0,-5) node[black] (fdnfg) [label=left:$\bot_d\meet\bot_g$]{};
    \draw (0,-6) node[white] (f) [label=below:$\bot$] {};

    \draw (t) -- (tdutg);
    \draw (tdutg) -- (tg);
    \draw (tdutg) -- (td);
    \draw (td) -- (u);
    \draw (tg) -- (u);
    \draw (u) -- (fd);
    \draw (u) -- (fg);
    \draw (fd) -- (fdnfg);
    \draw (fg) -- (fdnfg);
    \draw (fdnfg) -- (f);
\end{tikzpicture}
  \\ \\
\fbox{${\bf 2 + 2 \times 2}$} & \fbox{${\bf 2 + 2 \times 2 + 2 \times 2}$}
\end{tabular}
\caption{Distributive lattices and their join-irreducible elements}\label{fig:jir}\vspace*{-2mm}
\end{figure}

\begin{example}[Some useful lattices]
Figure~3 presents five distributive lattices with applicability motivated by clear practical intuitions.
The total order ${\bf 3}$ is the most popular lattice in multi-valued verification.
The intuition is simple: $\top$ stands for absolute truth, $\bot$ for absolute falsity,
and $u$ can be read as ``unknown'' or ``undefined.'' That is, when a formula $\varphi$ is assigned the value $u$,
this indicates that the statement represented by $\varphi$ cannot be conclusively evaluated
(its truth value -- in the classical sense -- cannot be determined, or even does not exist at the moment).
The lattice is very often used in model checking approaches based on abstraction, with $u$ assigned to formulas
for which the verification has proved inconclusive.

\medskip
The total order ${\bf 4}$ allows for representing graded uncertainty. For instance, in 4-valued approaches to runtime monitoring,
$s$ is interpreted as ``still possibly true,'' and $n$ stands for ``not proved false yet.''
In evidence-based reasoning, the values can correspond to situations when there is much (respectively, little) evidence supporting $\varphi$.
The lattice can be generalized to the $k$-valued linear order ${\bf k}$, useful in scenarios where the amount of positive/negative evidence
is weakly indicative for the truth of a statement. Consider a corpus of data coming from event logs that support or reject proposition \prop{p}.
Then, the logical value of \prop{p} can be, e.g., defined as the difference between the amounts of positive and negative evidence.
A more sophisticated, partially ordered lattice could also involve the  number of conflicts and the support set size.

The partial order ${\bf 2 \times 2}$ can be used to interpret statements with evidence coming from two different, possibly disagreeing sources $A$ and $B$.
Then, the value $a$ can be read as ``true according to source $A$, but not necessarily according to $B$,'' and analogously for $b$.
The dual interpretation is also possible, i.e., we can use $a$ to represent ``false according to source $A$, but not necessarily according to $B$'',
and likewise for $b$.
Actually, these two interpretations correspond to two different choices of the set ${\cal D}$ of so-called {\em designated values},
i.e., values corresponding to truth in classical logic and representing satisfaction of a formula.
Namely, the first interpretation corresponds to ${\cal D} = \{a, \top\}$, and the second --- to ${\cal D} = \{b, \top\}$.
Another useful lattice, ${\bf 2 + 2 \times 2}$, allows for a natural representation of both uncertainty and disagreement.
It provides truth values for statements with inconsistent evidence ($\incons$) and inconclusive evidence ($\undec$).
Combinations of inconsistency and uncertainty can be easily obtained by join and meet ($\undec\join\incons$, $\undec\meet\incons$)

\medskip
For a multi-valued interpretation of the formulas in the drone model, we propose another lattice denoted by $\dlattice$.
The lattice combines two instances of ${\bf 2 \times 2}$: one representing incomplete evidence of truth,
and the other incomplete evidence of falsity.
The idea is that a drone $i$ will evaluate the truth of the proposition $\prop{pol}_i$
(``according to $i$, its current location is polluted''), based on the readings from its own sensor and the nearest ground sensor.
Besides that, the lattice includes explicit nodes for the maximal and minimal elements, similarly to the previous lattice.
This gives us the following basic truth values and their interpretation:
\begin{itemize}
\itemsep=0.9pt
\item $\top$: both readings indicate presence of pollution at the location,
\item $\top_d$: reading from the drone sensor indicates pollution, while the ground sensor indicates no pollution or provides no reading,
\item $\top_g$: reading from the ground sensor shows pollution; absent or negative reading from the drone,
\item $\undec$: there are no readings, neither from the ground nor from the drone,
\item $\bot_d$: drone sensor indicates no pollution; there is no reading from the ground,
\item $\bot_g$: ground sensor indicates no pollution; no reading from the drone,
\item $\bot$: no pollution (both readings are within the norm).
\finis
\end{itemize}
\end{example}

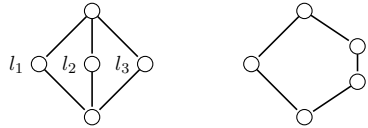
\begin{figure}[!h]
\centering
\begin{tikzpicture}[scale=0.7, transform shape]

  \def\xmv{2.0}
  \def\xmd{1.0}
  
  \node[state, scale=0.3] (s0) at (0.0 -\xmv, 0.0 ) {};
  \node[state, scale=0.3,label=180:{$l_1$}] (s1) at (-\xmd -\xmv, -\xmd) {};
  \node[state, scale=0.3,label=180:{$l_2$}] (s2) at (0.0 -\xmv, -\xmd) {};
  \node[state, scale=0.3,label=180:{$l_3$}] (s3) at (\xmd -\xmv, -\xmd) {};
  \node[state, scale=0.3] (s4) at (0.0 -\xmv, -2*\xmd ) {};
  
  \path [-,style=solid,shorten >=1pt, auto, semithick]
  (s0) edge node {} (s1)
  (s0) edge node {} (s2)
  (s0) edge node {} (s3)
  (s1) edge node {} (s4)
  (s2) edge node {} (s4)
  (s3) edge node {} (s4)
  ;
  
  \node[state, scale=0.3] (n0) at (0.0 +\xmv, 0.0) {};
  \node[state, scale=0.3] (n1) at (-\xmd +\xmv, -\xmd ) {};
  \node[state, scale=0.3] (n2) at (\xmd +\xmv, -2/3 * \xmd ) {};
  \node[state, scale=0.3] (n3) at (\xmd +\xmv, -4/3* \xmd ) {};
  \node[state, scale=0.3] (n4) at (0.0 +\xmv, -2*\xmd ) {};
  \path [-,style=solid,shorten >=1pt, auto, semithick]
  
  (n0) edge node {} (n1)
  (n0) edge node {} (n2)
  (n1) edge node {} (n4)
  (n2) edge node {} (n3)
  (n3) edge node {} (n4)
  ;  
\end{tikzpicture}
\caption{Non-distributive lattices M5 and N5}\label{fig:nondistr}
\end{figure}

Not every lattice is distributive, as shown in Figure~4.
However, distributive lattices have a very simple characterization: a lattice is distributive iff
it contains neither M5 nor N5 as a sublattice~\cite{Birkhoff48lattice}.

\begin{remark}[Quasi-boolean lattices and De Morgan algebras]
The operations of join and meet are natural semantic counterparts of disjunction and conjunction.
Some multi-valued approaches  also add the \emph{complement} operation $\comp$ to the lattice, as the semantic counterpart
of multi-valued negation.
A lattice with complement is usually called \emph{quasi-Boolean}, and when distributive it is referred to as a \emph{De Morgan algebra}.
The most popular case is the lattice underlying 3-valued Kleene logic, used e.g. in \cite{Belardinelli17abstraction, BelardinelliLM19}.
However, the choice of a generic complement to suit any lattice  is problematic from the conceptual point of view.
For instance, what should be the ``opposite'' value to $\incons$ in the lattice ${\bf 2 + 2 \times 2}$? In other words,
if statement $\varphi$ is assessed as ``inconsistent'', what should be the evaluation of ``not $\varphi$?''.
There is no uniform answer to that question,
so in  a general approach paper, like ours, a good strategy is to avoid the negation as much as possible.
This is why instead of negation we have chosen to use here another non-monotonic connective: two-valued implication
representing the lattice order, which is useful in all practical cases  when we want to compare the truth values  of two formulas.
\end{remark}

%-------------------------------------------------------------------------------
\subsection{Join-irreducible elements}
%-------------------------------------------------------------------------------

\begin{definition}
Let $\L = (\LL,\leq)$ be a lattice.
An element $\ell \in \LL$ is called {\em join-irreducible}
iff $\ell \neq \bot$ and,  for any $x,y \in \LL$,
if $\ell = x \join y$, then either $\ell = x$ or $\ell = y$.
The set of all join-irreducible elements of $\L$
is denoted by $\JI(\L)$.
\end{definition}

It is well known \cite{DP90} that every element $x\neq \bot$ of a finite distributive lattice can
be uniquely decomposed into the join of all join-irreducible elements
in its downward closure, i.e.
\begin{equation}
\label{decomp}
x = \bigjoin (\JI(\L)\; \cap \downarrow x)
\end{equation}

\begin{example}
The join-irreducible elements of the distributive lattices in Figure~\ref{fig:jir}
are marked with black dots.
All other ones can be decomposed into the join of some join-irredu\-cible elements.
\finis
\end{example}

We will use the characterization (\ref{decomp}) to define translations from multi-valued to standard model checking through the following theorem.

\begin{theorem}[\cite{kp06}]
\label{fonJIR}
Let ${\cal L}$ be a finite distributive lattice, and let $\ell \in \JI(\L)$.
Then the \emph{threshold function}\\ $f_\ell: \LL \longrightarrow \{\bot,\top\}$, defined by
$$
f_\ell(x) = \left\{\begin{array}{cl}
           \top & \text{if }x\ge \ell \\
           \bot & \text{otherwise}
           \end{array}\right.
$$
preserves arbitrary bounds, i.e., for an arbitrary set of indices $I$, we have:
\begin{equation}
f_\ell(\bigmeet_{i \in I} x_i) = \bigmeet_{i \in I} f_\ell(x_i)
 \hspace{3em} and \hspace{3em}
f_\ell(\bigjoin_{i \in I} x_i) = \bigjoin_{i \in I} f_\ell(x_i).
\end{equation}
\end{theorem}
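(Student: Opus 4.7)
The plan is to prove each of the two equations by a case split on the value of the left-hand side, and to use distributivity together with join-irreducibility of $\ell$ at the critical point. Since $\LL$ is finite, we may freely assume without loss of generality that the index set $I$ is finite (the image $\{x_i \mid i \in I\}$ is finite, and arbitrary bounds reduce to finite ones).

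For the join equation, the easy direction is the implication that if $x_i \geq \ell$ for some $i$, then both sides equal $\top$: on the left, $\bigjoin_{i \in I} x_i \geq x_i \geq \ell$, so $f_\ell(\bigjoin_i x_i) = \top$; on the right, $f_\ell(x_i) = \top$ forces $\bigjoin_i f_\ell(x_i) = \top$. The substantive direction is the converse: if $f_\ell(\bigjoin_i x_i) = \top$, meaning $\ell \leq \bigjoin_i x_i$, we need $\ell \leq x_j$ for some $j$. Here I would use the trick
\[
\ell \;=\; \ell \meet \bigjoin_{i \in I} x_i \;=\; \bigjoin_{i \in I}(\ell \meet x_i),
\]
where the second equality is distributivity. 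Since $\ell$ is join-irreducible, one of the summands must already equal $\ell$, i.e.\ $\ell = \ell \meet x_j$ for some $j$, which means $\ell \leq x_j$, as required. This gives the join preservation.

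For the meet equation, no join-irreducibility is needed: it follows from monotonicity of $f_\ell$ and the universal property of the meet. If $x_i \geq \ell$ for every $i \in I$, then $\ell$ is a lower bound of $\{x_i\}$, whence $\bigmeet_i x_i \geq \ell$, so both sides equal $\top$. If on the other hand $x_j \not\geq \ell$ for some $j$, then $f_\ell(x_j) = \bot$ forces the right-hand side to be $\bot$, while $\bigmeet_i x_i \leq x_j$ combined with $x_j \not\geq \ell$ prevents $\bigmeet_i x_i \geq \ell$ (otherwise $x_j \geq \bigmeet_i x_i \geq \ell$, contradiction), so the left-hand side is also $\bot$.

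The main obstacle, as expected, is the reverse direction of the join case: showing that $\ell \leq \bigjoin_i x_i$ implies $\ell \leq x_j$ for some $j$. This is the characteristic ``primeness'' property enjoyed by join-irreducible elements in \emph{distributive} lattices; it would fail in a non-distributive setting such as M5 or N5 shown in Figure~\ref{fig:nondistr}, which is precisely why the assumption of distributivity on $\mathcal{L}$ is essential here. Everything else is a straightforward consequence of the definitions and the monotonicity of $f_\ell$.
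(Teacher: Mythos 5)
Your proof is correct. The paper gives no proof of this theorem at all (it is imported from~\cite{kp06}), so there is nothing in the text to compare against; your argument is the standard one --- the meet half by monotonicity alone, and the join half by reducing arbitrary bounds to finite ones (legitimate since $\LL$ is finite) and then using $\ell = \ell \meet \bigjoin_i x_i = \bigjoin_i(\ell \meet x_i)$ together with join-irreducibility to conclude that $\ell$ is join-prime, exactly the point where distributivity is needed and which fails for M5, consistent with the paper's Remark following the theorem.
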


\begin{remark}
The above does not hold for lattices which are not distributive.
To see this, consider the element $\ell_1$ of lattice {\bf M5}, which is join-irreducible.
However, $f_{\ell_1}$ does not preserve the upper bounds, as
$f_{\ell_1}(\ell_2 \join \ell_3) = f_{\ell_1}(\top) = \top$ whereas
$f_{\ell_1}(\ell_2) \join f_{\ell_1}(\ell_3) = \bot \join \bot = \bot$.
\end{remark}

%%%%%%%%%%%%%%%%%%%%%%%%%%%%%%%%%%%%%%%%%%%%%%%%%%%%%%%%%%%%%%%%%%%%
\section{Multi-valued strategic logic \ATLSi}\label{sec:multivalued}
%-------------------------------------------------------------------------------

In this section we extend the syntax and semantics of \ATLs to allow for multi-valued reasoning.
That is, we propose a variant of \ATLs where formulas are interpreted in an arbitrary lattice $\L = (\LL, \leq)$.

\medskip
It is sometimes useful to refer to logical values in the object language.
To enable this, we assume that a natural interpretation of suitable constants is given, in the following way.

\begin{definition}[Interpreted lattice]
Let $\LConst$ be a countable set of symbols.
An interpreted lattice over $\LConst$ is a triple $\L^+ = (\LL, \leq, \sigma)$, where \mbox{$(\LL, \leq)$} is a lattice,
and $\sigma: \LConst\rightarrow \LL$ is an interpretation of the symbols in $\LConst$ as truth values in $\LL$.
\end{definition}

To explicitly show the connection between the named truth values and their names in $\LConst$,
for any interpreted lattice $\L^+= (\LL, \leq, \sigma)$
and any truth value $x\in \sigma(\LConst)$, we will use the notation $\const{x}$
to denote an arbitrarily selected symbol $c\in \LConst$ such that $\sigma(c)=x$.
We do not make any specific assumptions about $\sigma$. In particular, we do not assume that $\sigma$ must be surjective, as in many situations only some truth values in $\LL$ need to be referred to in formulas. However, in all the examples that follow in this paper, $\sigma$ actually \emph{is} a bijection, since every truth value used there has a specific purpose. Thus, for all our examples, it holds that $\LConst = \{\const{l}\: |\: l\in\LL\}$, and $\sigma(\const{x}) = x$ for all $x\in \sigma(\LConst)$.

%-------------------------------------------------------------------------------
\subsection{Syntax}
%-------------------------------------------------------------------------------

Since, as explained in Remark 3.4, multi-valued negation can be problematic from the conceptual viewpoint,
we will use instead the binary implication operator $\ra$ corresponding to the lattice order.
Our implication operator is similar to the well-known implication of many-valued Goedel-Dummet logic,
and in general to relevant implication or residuum of lattice meet, of which the former is just a special case.
However, our implication is a two-valued operator, which makes it better suited for proof system purposes.
As it can be used for comparing truth values of formulas, it is of obvious practical importance in many applications,
where we are mainly interested in ascertaining  whether the truth value of a given formula $\varphi$ is greater
than the logical value of some other formula $\psi$ (see Example~\ref{ex:drones-mv} below for more explanations and illustration).
Moreover, in case of two-valued logic, classical negation can be expressed using $\ra$ and the constant representing $\bot.$

To increase the expressive power of the language, we also allow for the use of symbols in $\LConst$.

\medskip
The resulting logic is called \ATLSi\ and has the following syntax:

\begin{center}
$\varphi::= c \mid \prop{p} \mid \varphi\land\varphi \mid \varphi\lor\varphi \mid \varphi\ra\varphi
  \mid \coop{A}\gamma \mid \noavoid{A}\gamma$, \\
$\gamma::=\varphi \mid \gamma\land\gamma \mid \gamma\vee\gamma \mid\Next\gamma \mid \gamma\Until\gamma \mid \gamma\WeakUntil\gamma$.
\end{center}
where $\prop{p}\in \AP$, $A\subseteq\Agt$, and $c\in\LConst$, with $\AP$ being a countable set of atomic propositions,
and $\LConst$ a countable set of constants.

\medskip
In what follows, by an {\em implication formula} we  mean any formula of the form $\varphi_1 \ra\varphi_2$.
Additionally, we define an equivalence formula as $\varphi_1\equtil\varphi_2 = (\varphi_1\ra\varphi_2)\land (\varphi_2\ra\varphi_1)$.

The sublogic of \ATLSi{} without the implication operator will be denoted by \mbox{mv-\ATLs}.

%-------------------------------------------------------------------------------
\subsection{Semantics}\label{sec:mv-semantics}
%-------------------------------------------------------------------------------

The semantics of \ATLSi\ is defined over concurrent game structures with multi-valued interpretation of atomic propositions.

\begin{definition}[Multi-valued CGS]
Let
$\L^+ = (\LL, \leq, \sigma)$ be an interpreted lattice.
A \emph{multi-valued concurrent game structure (\mvmodel)} over $\L^+$
is a tuple $M = \tuple{\Agt, \States, Act, d, t, \AP, \PVal,\L^+}$,
where $\Agt$, $\States$, $Act$, $d$, $\trans$, $\AP$ are as before, and
$\PVal: \AP \times \States \rightarrow \LL$\ assigns at any state all atomic propositions with truth values from the logical domain $\LL$.
\end{definition}

\begin{example}[Drones ctd.]\label{ex:drones-mv}
A multi-valued model of the drone scenario is presented in Figure~\ref{fig:drones-mv}.
To evaluate atomic propositions and their negations, we use the lattice $\dlattice$ introduced in Section~\ref{sec:lattices-intro}.
Each proposition $\pol{i},\ i=1,\dots,k$, refers to the level of pollution from the viewpoint of drone $i$,
that is, given by the available measurements at the current location of the drone.
Whenever a proposition evaluates to $\bot$, we omit that valuation  from the picture.
\finis
\end{example}

\begin{figure}[!h]
\vspace*{-4mm}
\begin{center}
\hspace{-0.5cm}
\begin{tikzpicture}[>=latex,scale=1.0, every node/.style={scale=0.8}]
  \tikzstyle{state}=[circle,draw,trans, minimum size=8mm]
  \tikzstyle{initstate}=[circle,draw,trans, minimum size=8mm, fill=lightgrey]
  \tikzstyle{trans}=[font=\footnotesize]

  \path (0,0) node[initstate] (q00) {$(0,0)$}
                +(0.8,0.5) node {$[\prop{pol_2}]=\undec$}
                +(0.8,0.85) node {$[\prop{pol_1}]=\undec$}		  								  								  								
    (0,2) node[state] (q11) {$(1,1)$}
                +(0,1.1) node {$[\prop{pol_1}]=\top$}
                +(0,0.7) node {$[\prop{pol_2}]=\top$}		  								  								  								
    (2,0) node[state] (q22) {$(2,2)$}		  							  						
    (2,2) node[state] (q33) {$(3,3)_1$}
                +(1.2,0.1) node {$[\prop{pol_1}]=\top_d$}
                +(1.2,-0.3) node {$[\prop{pol_2}]=\top_d$}	  								
                +(1.2,-0.7) node {$[\prop{target}]=\top$}					
    (-2,-2) node[state] (q33v) {$(3,3)_2$}
                +(0,-0.7) node {$[\prop{pol_1}]=\top_d$}
                +(0,-1.1) node {$[\prop{pol_2}]=\top_d$} 			  												
                +(0,-1.5) node {$[\prop{allvisited}]=\top$}					
                +(0,-1.9) node {$[\prop{target}]=\top$}					
    (0,-2) node[state] (q12) {$(1,2)$}
                +(1.1,0.1) node {$[\prop{pol_1}]=\top $}
    (-2,0) node[state] (q21) {$(2,1)$}
                +(0,0.7) node {$[\prop{pol_2}]=\top$}					  	
      ;

  % TRANSITIONS
 \path[->,font=\scriptsize] (q00)
       edge
         node[midway,sloped]{\onlabel{(N,N)}} (q11)
       edge
         node[midway,sloped]{\onlabel{(E,E)}} (q22)
       edge
         node[midway,sloped]{\onlabel{(N,E)}} (q12)
       edge
         node[midway,sloped]{\onlabel{(E,N)}} (q21);
 \path[->,font=\scriptsize] (q11)
       edge
         node[midway,sloped]{\onlabel{(E,E)}} (q33);
 \path[->,font=\scriptsize] (q22)
       edge
         node[midway,sloped]{\onlabel{(N,N)}} (q33);
 \path[->,font=\scriptsize] (q12)
       edge
         node[midway,sloped]{\onlabel{(E,N)}} (q33v);
 \path[->,font=\scriptsize] (q21)
       edge
         node[midway,sloped]{\onlabel{(N,E)}} (q33v);

 % LOOPS
 \draw[-latex,black](q33) ..controls +(1.2,0.6) and +(0,1.4).. (q33);
 \draw[-latex,black](q33v) ..controls +(-1.2,0.6) and +(-1.2,-0.6).. (q33v);

\end{tikzpicture}
\end{center}\vspace*{-6.5mm}
\caption{Multi-valued model $\Mmulti$ for the drone scenario.}
\label{fig:drones-mv}
\end{figure}
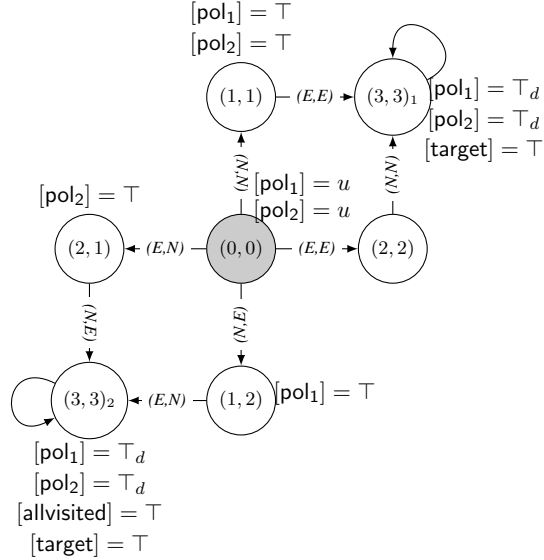

Logical operators can often be naturally interpreted as either maximizers or minimizers
of the truth values. For example, disjunction ($\varphi\lor\psi$) can be understood as a maximizer
(``the most that we can hope to make of either $\varphi$ or $\psi$''), and conjunction as a minimizer
(``the least that we can guarantee for both $\varphi$ and $\psi$'').
This extends to existential quantification (maximizing) and universal quantification
(minimizing) over paths, strategies, moments in time, etc.
Formally, let $M = \tuple{\Agt, \States, Act, d, t, \AP, \PVal,\L^+}$ be an \mvmodel over $\L^+ = (\LL, \leq, \sigma)$.
The valuation function $\truthvalue{\cdot}{}$ is given as below.
We sometimes use $\bigmeet_X\set{Y}$ as a shorthand for $\bigmeet\set{Y \mid X}$,
and similarly for the supremum.
For any $q \in \States$ and any path $\lambda$ in $M$, we define:

\begin{description}
\itemsep=0.9pt
\item $\truthvalue{c}{M,q} = \sigma(c)$\quad for $c\in \LConst$;
\item $\truthvalue{\prop{p}}{M,q} = \PVal(\prop{p},q)$
  \quad for $\prop{p}\in \AP$;

\smallskip
\item $\truthvalue{\varphi_1\land\varphi_2}{M,q} =
  \truthvalue{\varphi_1}{M,q} \meet \truthvalue{\varphi_2}{M,q}$;
\item $\truthvalue{\varphi_1\lor\varphi_2}{M,q} =
  \truthvalue{\varphi_1}{M,q} \join \truthvalue{\varphi_2}{M,q}$;
\item $\truthvalue{\gamma_1\land\gamma_2}{M,\lambda} = \truthvalue{\gamma_1}{M,\lambda} \meet \truthvalue{\gamma_2}{M,\lambda}$\quad
  and\quad $\truthvalue{\gamma_1\lor\gamma_2}{M,\lambda} = \truthvalue{\gamma_1}{M,\lambda} \join \truthvalue{\gamma_2}{M,\lambda}$;

\smallskip
\item $\truthvalue{\varphi}{M,\lambda} =
  \truthvalue{\varphi}{M,\lambda[0]}$;

\smallskip
\item $\truthvalue{\Next\gamma}{M,\lambda} =
  \truthvalue{\gamma}{M,\lambda[1..\infty]}$;

\item $\truthvalue{\gamma_1\Until\gamma_2}{M,\lambda} =
  \bigjoin_{i \in \mathbb{N}_0}\bigmeet_{0\le j<i} \set{\truthvalue{\gamma_2}{M,\lambda[i..\infty]}\meet \truthvalue{\gamma_1}{M,\lambda[j..\infty]}}$;

\item $\truthvalue{\gamma_1\WeakUntil\gamma_2}{M,\lambda} =
  \bigmeet_{i \in \mathbb{N}_0} \set{\truthvalue{\gamma_1}{M,\lambda[i..\infty]}} \join\
  \bigjoin_{i\in \mathbb{N}_0}\bigmeet_{0\le j<i} \set{\truthvalue{\gamma_2}{M,\lambda[i..\infty]}\meet \truthvalue{\gamma_1}{M,\lambda[j..\infty]}}$;

\medskip
\item $\truthvalue{\coop{A}\gamma}{M,q} =
  \bigjoin_{s_A\in\Sigma_A}\bigmeet_{\lambda\in out(q,s_A)}\set{\truthvalue{\gamma}{M,\lambda}}$;
\item $\truthvalue{\noavoid{A}\gamma}{M,q} =
  \bigmeet_{s_A\in\Sigma_A}\bigjoin_{\lambda\in out(q,s_A)}\set{\truthvalue{\gamma}{M,\lambda}}$;

\smallskip
\item $\truthvalue{\varphi_1\ra\varphi_2}{M,q} =
  \trueval$ if $\truthvalue{\varphi_1}{M,q}\le\truthvalue{\varphi_2}{M,q}$
  and $\falseval$ otherwise.
\end{description}
It is worth noting that our implication operator differs from the well-known residue of lattice meet in being
two-valued --- which makes it better suited for use in any proof system, and more intuitive in specification of many requirements.

\medskip
The semantics of the two ``until'' operators demands a more detailed explanation.
The computation of $\truthvalue{\gamma_1\Until\gamma_2}{M,\lambda}$ seeks to achieve a position $i$ on path $\lambda$,
for which the value of $\gamma_2$ at $\lambda[i]$, and the values of $\gamma_1$ at all the points preceding $\lambda[i]$, are guaranteed maximal.
The semantics of $\gamma_1\WeakUntil\gamma_2$ is based on the well-known unfolding $\gamma_1\WeakUntil\gamma_2 \equiv (\Always\gamma_1) \lor (\gamma_1\Until\gamma_2)$, transformed here to a multi-valued interpretation.
Note also that in case of the derived temporal operators ``sometime'' and ``always'' the semantic rules reduce to:
\begin{description}
\itemsep=0.9pt
\item $\truthvalue{\Sometm\gamma}{M,\lambda} =
  \bigjoin_{i\in\Nat}\truthvalue{\gamma}{M,\lambda[i..\infty]}$;

\smallskip
\item $\truthvalue{\Always\gamma}{M,\lambda} =
  \bigmeet_{i\in\Nat}\truthvalue{\gamma}{M,\lambda[i..\infty]}$.
\end{description}

Thus, for instance, the formula $\coop{A}\Sometm\prop{pol}$ can be read as:
``the maximal level of pollution readings that $A$ can guarantee to reach.''
Clearly, such statements do not always submit to intuitive understanding, in particular when nested strategic operators are used.
Because of that, we will stick to simple formulas in our working examples, that is, ones that are relatively easy to read.

\begin{example}[Drones ctd.]\label{ex:drones-mvatl}
For the model in Figure~\ref{fig:drones-mv}, we have
$\truthvalue{\coop{1}\Sometm\pol{1}}{\Mmulti,(0,0)} = \top$: there is a strategy for drone 1
to surely detect pollution (the strategy being to fly North in state $(0,0)$, and then East in $(1,1)$ or $(1,2)$).
Similarly for the other drone we have $\truthvalue{\coop{2}\Sometm\pol{2}}{\Mmulti,(0,0)} = \top$ (the same strategy, but now executed by drone 2).
On the other hand, $\truthvalue{\coop{1}\Always\pol{1}}{\Mmulti,(0,0)} = \undec$: the maximal \emph{guaranteed}
level of detection throughout the mission is $\undec$ (obtained by the same strategy again).
This means that if drone 1 wants to maximize its detection level, the best it can achieve is to keep
it consistently at the level of ``uncertain'' or higher.
Finally, \\
\centerline{$\truthvalue{\coop{1,2}\Sometm(\prop{target} \land \prop{allvisited} \land (\pol{1}\lor\pol{2}))}{\Mmulti,(0,0)} = \top_d$.}\\
That is, if the drones cooperate, and their goal is to reach the target, visit all the locations on the way, and at the end get the pollution detected by at least one of them, then their degree of success is $\top_d$ (pollution indicated by the drone sensor but not by the ground sensor).
\finis
\end{example}

The logical constants we have introduced are especially useful in implication formulas, as the subsequent example demonstrates.

\begin{example}[Implication formulas]\label{ex:comparing}
The ``implication''  operator provides several interesting specification patterns.
For instance, it allows for specifications that are accepted when the ``strength''
of a property reaches a given threshold, similarly to the probabilistic approaches of~\cite{Kwiatkowska02prism,Chen13prismgames}.
As an example, the formula $\const{\undec} \ra \coop{1}\Always\pol{1}$ can be used to specify that the truth value of $\coop{1}\Always\pol{1}$ is at least $\undec$ (intuitively: there is no evidence that the formula is false).
It is easy to see that the formula is true in the model of Figure~\ref{fig:drones-mv}; formally:
$\truthvalue{\const{\undec} \ra \coop{1}\Always\pol{1}}{\Mmulti,(0,0)} = \top$.
Naturally, any stronger requirement on the value of $\coop{1}\Always\pol{1}$ evaluates to ``false,'' e.g.,
$\truthvalue{\const{\top} \ra \coop{1}\Always\pol{1}}{\Mmulti,(0,0)} = \bot$.

Moreover, the formula $\coop{1}\Sometm\pol{1} \ra \coop{2}\Sometm\pol{2}$ says that the ability of drone 2 to spot pollution is at least as good as that of drone 2 (the formula evaluates to $\top$ in $\Mmulti,(0,0)$).
Finally, $\coop{1}\Sometm(\pol{1}\equtil\const{\top_g})$ says that the first drone has a strategy to ensure that it will reach a location where only the ground sensor indicates pollution.
Clearly, the last formula evaluates to $\bot$ in $\Mmulti,(0,0)$.
\finis
\end{example}

We note that most approaches to general multi-valued model checking of temporal
specifications~\cite{kp02a,gcconcur03,kp06,ShohamG12} allow also for \emph{multi-valued transitions} in the models,
analogous to probabilistic transitions in Markov chains and Markov Decision Processes. That is, transitions can be
assigned ``weights'' drawn from the same algebra $\L$.
Similarly, most 3-valued approaches to temporal abstraction and model checking implicitly assume 3-valued transitions
by distinguishing between \emph{may} and \emph{must}
transitions~\cite{Godefroid02abstraction,gj03,hjs03,hp03-proc}.
However, the two approaches differ in how such transitions affect the semantics of formulas with universal
quantification (such as ``for all paths $\gamma$''). In the general multi-valued approach,
the ``weaker'' the path is, the more it decreases the value of the formula.
In the 3-valued approach, ``weaker'' paths have less influence on the overall value.
We do not engage in this discussion here, and leave a proper treatment of multi-valued transitions until Section~\ref{sec:mvtrans}.

%-------------------------------------------------------------------------------
\subsection{Truth Levels}
%-------------------------------------------------------------------------------

We assume that $\top$ is a single designated value, standing for full logical truth.
In consequence, the truth and validity of formulas can be defined in a straightforward
way as follows:

\begin{definition}[Validity levels]
\label{def:validity}

Let $M$ be \mvmodel, $q$ a state in $M$, and $\varphi$ a state formula of \ATLSi.
Then:
\begin{itemize}
\item $\varphi$ is \emph{true
  in $M,q$} (written $M,q\models\varphi$) iff $\truthvalue{\varphi}{M,q} = \top$.
\item $\varphi$ is \emph{valid in $M$} (written $M\models\varphi$) iff
  $\varphi$ is true in every state of $M$.
\item $\varphi$ is \emph{valid} (written $\models\varphi$)
  iff $\varphi$ is valid in every \mvmodel\ $M$.
\item Additionally, for a path formula $\gamma$, we can say that
  $\gamma$ holds on run $\lambda$ in a \mvmodel\ $M$
  (written $M,\lambda\models\gamma$) iff $\truthvalue{\gamma}{M,\lambda} = \top$.
\end{itemize}
\end{definition}

We now show that \ATLSi{} agrees with standard \ATLs on 2-valued models, unlike the 3-valued version
of \ATLs from~\cite{LomuscioM15}.

\begin{theorem}
\label{conservative}
The logic \ATLSi{} is a conservative extension of
\ATLs, i.e., every CGS $M$ for \ATLs can be identified with an mv-CGS $M'$ for \ATLSi{} over the lattice {\bf 2} such that,
for any \ATLs formula $\varphi$ and any state (path) $\xi$, we have
$M',\xi\models\varphi \mbox{ iff } M,\xi\models\varphi$.
\end{theorem}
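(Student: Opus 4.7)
My plan is to exhibit a direct lattice-theoretic reading of the classical semantics. Given a CGS $M = \tuple{\Agt, \States, Act, d, \trans, \AP, V}$, I would build the mv-CGS $M'$ over the two-element interpreted lattice $\L^+ = (\set{\bot, \top}, \le, \sigma)$ by keeping the arena $\Agt, \States, Act, d, \trans$ unchanged and defining the valuation so that $\PVal(\prop{p}, q) = \top$ if $\prop{p} \in V(q)$, and $\PVal(\prop{p}, q) = \bot$ otherwise. Since the object language of \ATLSi{} contains no primitive negation, I would first translate every occurrence of $\neg \varphi$ in a given \ATLs{} formula by $\varphi \ra \const{\bot}$, where $\const{\bot}$ is a symbol in $\LConst$ with $\sigma(\const{\bot}) = \bot$; by the semantic clause for $\ra$, this formula evaluates to $\top$ exactly when $\truthvalue{\varphi}{M',\xi} = \bot$, which is precisely the behaviour of classical negation in the two-valued setting.

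The main argument is then a routine mutual induction on the structure of state formulas $\varphi$ and path formulas $\gamma$. The base case for atomic propositions is immediate from the definition of $\PVal$. For $\land$ the inductive step is automatic, because on the lattice $\set{\bot, \top}$ the meet $\meet$ collapses to classical conjunction, and the translated negation case is handled by the observation above. For $\coop{A}\gamma$ the semantic clause yields $\truthvalue{\coop{A}\gamma}{M',q} = \bigjoin_{s_A} \bigmeet_{\lambda \in out(q,s_A)} \truthvalue{\gamma}{M',\lambda}$, and since every value in the expression lies in $\set{\bot, \top}$, this equals $\top$ iff there exists $s_A$ such that $\truthvalue{\gamma}{M',\lambda} = \top$ for every $\lambda \in out(q,s_A)$, which the induction hypothesis transforms into the classical definition of $M, q \models \coop{A}\gamma$.

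The path-formula cases go through in the same spirit. The shift from state to path valuation and the $\Next$ clause are immediate. For $\gamma_1 \Until \gamma_2$, the multi-valued clause $\bigjoin_i \bigmeet_{0 \le j < i} \set{\truthvalue{\gamma_2}{M',\lambda[i,\infty]} \meet \truthvalue{\gamma_1}{M',\lambda[j,\infty]}}$ evaluates over $\set{\bot, \top}$ to $\top$ iff some index $i$ exists at which $\gamma_2$ has value $\top$ and $\gamma_1$ has value $\top$ at every strictly earlier suffix, matching the classical until clause exactly. The derived operators $\Always$, $\Sometm$, and $\WeakUntil$ are handled either through their collapsed semantic clauses given in the excerpt, or by unfolding them into instances of cases already established.

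The argument is entirely mechanical; the only care needed is to pin down the syntactic treatment of classical negation before starting the induction, and to verify that each join/meet expression over $\set{\bot, \top}$ faithfully mirrors the corresponding existential/universal quantification in the classical clauses. I do not expect any genuine obstacle beyond this bookkeeping, since Theorem \ref{fonJIR} is not even needed here: on the two-element lattice, the only join-irreducible element is $\top$ itself and its threshold function is the identity, so the multi-valued semantics trivially degenerates to the classical one.
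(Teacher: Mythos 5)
Your construction of $M'$ is exactly the one in the paper (the paper's proof consists of this construction followed by the remark that ``an easy check shows'' the equivalence, so your induction is precisely the intended easy check), and the verification is correct. One small point of bookkeeping that you, like the paper, pass over quickly: your translation $\neg\varphi \mapsto \varphi\ra\const{\bot}$ is only available for \emph{state} formulas, since $\ra$ does not occur in the path-formula grammar of \ATLSi; for an \ATLs path formula such as $\neg\Next\prop{p}$ one must first push negations inward to the atoms (using the dualities behind $\lor$, $\WeakUntil$ and $\noavoid{A}$, all of which are primitive in \ATLSi for exactly this reason) before applying your clause at the atomic level.
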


\begin{proof}
For any CGS $M = \tuple{\Agt, \States,  Act, d, \trans, \AP, \PVal}$ for \ATLs,
let $M' = \tuple{\Agt, \States, Act, d, \trans, \AP, \PVal',{\bf 2}}$,
where: (i) ${\bf 2} = (\{\bot,\top\}, \leq, \sigma)$ is an interpreted classical lattice of two truth values over
${\cal C} = \{\const{\bot}, \const{\top}\}$ and $\sigma (\const{l}) = l$ for any $l\in\{\bot, \top\}$;
(ii) $\PVal'(p,q)= \top$ if $q \in \PVal(p)$ and $\bot$ otherwise.
Then $M'$ is an \mvmodel\ for \ATLSi,
and an easy check shows that, for any
\ATLs formula $\varphi$ and  any state (path) $\xi$, it indeed obtains $M',\xi\models\varphi$ iff $M,\xi\models\varphi$.
\end{proof}

%-------------------------------------------------------------------------------
\section{Model checking \ATLSi}
\label{sec:mcheck-mv}
%-------------------------------------------------------------------------------

Given an \mvmodel\ $M$, a state $q$ in $M$, and an \ATLSi formula $\varphi$,
the model checking problem consists in computing the value of $\truthvalue{\varphi}{M,q}$.
This can be done in two ways:
either by using a dedicated algorithm, or through an efficient reduction to the "classical", 2-valued version of model checking.
The latter option has many advantages.
First and foremost, it allows us to benefit from the ongoing developments in 2-valued model checking,
including symbolic model checking techniques, heuristics, model reduction techniques, etc.
In this section, we show how model checking of \ATLSi can be reduced to the 2-valued variant of this problem.
Since a basic result underlying such reduction holds for distributive lattices only, throughout the section
we assume that all lattices under consideration are distributive, unless stated to the contrary.

We emphasize again that, while multi-valued model checking typically provides a \emph{conceptual} approximation
of classical verification, the results in this section are about something else.
Here, we look for a \emph{technical} reduction from multi-valued to two-valued model checking, with the sole purpose
of facilitating the verification process.

%---------------------------------------------------------------------5----------
\subsection{From multi-valued model checking to classical model checking}\label{sec:mcheck-mv-translation}
%-------------------------------------------------------------------------------

It is well known that model checking multi-valued temporal logics can be reduced to classical,
2-valued model checking~\cite{kp02a,gcconcur03,BrunsG04,kp06}. The reduction is of one-to-many type,
i.e., a single instance of multi-valued model checking translates to linearly many instances of classical model checking.
The key result in this respect is~\cite[Theorem 1]{kp02a}.  It proposes a method for ``clustering'' the truth values
from lattice $\L$ into a smaller lattice $\L'$ in such a way that the outcome of model checking is preserved.
We will now show that the analogue of that theorem holds for \mvATL, i.e., the sublanguage of \ATLSi without the $\ra$ operator.

\begin{definition}
\begin{enumerate}
\item By a lattice reduction triple (LRT) we mean a triple $(\L, \L_f, f)$, where $\L = (\LL, \leq)$ is an arbitrary finite
lattice, $\L_f = (\LL_f, \leq_f)$ its sublattice, and $f: \LL \rightarrow \LL_f$  a homomorphism --- a mapping which preserves
arbitrary bounds in $\L$, i.e. such that
\begin{equation}
\label{bounds}
f(\bigmeet_{i \in I} x_i) = \bigmeet_{i \in I} f(x_i)
 \hspace{3em} and \hspace{3em}
f(\bigjoin_{i \in I} x_i) = \bigjoin_{i \in I} f(x_i)
\end{equation}
for an arbitrary set of indices $I$.
\item
Given an LRT $(\L, \L_f, f)$ and an \mvmodel{} $M = $ $\tuple{\Agt, \States, Act, d, \trans, \AP, \V, \L^+}$
over an interpreted lattice $\L^+ = (\LL, \leq, \sigma)$, by the {\em reduction of $M$ to $\L_f$ via $f$} we mean the \mvmodel
$f(M) = \tuple{\Agt, \States, Act, d, \trans, \AP, \V_f, (\LL_f, \leq_f, \sigma_f)}$,
where
\begin{enumerate}
\item $\sigma_f(c)= f(\sigma(c))$\  for any $c\in \LConst$, and
\item $\V_f(p,q) = f(\V(p,q))$\ for any $q \in \States$ and $p \in \AP$.
\end{enumerate}
\end{enumerate}
\end{definition}

\begin{definition}
For any LRT $(\L, \L_f, f)$ and any model $M$ over $\LL$, by the \em{translation condition} for LRT and formula $\vfi$ we mean the relationship
\begin{equation}
\label{thesis1}
\truthvalue{\varphi}{M,\xi} \in f^{-1}(x)\qquad \mbox{iff}\qquad \truthvalue{\varphi}{f(M),\xi} = x
\end{equation}
holding for any state (respectively, path) $\xi$.
\end{definition}
The proof of the theorem follows easily from the key result given below:
\begin{lemma}\label{prop:lem-gen}
Let a state or path formula $\vfi$ be such that
$$\truthvalue{\vfi}{M,\xi} = \bigjoin_{i\in I} \bigmeet_{j_i \in J_i}\truthvalue{\vfi_{j_i}}{M,\xi_{j_i}}\;\;\;
\mbox{or} \;\;\;\;
\truthvalue{\vfi}{M,\xi} = \bigmeet_{i\in I}\bigjoin_{j_i \in J_i} \truthvalue{\vfi_{j_i}}{M,\xi_{j_i}}
$$
for any \mvmodel\ $M$, any states and/or paths $\xi, \xi_{j_i}$ of $M$,
any countable sets $I,J_i$,
and state (resp. path) formulas of \mvATL $\vfi_{j_i}$ for $j_i \in J_i,i \in I$, such that all $\vfi_{j_i}$'s satisfy translation condition (\ref{thesis1}).
Then $\vfi$  satisfies the translation condition too.
\end{lemma}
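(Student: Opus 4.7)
The plan is to reduce the translation condition~(\ref{thesis1}) for $\vfi$ to the single functional equation $f(\truthvalue{\vfi}{M,\xi}) = \truthvalue{\vfi}{f(M),\xi}$, and then to establish that equation by pushing $f$ through the assumed join-meet (or meet-join) decomposition via the bound-preservation property~(\ref{bounds}) of $f$. The first observation I would make is that, for any $y\in\LL$ and any $x\in\LL_f$, the relation $y\in f^{-1}(x)$ is the same as $f(y)=x$. Instantiating this at $y = \truthvalue{\vfi}{M,\xi}$ shows that~(\ref{thesis1}) is just a rewording of $f(\truthvalue{\vfi}{M,\xi}) = \truthvalue{\vfi}{f(M),\xi}$, and the same rewording applied to each subformula $\vfi_{j_i}$ converts the hypothesis of the lemma into the concrete equalities $f(\truthvalue{\vfi_{j_i}}{M,\xi_{j_i}}) = \truthvalue{\vfi_{j_i}}{f(M),\xi_{j_i}}$ for all $i\in I$ and $j_i\in J_i$.

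For the join-meet case I would apply $f$ to both sides of the assumed decomposition and invoke~(\ref{bounds}) twice to commute $f$ through both $\bigjoin$ and $\bigmeet$, obtaining $f(\truthvalue{\vfi}{M,\xi}) = \bigjoin_{i\in I}\bigmeet_{j_i\in J_i}f(\truthvalue{\vfi_{j_i}}{M,\xi_{j_i}})$. Substituting the per-subformula equalities from the previous paragraph then yields $f(\truthvalue{\vfi}{M,\xi}) = \bigjoin_{i\in I}\bigmeet_{j_i\in J_i}\truthvalue{\vfi_{j_i}}{f(M),\xi_{j_i}}$. To finish, I would invoke the lemma's assumption once more---this time applied to the \mvmodel\ $f(M)$---to recognise the right-hand side as exactly $\truthvalue{\vfi}{f(M),\xi}$, and conclude. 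The meet-join case is handled by the dual argument using the other half of~(\ref{bounds}).

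The main subtlety, and where I would be most careful, is this last step: the assumption of the lemma must produce the \emph{same} index sets $I$, $J_i$ and the \emph{same} evaluation points $\xi_{j_i}$ on $f(M)$ as on $M$, so that the two expansions really match term by term. This is ensured by the construction of $f(M)$, which keeps $\Agt,\States,Act,d,\trans,\AP$ unchanged and only replaces $V$ by $f\circ V$: in particular, strategies, outcome sets $out(q,s_A)$, and time indices are literally the same objects in $M$ and in $f(M)$. The clauses of the \mvATL\ semantics---for $\land$, $\lor$, $\Next$, $\Until$, $\WeakUntil$, $\coop{A}$ and $\noavoid{A}$---all take precisely the shape required by the lemma, with index sets that are model-structural rather than value-structural, which is what will make the lemma serve cleanly as the induction step for the full translation theorem.
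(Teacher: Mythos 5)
Your proof is correct and follows essentially the same route as the paper's: apply $f$ to the assumed decomposition, commute it through the bounds via~(\ref{bounds}), substitute the subformula instances of~(\ref{thesis1}), and recognise the result as the value in $f(M)$. Your explicit remark that the index sets and evaluation points coincide in $M$ and $f(M)$ because the reduction only changes the valuation is a point the paper leaves implicit, but it is the same argument.
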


\begin{proof}
We consider the case
$\truthvalue{\vfi}{M,\s} = \bigjoin_{i\in I} \bigmeet_{j_i \in J_i}\truthvalue{\vfi_{j_i}}{M,\s_{j_i}}$;
the other case follows by symmetry.
As $f$ preserves the bounds, by the assumption on $\vfi$ we have
$f(\truthvalue{\vfi}{M,\s})  = \bigjoin_{i\in I} \bigmeet_{j_i \in J_i}f(\truthvalue{\vfi_{j_i}}{M,\s_{j_i}}\!)$.
Each $\vfi_{j_i}$ satisfies (\ref{thesis1}), so
$f(\truthvalue{\vfi}{M,\s}) = \bigjoin_{i\in I} \bigmeet_{j_i \in J_i}\truthvalue{\vfi_{j_i}}{M_f,\s_{j_i}} =
\truthvalue{\vfi}{M_f,\s}$,
whence
$\truthvalue{\vfi}{M_f,\s} = x$ iff
$\truthvalue{\vfi}{M,\s} \in f^{-1}(x)$, and (\ref{thesis1}) holds for $\vfi$.
\end{proof}
Now, we can formulate the reduction theorem.
\begin{theorem}[Reduction theorem]
\label{general}
Let $\L = (\LL, \leq)$ be an arbitrary finite lattice, and $(\L, \L_f, f)$ an LRT.
Further, let $M$ be an \mvmodel over an interpreted lattice $\L^+ = (\LL, \leq, \sigma)$
with $M=\tuple{\Agt, \States, Act, d, \trans, \AP, \V, \L^+}$ ,
and let $f(M) = \tuple{\Agt, \States, Act, d, \trans, \AP, \V_f, (\LL_f, \leq_f, \sigma_f)}$ be the image of $M$ under $f$.
Then, for any state (respectively, path) formula $\varphi$ of \mvATL over $\L$ and any state (respectively, path) $\xi$,
the following  condition is satisfied:
\begin{equation}
\label{thesis}
\truthvalue{\varphi}{M,\xi} \in f^{-1}(x)\qquad \mbox{iff}\qquad \truthvalue{\varphi}{f(M),\xi} = x
\end{equation}
\end{theorem}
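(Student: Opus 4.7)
The plan is to establish (\ref{thesis}) by simultaneous structural induction on state and path formulas $\varphi$ of \mvATL, with Lemma~\ref{prop:lem-gen} serving as the uniform engine for every inductive case. In each step it suffices to exhibit the semantic clause for the current operator as an expression of the form $\bigjoin_{i\in I}\bigmeet_{j_i\in J_i}\truthvalue{\varphi_{j_i}}{M,\xi_{j_i}}$ or $\bigmeet_{i\in I}\bigjoin_{j_i\in J_i}\truthvalue{\varphi_{j_i}}{M,\xi_{j_i}}$, whose constituents are proper subformulas already covered by the induction hypothesis; the lemma then delivers the translation condition for $\varphi$.

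The base cases are direct. For a constant $c\in\LConst$, we have $\truthvalue{c}{M,q}=\sigma(c)$ and $\truthvalue{c}{f(M),q}=\sigma_f(c)=f(\sigma(c))$, so $\truthvalue{c}{M,q}\in f^{-1}(x)$ iff $\truthvalue{c}{f(M),q}=x$. The atomic proposition case is completely analogous, using $\V_f(p,q)=f(\V(p,q))$. The pass from state to path, $\truthvalue{\varphi}{M,\lambda}=\truthvalue{\varphi}{M,\lambda[0]}$, is immediate from the induction hypothesis on $\varphi$.

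In the inductive step, $\varphi_1\land\varphi_2$ and $\varphi_1\lor\varphi_2$ fit the required pattern trivially, with singleton outer index set and a two-element inner index set. The $\Next\gamma$ clause is a one-element meet (or join) of $\truthvalue{\gamma}{M,\lambda[1..\infty]}$. The clause for $\gamma_1\Until\gamma_2$ is already given in the $\bigjoin_i\bigmeet_{j<i}$ shape; the inner term is itself the meet of two subformula evaluations, and absorbing the $\gamma_2$-contribution as an extra element of the inner index set produces the exact form required by Lemma~\ref{prop:lem-gen}. The strategic operators $\coop{A}\gamma$ and $\noavoid{A}\gamma$ are tailor-made for the two cases of the lemma, with the outer index set ranging over $\Sigma_A$ and the inner one over $out(q,s_A)$.

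The only step that calls for a small extra observation is $\gamma_1\WeakUntil\gamma_2$, whose given semantics is the join of a meet (the ``always'' part) and a $\bigjoin\bigmeet$ (the ``until'' part) and therefore does not literally match either template. Here I would remark that this value is precisely $\truthvalue{\Always\gamma_1}{M,\lambda}\join\truthvalue{\gamma_1\Until\gamma_2}{M,\lambda}$, and then merge the two contributions into a single $\bigjoin_{k\in K}\bigmeet_{l\in L_k}\truthvalue{\psi_{k,l}}{M,\xi_{k,l}}$ by taking $K$ to be the disjoint union of one index for the ``always'' branch (whose inner meet ranges over $\mathbb{N}_0$ with terms $\truthvalue{\gamma_1}{M,\lambda[i..\infty]}$) and the outer indices of the ``until'' branch (whose inner meets have already been put in the right form). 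This rearrangement uses only associativity of $\join$; no distributivity is needed at this particular place. Overall, I expect no serious obstacle: the real work sits inside Lemma~\ref{prop:lem-gen}, and the present theorem is essentially a bookkeeping exercise confirming that every semantic clause of \mvATL lies in its scope.
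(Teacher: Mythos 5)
Your proof is correct and follows essentially the same route as the paper's: structural induction over \mvATL{} with Lemma~\ref{prop:lem-gen} discharging every inductive case, after checking the base cases directly from the definitions of $\sigma_f$ and $\V_f$. Your treatment of $\gamma_1\WeakUntil\gamma_2$ (merging the ``always'' meet and the ``until'' join-of-meets into a single $\bigjoin\bigmeet$ over a disjoint union of indices) is in fact spelled out more carefully than in the paper, which handles that case with the same idea but states it only implicitly.
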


\begin{proof}
We use induction on the length of a formula.
Equation (\ref{thesis}) clearly holds for propositional variables and
their negations. Assume it holds for formulas of length at most $k$, and consider formula $\varphi$ of length $k+1$.
Then we have the following cases:
\begin{enumerate}
\item [(a)]
$\varphi = \varphi_1 \wedge \varphi_2$ or $\varphi = \varphi_1 \vee \varphi_2$,
where each $ \varphi_i$ is a formula of length at most $k$.
Then $\truthvalue{\varphi_1}{M,q}=\truthvalue{\varphi_1}{M,q} \meet \truthvalue{\varphi_2}{M,q}$ or
$\truthvalue{\varphi_1}{M,q}=\truthvalue{\varphi_1}{M,q} \join \truthvalue{\varphi_2}{M,q}$, respectively, where
$\varphi_1, \varphi_2$ satisfy (\ref{thesis}).
As  $\truthvalue{\varphi_1}{M,q}$ is in one of the two dual forms prescribed by Lemma~\ref{prop:lem-gen}
for $I=\{1\}$ and $J_1=\{1,2\}$, by that lemma, $\varphi$ must satisfy (\ref{thesis}), too.
\item[(b)]
$\gamma = \gamma_1 \wedge \gamma_2$ or $\gamma = \gamma_1 \vee \gamma_2$ --- analogously to (a).

\item[(c)] $\varphi= {\rm X} \psi$, where $\psi$ is of length at most $k$. Then $\truthvalue{\Next\gamma}{M,\lambda} =
  \truthvalue{\gamma}{M,\lambda[1..\infty]}$, and as $\psi$ satisfies (\ref{thesis}) by inductive hypothesis, so obviously does $\varphi$.
The reasoning is similar to (a).

\item [(d)]
$\varphi = {\rm U} \psi$, where $\truthvalue{\gamma_1\Until\gamma_2}{M,\lambda} =
  \bigjoin_{i \in \mathbb{N}_0}\bigmeet_{0\le j<i} \set{\truthvalue{\gamma_2}{M,\lambda[i..\infty]}\meet \truthvalue{\gamma_1}{M,\lambda[j..\infty]}}$;
	
Since the operator ${\rm U}$ corresponds to a combination of finite and infinite lower and upper bounds applied to values of formulas of length
at most $k$ for which (\ref{thesis}) holds,
then by Lemma~\ref{prop:lem-gen} Equation (\ref{thesis}) must hold for $\varphi$ too.

\item [(e)]
$\varphi = {\rm W} \psi$, where \\
\centerline{
$\truthvalue{\gamma_1\WeakUntil\gamma_2}{M,\lambda} =
  \bigmeet_{i \in \mathbb{N}_0} \set{\truthvalue{\gamma_1}{M,\lambda[i..\infty]}} \join
  \bigjoin_{i\in \mathbb{N}_0}\bigmeet_{0\le j<i} \set{\truthvalue{\gamma_2}{M,\lambda[i..\infty]}\meet \truthvalue{\gamma_1}{M,\lambda[j..\infty]}}$;
}

Since the operator ${\rm W}$ corresponds to a combination of finite and infinite lower and upper bounds applied to values of formulas of length
at most $k$ for which (\ref{thesis}) holds,
then by Lemma~\ref{prop:lem-gen} Equation (\ref{thesis}) must hold for $\varphi$ too.

\item [(f)] $\varphi= \coop{A}\gamma$, where $\ga$ is of length at most $k$. Then $\ga$ satisfies (\ref{thesis}) by inductive hypothesis,
  and as $\truthvalue{\coop{A}\gamma}{M,q} =
  \bigjoin_{s_A\in\Sigma_A}\bigmeet_{\lambda\in out(q,s_A)}\truthvalue{\gamma}{M,\lambda}$,
	then $\varphi$ satisfies (\ref{thesis}) by Lemma~\ref{prop:lem-gen}.

\item [(g)] $\varphi=\noavoid{A}\gamma$ --- analogously to (e).
\end{enumerate}

\vspace*{-6mm}
\end{proof}

Note that the mapping $f$ can be seen as an abstraction of truth values similar to the well-known technique
of \emph{state abstraction}~\cite{Cousot77abstraction,Clarke94abstraction}.
That is, we can view each value $x\in \LL_f$ as an \emph{abstract truth value} corresponding to the subset $f^{-1}(x)$
of the original truth values in $\LL$.
Clearly, those subsets partition $\LL$ into equivalence classes.
Theorem~\ref{general} says that if $f$ satisfies conditions (\ref{bounds}), then model checking in the abstract model
$M_f$ yields the equivalence class corresponding to the output of the original model checking problem in the concrete model $M$.

\medskip
How can we use Theorem~\ref{general} to reduce multi-valued model checking to the 2-valued case?
Recall the threshold functions $f_\ell: \LL \longrightarrow \{\bot,\top\}$, defined by
$$
f_\ell(x) = \left\{\begin{array}{cl}
           \top & \text{if }x\ge \ell \\
           \bot & \text{otherwise}
           \end{array}\right.
$$
We already stated in Theorem~\ref{fonJIR} that those functions preserve bounds. The following is an immediate corollary of the above:

\begin{corollary}
\label{prop:twovalued}
For any state (respectively, path) formula $\varphi$ of \mvATL and any state (respectively, path) $\xi$, we have
\begin{equation}
\truthvalue{\varphi}{M,\xi} \ge \ell\qquad \mbox{iff}\qquad M_{f_\ell},\xi\models\varphi.
\end{equation}
\end{corollary}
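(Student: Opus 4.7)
My plan is to obtain the corollary by plugging the threshold function $f_\ell$ into the Reduction Theorem and then specializing the resulting equivalence to the value $x = \top$. The whole argument is, as suggested in the text, essentially a specialization of the general machinery that has just been set up, so I expect the proof to be quite short; the interesting content is already in Theorems~\ref{fonJIR} and~\ref{general}.

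First, I would argue that the triple $(\L, \{\bot,\top\}, f_\ell)$ is a lattice reduction triple whenever $\ell\in\JI(\L)$. The two-element chain $\{\bot,\top\}$ is a sublattice of $\L$ (this uses only that $\bot,\top\in\LL$), and Theorem~\ref{fonJIR} asserts exactly that $f_\ell$ preserves arbitrary meets and joins, which is condition~(\ref{bounds}) in the definition of an LRT. Hence $f_\ell$ satisfies every hypothesis of Theorem~\ref{general}.

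Second, I would apply Theorem~\ref{general} to $f_\ell$ and the \mvmodel{} $M$, obtaining, for any \mvATL{} state or path formula $\varphi$ and any state or path $\xi$,
\begin{equation*}
\truthvalue{\varphi}{M,\xi}\in f_\ell^{-1}(x)\qquad\text{iff}\qquad \truthvalue{\varphi}{f_\ell(M),\xi}=x.
\end{equation*}
I would then instantiate $x = \top$. By definition of the threshold function, $f_\ell^{-1}(\top)=\{y\in\LL\mid y\ge\ell\}=\uparrow\!\ell$, so the left-hand side says exactly $\truthvalue{\varphi}{M,\xi}\ge\ell$. On the right-hand side, $\truthvalue{\varphi}{f_\ell(M),\xi}=\top$ is, via Definition~\ref{def:validity} applied in the two-valued model $M_{f_\ell}=f_\ell(M)$, precisely $M_{f_\ell},\xi\models\varphi$. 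Combining the two yields the biconditional claimed in the corollary.

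The only subtle point I would flag is that the statement implicitly needs $\ell\in\JI(\L)$ (and $\L$ distributive) for $f_\ell$ to preserve joins; the remark following Theorem~\ref{fonJIR} shows that without join-irreducibility the threshold function can fail to be a lattice homomorphism, and then the Reduction Theorem would not apply. Under the standing assumption of distributivity and with $\ell$ join-irreducible, no further obstacle arises: everything else is bookkeeping.
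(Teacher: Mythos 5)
Your proof is correct and follows exactly the route the paper intends: the paper presents this corollary as an immediate consequence of Theorem~\ref{fonJIR} (which guarantees that $f_\ell$ preserves bounds, making $(\L,\{\bot,\top\},f_\ell)$ an LRT) together with Theorem~\ref{general}, instantiated at $x=\top$ with $f_\ell^{-1}(\top)=\ \uparrow\!\ell$. Your flagged caveat about $\ell$ being join-irreducible and $\L$ distributive matches the paper's standing assumptions in this section.
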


Note that each $M_{f_\ell}$ is a classical, 2-valued model.
Together with Equation (\ref{decomp}), we have
$\truthvalue{\varphi}{M,\xi} = \bigjoin\{ \ell \in \JI(\L) \;\mid \; \truthvalue{\varphi}{M^\ell,\xi} = \top\}$.
This gives us a simple algorithm for computing $\truthvalue{\varphi}{M,\xi}$,
presented in Figure~\ref{fig:mcheck-translation}.
The following is straightforward.

\begin{figure}[h]\small
\vspace{2mm}
\centering
\begin{myalgorithm}{$mcheck_{tr}(M,\xi,\varphi)$}\vspace*{-2mm}
\itemsep=0.4pt
\item For every join-irreducible logical value $\ell\in\JI(\L)$, model-check (classically) $\varphi$ in ${M_{f_\ell},\xi}$;
\item Collect the values of $\ell$ for which the answer was ``yes'' in a set $\mathcal{X}$;
\item Return the join of the values in $\mathcal{X}$, i.e., $\bigjoin \mathcal{X}$.
\end{myalgorithm}
\caption{Translation-based model checking for \mvATL} \label{fig:mcheck-translation}\vspace*{-3mm}
\end{figure}

\begin{theorem}
The one-to-many reduction from multi-valued model checking of \mvATL to 2-valued model checking
of \ATLs runs in linear time with respect to the size of the model and the number of truth values.
\end{theorem}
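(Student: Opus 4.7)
The plan is to argue that the algorithm $mcheck_{tr}$ in Figure~\ref{fig:mcheck-translation} spawns one 2-valued model checking instance per join-irreducible element of $\L$, and that producing each instance (and aggregating the results) costs only linear time in the size of the input. The correctness of the scheme has in fact already been established: by Corollary~\ref{prop:twovalued} each call $M_{f_\ell},\xi\models\varphi$ answers ``yes'' exactly when $\truthvalue{\varphi}{M,\xi}\ge\ell$, and by the decomposition (\ref{decomp}) the value $\truthvalue{\varphi}{M,\xi}$ is the join of all such $\ell$. Thus what remains is a purely quantitative accounting of the reduction.

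First I would observe that $|\JI(\L)|\le|\LL|$, so the outer loop of the algorithm generates at most $|\LL|$ classical model checking tasks. For a fixed $\ell\in\JI(\L)$, producing the reduced model $M_{f_\ell}=\tuple{\Agt,\States,Act,d,\trans,\AP,\V_{f_\ell},\mathbf{2}}$ amounts to keeping the transition structure of $M$ intact and only relabeling the propositional valuation: for each pair $(\prop{p},q)\in\AP\times\States$ one computes $\V_{f_\ell}(\prop{p},q)=f_\ell(\V(\prop{p},q))$, which is a single comparison with $\ell$ in $\L$. Assuming unit-cost comparisons in the (fixed) lattice, this step runs in time $O(|\AP|\cdot|\States|)=O(|M|)$, independently of $\ell$. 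Interpreting constants $c\in\LConst$ occurring in $\varphi$ under $\sigma_{f_\ell}$ is a one-off preprocessing cost bounded by $|\varphi|$.

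Summing over the $|\JI(\L)|\le|\LL|$ iterations, the total work spent by the reduction itself is $O(|M|\cdot|\LL|)$; aggregating the set $\mathcal{X}$ and computing the final join $\bigjoin\mathcal{X}$ contributes an additional $O(|\LL|)$. Hence the reduction is linear in $|M|$ and in the number of truth values $|\LL|$, as claimed. The subsequent cost of actually running the 2-valued model checker on each $M_{f_\ell}$ is not part of the reduction and can be charged to whatever classical verification back-end is used.

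The only potential subtlety — and what I would identify as the main thing to be careful about — is the implicit assumption that lattice operations (comparison with a join-irreducible $\ell$, and final joins) are unit cost. For a fixed lattice $\L$ this is unproblematic, but if $\L$ is considered part of the input one should note that a single comparison or join takes time polynomial (indeed, essentially linear in $|\LL|$ with a precomputed order table), which does not change the overall shape of the bound but should be made explicit. Everything else is a direct reading of the algorithm.
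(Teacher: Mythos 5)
Your proposal is correct and follows exactly the reasoning the paper intends: the paper presents the algorithm $mcheck_{tr}$ and simply declares the theorem ``straightforward,'' relying on Corollary~\ref{prop:twovalued} and the decomposition~(\ref{decomp}) for correctness and on the observation that the loop runs once per join-irreducible element (at most $|\LL|$ times), each iteration costing only a linear relabeling of the valuation. Your explicit accounting, including the remark about unit-cost lattice operations, merely fills in details the paper leaves implicit.
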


\begin{example}[Testing of the drones]
Consider the pollution monitoring scenario from the previous examples.
Suppose that we want to test the design of a drone patrol before its deployment in the physical environment.
One way to carry out offline testing is to model-check the relevant properties of the design against a randomly generated
sample of area maps. For the clarity of the examples we have used a crafted by hand map.
For the map in Figure~\ref{fig:map} and the \mvmodel $\Mmulti$ in Figure~\ref{fig:drones-mv},
we obtain the collection of classical models presented in Figure~\ref{fig:drones-translation2}.
Note that projections $(\Mmulti)_{f_{\top}}$ and $(\Mmulti)_{f_{\top\!_g}}$ are in fact identical,
and similarly for $(\Mmulti)_{f_{\bot_d}}$, $(\Mmulti)_{f_{\bot_g}}$, and $(\Mmulti)_{f_{\bot_d\meet\bot_g}}$.

\medskip
Suppose now that we want to compute the value of $\coop{1,2}\Sometm(\prop{target} \land \prop{allvisited} \land (\pol{1}\lor\pol{2}))$ in $\Mmulti,(0,0)$.
The formula holds in state $(0,0)$ of models $(\Mmulti)_{f_{\top\!_d}}$, $(\Mmulti)_{f_{\bot_d}}$, $(\Mmulti)_{f_{\bot_g}}$, and $(\Mmulti)_{f_{\bot_d\meet\bot_g}}$, but not in $(\Mmulti)_{f_{\top}}$ and $(\Mmulti)_{f_{\top\!_g}}$.
Thus, the output of model checking is $\top_d\,\join\,\bot_d\,\join\,\bot_g\,\join\,(\bot_d\meet\bot_g) = \top_d$.
Moreover, to model-check $\coop{1}\Sometm\pol{1}$, we observe that the formula holds in all the projection
models in Figure~\ref{fig:drones-translation2}.
Thus, its value in $\Mmulti,(0,0)$ is $\top\,\join\,\top_d\,\join\,\top_g\,\join\,\bot_d\,\join\,\bot_g\,\join\,(\bot_d\meet\bot_g) = \top$.
\finis
\end{example}

\begin{figure}[!ht]
\vspace*{-2.5mm}
\centering
	\begin{tabular}{@{}c@{\qquad}c@{}}
  \begin{tabular}{@{}c@{}}
  \scalebox{0.9}{
  \begin{tikzpicture}[>=latex,scale=0.8, every node/.style={scale=0.7}]
    \tikzstyle{state}=[circle,draw,trans, minimum size=8mm]
    \tikzstyle{initstate}=[circle,draw,trans, minimum size=8mm, fill=lightgrey]
    \tikzstyle{trans}=[font=\footnotesize]

    \path (0,0) node[initstate] (q00) {$(0,0)$}
          (0,2) node[state] (q11) {$(1,1)$}
		  						+(0,1.1) node {$\prop{pol_1}$}
		  						+(0,0.7) node {$\prop{pol_2}$}
          (2,0) node[state] (q22) {$(2,2)$}	
		  (2,2) node[state] (q33) {$(3,3)_1$}
		  						+(1,0) node {$\prop{target}$}
		  (-2,-2) node[state] (q33v) {$(3,3)_2$}
		  						+(0,-0.8) node {$\prop{allvisited}$}					
		  						+(0,-1.2) node {$\prop{target}$}					
		  (0,-2) node[state] (q12) {$(1,2)$}
		  						+(0.9,0.1) node {$\prop{pol_1}$}
		  (-2,0) node[state] (q21) {$(2,1)$}
		  						+(0,0.7) node {$\prop{pol_2}$}					  	
        ;

    % TRANSITIONS
   \path[->,font=\scriptsize] (q00)
         edge
           node[midway,sloped]{\onlabel{(N,N)}} (q11)
         edge
           node[midway,sloped]{\onlabel{(E,E)}} (q22)
         edge
           node[midway,sloped]{\onlabel{(N,E)}} (q12)
         edge
           node[midway,sloped]{\onlabel{(E,N)}} (q21);
   \path[->,font=\scriptsize] (q11)
         edge
           node[midway,sloped]{\onlabel{(E,E)}} (q33);
   \path[->,font=\scriptsize] (q22)
         edge
           node[midway,sloped]{\onlabel{(N,N)}} (q33);
   \path[->,font=\scriptsize] (q12)
         edge
           node[midway,sloped]{\onlabel{(E,N)}} (q33v);
   \path[->,font=\scriptsize] (q21)
         edge
           node[midway,sloped]{\onlabel{(N,E)}} (q33v);

    % LOOPS
   \draw[-latex,black](q33) ..controls +(1.2,0.6) and +(0,1.4).. (q33);
   \draw[-latex,black](q33v) ..controls +(-1.2,0.6) and +(-1.2,-0.6).. (q33v);

  \end{tikzpicture}
  }
  \end{tabular}
  &
  \begin{tabular}{@{}c@{}}
  \scalebox{0.9}{
  \begin{tikzpicture}[>=latex,scale=0.8, every node/.style={scale=0.7}]
    \tikzstyle{state}=[circle,draw,trans, minimum size=8mm]
    \tikzstyle{initstate}=[circle,draw,trans, minimum size=8mm, fill=lightgrey]
    \tikzstyle{trans}=[font=\footnotesize]

    \path (0,0) node[initstate] (q00) {$(0,0)$}
          (0,2) node[state] (q11) {$(1,1)$}
		  						+(0,1.1) node {$\prop{pol_1}$}
		  						+(0,0.7) node {$\prop{pol_2}$}
          (2,0) node[state] (q22) {$(2,2)$}	
		  (2,2) node[state] (q33) {$(3,3)_1$}
		  						+(0.9,0) node {$\prop{pol_1}$}
		  						+(0.9,-0.4) node {$\prop{pol_2}$}
		  						+(0.9,-0.8) node {$\prop{target}$}
		  (-2,-2) node[state] (q33v) {$(3,3)_2$}
		  						+(0,-0.7) node {$\prop{pol_1}$}
		  						+(0,-1.1) node {$\prop{pol_2}$} 			  												
		  						+(0,-1.5) node {$\prop{allvisited}$}					
		  						+(0,-1.9) node {$\prop{target}$}					
		  (0,-2) node[state] (q12) {$(1,2)$}
		  						+(0.9,0.1) node {$\prop{pol_1}$}
		  (-2,0) node[state] (q21) {$(2,1)$}
		  						+(0,0.7) node {$\prop{pol_2}$}					  	
        ;

    % TRANSITIONS
   \path[->,font=\scriptsize] (q00)
         edge
           node[midway,sloped]{\onlabel{(N,N)}} (q11)
         edge
           node[midway,sloped]{\onlabel{(E,E)}} (q22)
         edge
           node[midway,sloped]{\onlabel{(N,E)}} (q12)
         edge
           node[midway,sloped]{\onlabel{(E,N)}} (q21);
   \path[->,font=\scriptsize] (q11)
         edge
           node[midway,sloped]{\onlabel{(E,E)}} (q33);
   \path[->,font=\scriptsize] (q22)
         edge
           node[midway,sloped]{\onlabel{(N,N)}} (q33);
   \path[->,font=\scriptsize] (q12)
         edge
           node[midway,sloped]{\onlabel{(E,N)}} (q33v);
   \path[->,font=\scriptsize] (q21)
         edge
           node[midway,sloped]{\onlabel{(N,E)}} (q33v);

    % LOOPS
   \draw[-latex,black](q33) ..controls +(1.2,0.6) and +(0,1.4).. (q33);
   \draw[-latex,black](q33v) ..controls +(-1.2,0.6) and +(-1.2,-0.6).. (q33v);

  \end{tikzpicture}
  }
  \end{tabular}
\\
$(\Mmulti)_{f_{\top}}$\quad and\quad $(\Mmulti)_{f_{\top\!_g}}$ &  $(\Mmulti)_{f_{\top\!_d}}$
\end{tabular}

\begin{tabular}{c}
  \begin{tabular}{@{}c@{}}
  \scalebox{0.9}{
  \begin{tikzpicture}[>=latex,scale=0.8, every node/.style={scale=0.7}]
    \tikzstyle{state}=[circle,draw,trans, minimum size=8mm]
    \tikzstyle{initstate}=[circle,draw,trans, minimum size=8mm, fill=lightgrey]
    \tikzstyle{trans}=[font=\footnotesize]

    \path (0,0) node[initstate] (q00) {$(0,0)$}
		  						+(0.7,0.5) node {$\prop{pol_2}$}
		  						+(0.7,0.85) node {$\prop{pol_1}$}
          (0,2) node[state] (q11) {$(1,1)$}
		  						+(0,1.1) node {$\prop{pol_1}$}
		  						+(0,0.7) node {$\prop{pol_2}$}
          (2,0) node[state] (q22) {$(2,2)$}	
		  (2,2) node[state] (q33) {$(3,3)_1$}
		  						+(0.9,0) node {$\prop{pol_1}$}
		  						+(0.9,-0.4) node {$\prop{pol_2}$}
		  						+(0.9,-0.8) node {$\prop{target}$}
		  (-2,-2) node[state] (q33v) {$(3,3)_2$}
		  						+(0,-0.7) node {$\prop{pol_1}$}
		  						+(0,-1.1) node {$\prop{pol_2}$} 			  												
		  						+(0,-1.5) node {$\prop{allvisited}$}					
		  						+(0,-1.9) node {$\prop{target}$}					
		  (0,-2) node[state] (q12) {$(1,2)$}
		  						+(0.9,0.1) node {$\prop{pol_1}$}
		  (-2,0) node[state] (q21) {$(2,1)$}
		  						+(0,0.7) node {$\prop{pol_2}$}					  	
        ;

    % TRANSITIONS
   \path[->,font=\scriptsize] (q00)
         edge
           node[midway,sloped]{\onlabel{(N,N)}} (q11)
         edge
           node[midway,sloped]{\onlabel{(E,E)}} (q22)
         edge
           node[midway,sloped]{\onlabel{(N,E)}} (q12)
         edge
           node[midway,sloped]{\onlabel{(E,N)}} (q21);
   \path[->,font=\scriptsize] (q11)
         edge
           node[midway,sloped]{\onlabel{(E,E)}} (q33);
   \path[->,font=\scriptsize] (q22)
         edge
           node[midway,sloped]{\onlabel{(N,N)}} (q33);
   \path[->,font=\scriptsize] (q12)
         edge
           node[midway,sloped]{\onlabel{(E,N)}} (q33v);
   \path[->,font=\scriptsize] (q21)
         edge
           node[midway,sloped]{\onlabel{(N,E)}} (q33v);

    % LOOPS
   \draw[-latex,black](q33) ..controls +(1.2,0.6) and +(0,1.4).. (q33);
   \draw[-latex,black](q33v) ..controls +(-1.2,0.6) and +(-1.2,-0.6).. (q33v);

  \end{tikzpicture}
  }
  \end{tabular}
\\
$(\Mmulti)_{f_{\bot_d}}$,\quad $(\Mmulti)_{f_{\bot_g}}$,\quad and\quad $(\Mmulti)_{f_{\bot_d\meet\bot_g}}$
\end{tabular}

\caption{Model translations for the drone scenario}
\label{fig:drones-translation2}

\vspace*{9mm}

\small
\centering
\begin{myalgorithm}{$gmcheck_{tr}(M,\varphi)$}\vspace{-2mm}
\itemsep=0.4pt
\item Set the initial valuation of $\varphi$ to $\V_\varphi$ such that $\V_\varphi(q)=\bot$ for every $q\in\States$;
\item For every join-irreducible logical value $\ell\in\JI(\L)$:
  \begin{itemize}
  \item[$\bullet$] Compute the set of states $Q_\ell$ that (classically) satisfy $\varphi$ in $M_{f_\ell}$;
  \item[$\bullet$] For each $q\in Q_\ell$, do $\V_\varphi(q) := \V_\varphi(q) \join \ell$;
  \end{itemize}
\item Return $\V_\varphi$.
\end{myalgorithm}\vspace{-1mm}
\caption{Translation-based global model checking for \mvATL}
\label{fig:global-mcheck}\vspace*{-4mm}
\end{figure}

The algorithm in Figure~\ref{fig:mcheck-translation} is an example of \emph{local} model checking.
That is, given a state (respectively, path) and a formula, it returns the truth value of the formula in that state (respectively, on that path).
In two-valued modal logics, verification of state formulas is often done by means of \emph{global model checking}
that returns the exact set of states where the input formula holds.
For many logics -- including \ATL and \ATLs -- this provides strictly more information with no extra computational cost.
The analogous problem for multi-valued modal logics would ask for a \emph{valuation} of the input formula,
i.e., a mapping from the states of the model to the truth values of $\varphi$.
A global model checking algorithm for \mvATL, based on the translation to two-valued model checking, is presented in the next section, in Figure~\ref{fig:global-mcheck}.

%-------------------------------------------------------------------------------
\subsection{Translating implication formulas: Impossibility result}\label{sec:impossibility}
%-------------------------------------------------------------------------------

Unfortunately, Theorem \ref{general} cannot be extended to \ATLSi, i.e., to the full language containing implication formulas of the form
$\varphi_1 \ra \varphi_2$, where $\ra$ represents the lattice order.

\begin{proposition}
There are lattice reduction triples and formulas of \ATLSi that
do not satisfy translation condition (\ref{thesis}).
\end{proposition}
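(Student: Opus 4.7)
The plan is to prove the proposition by constructing an explicit counterexample, since the claim is merely existential. The key conceptual point, which should guide the construction, is that the operator $\ra$ encodes the lattice order into a two-valued output. Consequently, whenever a reduction $f$ collapses two strictly ordered elements $a < b$ to a single value (which any nontrivial $f$ must do), the comparison $b \ra a$ flips from $\bot$ in $M$ to $\top$ in $f(M)$. This mechanism is essentially forced on us by the semantics of $\ra$, so even the simplest distributive lattice with a nontrivial LRT should produce a counterexample.

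Concretely, I would take $\L = {\bf 3}$, the three-element chain $\bot < u < \top$, and the threshold function $f_u: \L \to \{\bot, \top\}$ from Theorem \ref{fonJIR}, which sends $u, \top \mapsto \top$ and $\bot \mapsto \bot$. By that theorem, $(\L, \{\bot,\top\}, f_u)$ is a legitimate LRT. I would then build a minimal mv-CGS $M$ with a single state $q$ (equipped with a trivial self-transition under any choice of actions), two atomic propositions $\prop{p}, \prop{q} \in \AP$, and valuation $\V(\prop{p}, q) = \top$, $\V(\prop{q}, q) = u$. The formula witnessing the failure is simply $\varphi = \prop{p} \ra \prop{q}$.

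Next I would compute both sides of (\ref{thesis}) for this data. In $M$, since $\top \not\le u$, we obtain $\truthvalue{\varphi}{M,q} = \bot$. In $f_u(M)$, both atoms are reinterpreted as $\top$, so $\truthvalue{\varphi}{f_u(M),q} = \top$. For condition (\ref{thesis}) to hold at $x = \top$, we would need $\truthvalue{\varphi}{M,q} \in f_u^{-1}(\top) = \{u, \top\}$; but $\bot \notin \{u, \top\}$, so the biconditional fails.

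I do not expect any substantial obstacle here: once one notices that $\ra$ is semantically two-valued while $f$ is not the identity on any strictly ordered pair it identifies, the counterexample essentially writes itself. The one thing to be careful about is ensuring that the constructed $M$ is a genuine mv-CGS (hence the trivial self-loop) and that the chosen $f_u$ truly preserves arbitrary bounds, which is already guaranteed by Theorem \ref{fonJIR} since $u$ is join-irreducible in ${\bf 3}$. Everything else is a direct unpacking of the semantic clauses.
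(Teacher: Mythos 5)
Your proposal is correct and follows essentially the same route as the paper: both arguments exhibit an implication formula $\varphi_1\ra\varphi_2$ whose two sides take strictly ordered values in $\L$ that the reduction $f$ identifies, so that $\varphi$ flips from $\bot$ in $M$ to $\top$ in $f(M)$ while $\bot\notin f^{-1}(\top)$. The only difference is cosmetic --- you instantiate the idea minimally on the chain $\mathbf{3}$ with the threshold map $f_u$, whereas the paper uses a $2k$-element chain collapsed onto its endpoints --- and your verification of both sides of the translation condition is accurate.
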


\begin{proof}
Consider the lattice $\L_o=(\LL_o, \leq)$, where $\LL_o = \{0,..., k-1, k,..., 2k-1\}$, and $\leq$ is the usual total order on $\LL_o$.
Clearly, in $\L_o$ we have $0=\bot$ and $2k-1=\top$ according to our lattice notation.
Then $(\{0,2k-1\}, \leq)$ is a sublattice of $\L_o$ and  the reduction $f: \LL_o \ra \{0,2k-1\}$ given by $f(x)=2k-1$  if $x\geq k$, and $f(x)= 0$ if  $x<k$
preserves the bounds in $\L_o$. Thus $(\L_o, \{0,2k-1\}, f)$ is a lattice reduction triple.

Now take arbitrary $k_1, k_2$ such that $0<k_1<k_2<k$, and an mv-CGS $M$
over $\L_o^+= (\L_o, \sigma)$ for an arbitrary  $\sigma:\LConst \rightarrow \LL_o$
such that, for some state $q \in \States$ of $M$ and atomic propositions $p_1,p_2\in \AP$, we have $V(p_i,q) = k_i$ for $i=1,2$.

Next, let $\varphi = p_2 \ra p_1$.
Since $\truthvalue{p_i}{M,q} = k_i$ for $i=1,2$ and $k_2 > k_1$, we have
$\neg (\truthvalue{p_2}{M,q} \leq  \truthvalue{p_1}{M,q})$,  whence  $\truthvalue{\varphi}{M,q} = 0$.

However, for the model $M_1$ obtained from $M$ with the reduction $f$
we get $ \truthvalue{p_i}{M_1,q} = 0$ for $i=1,2$ (as $k_i< k$, $f(k_i) = 0$
for $i=1,2$), where $\truthvalue{p_2}{M_1,q}\leq \truthvalue{p_1}{M_1,q}$, which implies
$\truthvalue{\varphi}{M_1,q} = 2k-1$.
Yet, as  $f^{-1}(2k-1) = \{k, k+1,..., 2k-1\}$
we have  $\truthvalue{\varphi}{M,q} = 0 \not\in f^{-1}(2k-1)$, which contradicts Equation (\ref{thesis}).
\end{proof}

The above result can be generalized as follows:

\begin{theorem}
\label{counterexample}
If $\L = (\LL, \leq)$ of Theorem~\ref{general} contains a chain or anti-chain of cardinality $n$,
and $\L'=(\LL', \leq')$ is a sublattice of $\L$ of cardinality $n'< n$, then there is no function
$f:\LL \ra \LL'$ satisfying translation condition (\ref{thesis}) if the language under consideration contains implication formulas.
\end{theorem}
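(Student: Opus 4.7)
The plan is to generalize the counterexample in the preceding proposition: the specific linearly ordered lattice used there can be replaced by any chain of length exceeding $n'$, and the antichain case is handled by a completely parallel argument. The underlying mechanism in both cases is that whenever $f$ identifies two distinct logical values, a single implication formula comparing the corresponding atomic propositions will evaluate to $\bot_\L$ in $M$ but to $\top_{\L'}$ in $f(M)$, so translation condition (\ref{thesis}) must fail.

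First I would invoke the pigeonhole principle. Since $f:\LL\to\LL'$ and the chain or antichain $X \subseteq \LL$ has $|X| = n > n' = |\LL'|$, there exist distinct elements $a, b \in X$ with $f(a) = f(b)$. After relabelling the pair, assume $a \not\leq b$: in the chain case take the pair with $a$ strictly above $b$, while in the antichain case $a$ and $b$ are already incomparable.

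Second, I would construct a minimal \mvmodel\ $M$ over an interpreted lattice $\L^+ = (\LL, \leq, \sigma)$ consisting of a single state $q$ with a self-looping transition and two atomic propositions $p_1, p_2 \in \AP$ valued $\V(p_1, q) = a$ and $\V(p_2, q) = b$. Take $\varphi = p_1 \ra p_2$. Then $a \not\leq b$ yields $\truthvalue{\varphi}{M,q} = \bot_\L$, whereas $f(a) = f(b)$ yields $f(a) \leq f(b)$ and hence $\truthvalue{\varphi}{f(M),q} = \top_{\L'}$. Instantiating translation condition (\ref{thesis}) at $x = \top_{\L'}$ forces $f(\bot_\L) = \top_{\L'}$. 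Since $f$ is the reduction map of an LRT and therefore preserves arbitrary bounds (in particular the empty join), we also have $f(\bot_\L) = \bot_{\L'}$; together with $\bot_{\L'} \neq \top_{\L'}$ this is the desired contradiction.

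The one subtlety is the edge case $n' = 1$, where $\bot_{\L'} = \top_{\L'}$ and a constant $f$ vacuously satisfies (\ref{thesis}); the theorem is implicitly intended for $n' \geq 2$, which is the only non-trivial situation anyway. Observe also that the argument is purely propositional --- no temporal or strategic operators appear --- so the same counterexample refutes the translation condition for every multi-valued logic rich enough to contain two atoms and the $\ra$ operator.
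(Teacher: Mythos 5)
Your proposal is correct and follows essentially the same route as the paper's proof: pigeonhole on $f$ restricted to the chain or antichain to find two distinct values $a,b$ with $f(a)=f(b)$ and $a\not\leq b$, then use a model with two atoms valued $a$ and $b$ so that $p_1\ra p_2$ evaluates to $\bot$ in $M$ but to $\top$ in $f(M)$. Your explicit justification that $\bot_\L\notin f^{-1}(\top_{\L'})$ via preservation of the empty join, and your remark on the degenerate case $n'=1$, are small points of added care that the paper leaves implicit, but the argument is the same.
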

\begin{proof}
Assume $X= \{x_1, x_2, \ldots, x_n\}$ is a chain or anti-chain in $L$.
Let $M$ be a model such that, for some state $s\in Q$ of $M$ and propositional
variables $p_1,p_2, \ldots, p_n\in \AP$, we have $V(s, p_i)=k_i$ for $i=1,2, \ldots, n$.
Consider any $f:L\ra L'$. As $card(L')= n'<n$, than there must be $k,l \in \{1,\ldots, n\}$
such that $k\neq l$ and $f(x_k)= f(x_l)$. Let
$$\varphi = \left\{
\begin{array}{ll}p_l\ra p_k & \mbox {if $X$ is a chain and $x_k< x_l$}\\
p_k\ra p_l & \mbox {otherwise}
\end{array}
\right.$$

Then, clearly, $ x_l\not\leq x_k$ (note that if $X$ is an anti-chain,
then $ x_r\not\leq x_q$  for any $1 \leq r,s\leq n$).
Hence $\truthvalue{p_l}{M,q}\not\leq  \truthvalue{p_k}{M,q}$, and  $\truthvalue{\varphi}{M,q} = \bot$.
However, as $f(x_k)= f(x_l)$, we have $\truthvalue{p_l}{M_1,q}\leq  \truthvalue{p_k}{M_1,q}$,
whence $\truthvalue{\varphi}{M_1,q} = \top$. Thus $f$ does not satisfy Equation (\ref{thesis}).
\end{proof}

In other words, if the size of the target lattice $\L'$ is strictly smaller than the ``diameter'' of the source lattice $\L$ (that is, the cardinality of the longest chain or antichain in $\L$), then a ``clustering'' of truth values from $\L$ into $\L'$ that preserves Equation (\ref{thesis}) is impossible.
Note that the diameter of \emph{any} lattice with more than 2 values must be at least 3, and hence it exceeds the size of the classical 2-valued lattice \textbf{2}.
The following is an immediate consequence of the above:

\begin{corollary}
\label{counterexample-2val}
For any multi-valued lattice $\L$
there is no reduction of $\L$ to the 2-valued lattice of classical truth values that satisfies
the translation condition (\ref{thesis}) for the whole language of \ATLSi.
\end{corollary}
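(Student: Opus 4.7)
The plan is to derive this as an immediate application of Theorem~\ref{counterexample}, with the 2-valued lattice of classical truth values playing the role of the target sublattice $\L'$. First I would unpack what ``multi-valued'' means in the statement: if $\L$ is to be a genuinely multi-valued domain of interpretation, its carrier $\LL$ must contain some element $x$ different from both $\bot$ and $\top$ (otherwise the logic is already classical, and the identity map trivially furnishes a reduction).

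Next, I would exhibit a chain of cardinality $n = 3$ inside $\L$. Since every lattice satisfies $\bot \leq x \leq \top$ for all $x \in \LL$, the element $x$ obtained in the previous step witnesses a strict chain $\bot < x < \top$. Simultaneously, the two-element set $\{\bot, \top\}$ with the restricted order forms a sublattice $\L'$ of $\L$ of cardinality $n' = 2$ (it is closed under $\meet$ and $\join$ because both operations on $\{\bot, \top\}$ return elements of $\{\bot, \top\}$). Hence the pair $\L, \L'$ satisfies $n' = 2 < 3 = n$, matching the hypothesis of Theorem~\ref{counterexample}.

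Invoking Theorem~\ref{counterexample} then yields that no function $f:\LL \to \{\bot,\top\}$ satisfies the translation condition~(\ref{thesis}) on the full language \ATLSi, which is precisely the claim of the corollary. I do not expect any serious obstacle here: the entire conceptual content --- namely, constructing an implication formula that distinguishes two values of $\L$ collapsed by $f$ --- has already been discharged in the proof of Theorem~\ref{counterexample}; the only task left is to check that the 2-valued lattice is always embeddable as a sublattice witnessing the cardinality gap, which is immediate.
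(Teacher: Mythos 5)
Your proposal is correct and matches the paper's own derivation: the authors likewise obtain the corollary as an immediate consequence of Theorem~\ref{counterexample}, observing that any lattice with more than two values contains a chain of cardinality at least $3$ (namely $\bot < x < \top$ for some intermediate $x$), which exceeds the size of the target lattice $\mathbf{2}$. Your additional check that $\{\bot,\top\}$ is closed under meet and join, hence a genuine sublattice, is a small but welcome piece of diligence the paper leaves implicit.
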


%-------------------------------------------------------------------------------
\subsection{Translating implication formulas even more impossible}\label{sec:more-impossibility}
%-------------------------------------------------------------------------------

We already know that there is no general translation from multi-valued to two-valued model checking for implication formulas.
Now we will show that the impossibility result can be extended to any ``clustering'' of truth values from the original lattice $\L$.
To this end, we give a necessary and sufficient condition for the existence of a function $f: L \ra L_f$
that preserves bounds in $\L$ and satisfies the translation condition (\ref{thesis})
also for {implication} formulas.

The following lemma states that, in order to obtain the analogue of Theorem~\ref{general}
for implication formulas, the mapping $f$ would have to preserve both the ordering
and incomparability of the elements in $\L$.

\begin{lemma}
\label{fororder}
Let $(\L, \L_f,f)$ be a lattice reduction triple,
let $M$ be an \mvmodels over $\LL$, and $f(M)$ its reduction to $\L_f$.
Then translation condition (\ref{thesis}) of that theorem is satisfied for all implication formulas iff  the following conditions hold:
\begin{enumerate}
 \item[C1:] $(\forall x_1, x_2 \in \LL)\ [x_1 < x_2\ \Rightarrow\  f(x_1)< f(x_2)]$
 \item[C2:] $(\forall x_1, x_2 \in \LL)\ [x_1 \bowtie  x_2\  \Rightarrow\   f(x_1)\bowtie  f(x_2)]$
\end{enumerate}
\end{lemma}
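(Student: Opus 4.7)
The plan is to prove both directions of the equivalence separately, exploiting the fact that the value of an implication formula $\varphi_1 \ra \varphi_2$ is always in $\{\bot, \top\}$, and that $f$ sends $\bot_\L$ to $\bot_{\L_f}$ and $\top_\L$ to $\top_{\L_f}$ (since $f$ preserves the empty meet and join). Consequently, assuming $|\LL_f| \geq 2$, the translation condition (\ref{thesis}) for $\varphi_1 \ra \varphi_2$ reduces to the biconditional
$$\truthvalue{\varphi_1}{M,q} \leq \truthvalue{\varphi_2}{M,q}\quad\text{iff}\quad \truthvalue{\varphi_1}{f(M),q} \leq_f \truthvalue{\varphi_2}{f(M),q}.$$
I would dispose of the degenerate case $|\LL_f|=1$ separately, where both sides of the lemma are vacuously true.

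For the ($\Leftarrow$) direction, I would first establish that $f$ is order-reflecting whenever C1 and C2 hold. Monotonicity of $f$ (i.e., $x \leq y \Rightarrow f(x) \leq_f f(y)$) comes for free from preservation of meets. Conversely, if $f(x_1) \leq_f f(x_2)$ but $x_1 \not\leq x_2$, then either $x_2 < x_1$, which C1 turns into $f(x_2) < f(x_1)$, a contradiction; or $x_1 \bowtie x_2$, which C2 turns into $f(x_1) \bowtie f(x_2)$, again a contradiction. I would then proceed by structural induction on an arbitrary \ATLSi{} formula $\varphi$: the non-implication cases are exactly those handled by Theorem~\ref{general} via Lemma~\ref{prop:lem-gen}; for the implication case, I chain the inductive hypothesis applied to $\varphi_1, \varphi_2$ (giving $\truthvalue{\varphi_i}{f(M),q} = f(\truthvalue{\varphi_i}{M,q})$) with the order-reflecting property just established, obtaining (\ref{thesis}) for $\varphi_1 \ra \varphi_2$ immediately.

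For the ($\Rightarrow$) direction, I would produce counter-model constructions in the style of Theorem~\ref{counterexample}. If C1 fails, pick $x_1 < x_2$ for which $f(x_1) < f(x_2)$ does not hold; monotonicity forces $f(x_1) = f(x_2)$. Construct an \mvmodel{} $M$ with a state $q$ and atomic propositions $p_1, p_2$ such that $\V(p_i, q) = x_i$, and evaluate $p_2 \ra p_1$: the value is $\bot$ in $M,q$ but $\top$ in $f(M),q$, contradicting (\ref{thesis}). If C2 fails, pick $x_1 \bowtie x_2$ whose images are comparable, say $f(x_1) \leq_f f(x_2)$, and argue symmetrically with the formula $p_1 \ra p_2$.

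The main obstacle I anticipate lies in the backward direction: it requires the full pointwise identity $\truthvalue{\psi}{f(M),\xi} = f(\truthvalue{\psi}{M,\xi})$ to propagate through every subformula, rather than only the set-membership condition (\ref{thesis}) as stated. I would resolve this by noting at the outset that the two formulations are equivalent, since the sets $f^{-1}(x)$ partition $\LL$; this also justifies reading (\ref{thesis}) as a functional equation when combining it across nested implications. A minor technical point is to verify that the witness models used in the forward direction are genuinely admissible, which is unproblematic because \mvmodel{} propositional valuations range freely over $\LL$.
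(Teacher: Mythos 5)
Your proposal is correct and follows essentially the same route as the paper's proof: the same reduction to the two-valuedness of implication, the same witness formulas $p_2\ra p_1$ and $p_1\ra p_2$ for the necessity of C1 and C2, and the same trichotomy argument (packaged as order-reflection of $f$) combined with induction for sufficiency. Your explicit handling of the degenerate sublattice and of the functional reading of the translation condition are sensible clarifications of points the paper leaves implicit, but they do not change the substance of the argument.
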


\begin{proof}
Note that an implication  formula is a state formula, and that for any such formula
$\psi$ we have $\truthvalue{\psi}{M,q} \in \{\bot, \top\}$ for any mv-CGS $M$ and
any $q \in \States$.
Thus in order to prove (\ref{thesis}) for such formulas,
it suffices to show that, for any implication formula $\varphi$ and any state $q$:

\begin{equation}
\label{thesis3}
\truthvalue{\varphi}{M_f,q} = \top \mbox{ iff } \truthvalue{\varphi}{M,q} \in f^{-1}(\top)
\end{equation}

\smallskip\noindent
\textbf{\underline{``(\ref{thesis3}) $\pmb{\Rightarrow}$ (C1 $\land$ C2)''}:}\quad
We start by proving the necessity of conditions C1 and C2.
Assume first that $f$ satisfies  (\ref{thesis3}) for implication formulas.
We should prove that $f$ satisfies Conditions C1, C2  for all such formulas.
For what follows, denote $\xi= p_1 \ra p_2, \psi = p_2 \ra p_1$,
where $p_1,p_2\in \AP$, $p_1\neq p_2$  where $\AP$ is the set of atomic propositions of our logic).

\begin{description}
\item[C1:] We argue by contradiction.
Suppose  $x_1, x_2\in \LL, x_1<x_2$ and $f(x_1)\geq f(x_2)$.
Then, as $x_1 < x_2$ implies $x_1\leq x_2$ and $f$ preserves bounds,
we also have $f(x_1)\leq  f(x_2)$, whence $f(x_1)= f(x_2)$.

Now let $M$ be an mv-CGS over $\L$ such that, for some state $q\in \States$,
we have $V(p_i,q) = x_i$ for $i=1,2$, and let $M_f$ be the image of $M$ under $f$.
Since $\psi = p_2 \ra p_1$ and $\truthvalue{p_2}{M,q} = x_2 > x_1 = \truthvalue{p_1}{M,q}$,
we have $\truthvalue{\psi}{M,q} = \bot$.
However,  $\truthvalue{\psi}{M_f,q} =\top$, because $\truthvalue{p_2}{M_f,q} = f(x_2) = f(x_1) = \truthvalue{p_1}{M_f,q}$.
As $\bot\not\in f^{-1}(\top)$, this contradicts (\ref{thesis3}).

\item[C2:] We again argue by contradiction.
Suppose $x_1, x_2\in L, x_1\bowtie x_2$ and $\neg (f(x_1)\bowtie f(x_2))$.
Without any loss of generality, we can assume that $f(x_1)\leq f(x_2)$.
Let mv-CGSs $M,M_f$ and state $q$ of $M$ be like in the preceding item.
Then, as  $\truthvalue{p_1}{M,q} = x_1 \bowtie  x_2 =\truthvalue{p_2}{M,q}$, we have in particular
$\truthvalue{p_1}{M,q}\not \leq  \truthvalue{p_2}{M,q}$.
Since $\xi = p_1\ra p_2$, this implies  $\truthvalue{\xi}{M,q} = \bot$.
In turn, $\truthvalue{\xi}{M_f,q} = \top$, because $\truthvalue{p_1}{M_f,q} = f(x_1)\leq f(x_2) = \truthvalue{p_2}{M_f,q}$,
which again contradicts (\ref{thesis3}).
\end{description}

%%%%%%%%%%%%%%%%%%%%%%%%%%%%

\smallskip\noindent
\textbf{\underline{``(C1 $\land$ C2) $\pmb{\Rightarrow}$ (\ref{thesis3})''}:}\quad
It remains to prove the sufficiency of conditions C1 and C2.
We assume that C1, C2 hold, and prove that (\ref{thesis3})
holds for formulas of the form $\varphi = \varphi_1 \ra \varphi_2$.
We start by proving this result for non-nested implication formulas,
i.e., we assume that $\varphi_1, \varphi_2$ do not contain $\ra$.
Then, by Theorem \ref{general}, (\ref{thesis3}) holds for
$\varphi_1, \varphi_2$, which implies that

\begin{equation}
\label{thesis4}
\truthvalue{\varphi_i}{M_f,q} = f(\truthvalue{\varphi_i}{M,q}), \ \ i=1,2.\ \ \ \
\end{equation}

\smallskip\noindent
\underline{``(\ref{thesis3}L) $\Rightarrow$ (\ref{thesis3}R)''}:\quad
We begin with the forward implication in (\ref{thesis3}).
Assume that $\truthvalue{\varphi}{M_f,q} = \top$.
Then $\truthvalue{\varphi_1}{M_f,q}\leq \truthvalue{\varphi_2}{M_f,q}$.
By (\ref{thesis4}), this implies $f(\truthvalue{\varphi_1}{M,q})\leq f(\truthvalue{\varphi_2}{M,q}$.
We show by contradiction that

\begin{equation}
\label{thesis1*}
\truthvalue{\varphi_1}{M,q}\leq \truthvalue{\varphi_2}{M,q}.
\end{equation}
Suppose that (\ref{thesis1*}) does not hold, then we have two possible cases:
\begin{description}
\item[Case 1:]
$\truthvalue{\varphi_1}{M,q}> \truthvalue{\varphi_2}{M,q}$.
Then by C1 we have
$f(\truthvalue{\varphi_1}{M,q}) >  f(\truthvalue{\varphi_2}{M,q})$,
whence from (\ref{thesis4}) we get $\truthvalue{\varphi_1}{M_f,q} > \truthvalue{\varphi_2}{M_1,q}$
and $\truthvalue{\varphi}{M_f,q} = \bot$ --- which is a contradiction.

\item[Case 2:]
$\truthvalue{\varphi_1}{M,q}\bowtie  \truthvalue{\varphi_2}{M,q}$.
Then $f(\truthvalue{\varphi_1}{M,q}) \bowtie  f(\truthvalue{\varphi_2}{M,q})$ by C2,
whence from (\ref{thesis4})
we get $\neg (\truthvalue{\varphi_1}{M_f,q} \leq \truthvalue{\varphi_2}{M_f,q})$.
Consequently,
$\truthvalue{\varphi}{M_f,q} = \bot$ --- which is again a contradiction.
\end{description}
Thus (\ref{thesis1*}) above holds, whence $\truthvalue{\varphi}{M,q} = \top \in f_1(\top)$,
and  the forward implication in (\ref{thesis3}) holds.

\smallskip\noindent
\underline{``(\ref{thesis3}R) $\Rightarrow$ (\ref{thesis3}L)''}:\quad
The final step is proving the backward implication in (\ref{thesis3}).
Assume that $\truthvalue{\varphi}{M,q} = f^{-1}(\top)$.
As $\truthvalue{\varphi}{M,q}\in \{\bot, \top\}$ and $f(\bot)\neq \top$ by the preservation
of bounds by $f$ and the non-triviality of $L, L_f$, we obtain $\truthvalue{\varphi}{M,q}=\top$,
whence $\truthvalue{\varphi_1}{M,q}\leq \truthvalue{\varphi_2}{M,q}$.
Since $f$ preserves bounds, this implies $f(\truthvalue{\varphi_1}{M,q}) \geq  f(\truthvalue{\varphi_2}{M,q})$,
whence from (\ref{thesis4}) we obtain $\truthvalue{\varphi_1}{M_1,q} \leq \truthvalue{\varphi_2}{M_f,q}$.
This yields $\truthvalue{\varphi}{M_f,q} = \top$, whence the  backward implication in (\ref{thesis3})
holds, too.

\smallskip\noindent
\underline{Nested formulas}:\quad
The proof for nested formulas proceeds by induction.
Assume that (\ref{thesis3}) holds for implication formulas with $\ra$ nested
at most $k$ times, and assume $\varphi$ is an implication formula with $\ra$ nested $k+1$ times.
Then  $\varphi = \varphi_1 \ra \varphi_2$, where $\ra$ is nested at most
$k$ times in $\varphi_1, \varphi_2$.
Consequently, by the inductive assumption (\ref{thesis1*}) holds for $\varphi_1, \varphi_2$,
and repeating the proof given above for implication formulas without nesting
of $\ra$ we can show that  (\ref{thesis3}) holds for $\varphi$ too.

\medskip
This completes the proof of the sufficiency of C1, C2 for all implication formulas,
and the proof of Lemma \ref{fororder}.
\end{proof}

From Lemma \ref{fororder} we can easily derive by induction the following general result:

\begin{theorem}
\label{th-negative}
Let the $(\L,\L_f, f)$ be an LRT.
Then, the translation condition
is satisfied for all formulas of \ATLSi over $\L$ iff conditions C1, C2 of Lemma \ref{fororder} hold.
\end{theorem}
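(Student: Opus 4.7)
The plan is to obtain Theorem \ref{th-negative} by a structural induction on \ATLSi formulas that stitches together Theorem \ref{general} (which covers the \mvATL fragment via Lemma \ref{prop:lem-gen}) with the argument of Lemma \ref{fororder} (which covers the implication operator). The forward direction is essentially a triviality: implication formulas form a syntactic subclass of \ATLSi, so if the translation condition (\ref{thesis}) holds for every \ATLSi formula, then in particular it holds for every implication formula, and Lemma \ref{fororder} immediately yields C1 and C2.

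For the backward direction, assume C1 and C2 hold. The induction runs on the complexity of $\varphi$. The base cases ($\varphi = c \in \LConst$ and $\varphi = \prop{p} \in \AP$) are discharged directly from the definition of $f(M)$, since $\sigma_f(c) = f(\sigma(c))$ and $\V_f(\prop{p},q) = f(\V(\prop{p},q))$, so that $\truthvalue{\varphi}{f(M),q} = f(\truthvalue{\varphi}{M,q})$, which is equivalent to (\ref{thesis}).

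In the inductive step I split on the outermost connective. If $\varphi$ is built by any of $\land, \lor, \Next, \Until, \WeakUntil, \coop{A}, \noavoid{A}$, then $\truthvalue{\varphi}{M,\xi}$ has the form $\bigjoin_{i\in I}\bigmeet_{j_i\in J_i}\truthvalue{\varphi_{j_i}}{M,\xi_{j_i}}$ or the dual $\bigmeet_{i\in I}\bigjoin_{j_i\in J_i}\truthvalue{\varphi_{j_i}}{M,\xi_{j_i}}$ over proper subformulas $\varphi_{j_i}$, exactly as in the case analysis of Theorem \ref{general}. The induction hypothesis supplies the translation condition for each $\varphi_{j_i}$, and Lemma \ref{prop:lem-gen} transfers it to $\varphi$. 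Note that here we crucially do not need to know whether the $\varphi_{j_i}$ contain $\ra$ or not: Lemma \ref{prop:lem-gen} is sensitive only to the validity of the translation condition for the immediate subformulas, not to their syntactic shape.

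The remaining case is $\varphi = \varphi_1 \ra \varphi_2$. By induction each $\varphi_i$ satisfies (\ref{thesis}), which (since $f$ is a function and $f^{-1}(x)$-membership is an equivalence class) is equivalent to the pointwise identity $\truthvalue{\varphi_i}{f(M),q} = f(\truthvalue{\varphi_i}{M,q})$ --- this is exactly the conjunct (\ref{thesis4}) used inside the proof of Lemma \ref{fororder}. From that point on, the argument in the ``C1 $\land$ C2 $\Rightarrow$ (\ref{thesis3})'' part of Lemma \ref{fororder} proceeds using only (\ref{thesis4}), C1 and C2, together with the fact that $\truthvalue{\varphi}{M,q}\in\{\bot,\top\}$ for any implication formula. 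Since none of those ingredients cares whether $\varphi_1,\varphi_2$ are themselves implication-free, the argument transports verbatim and yields (\ref{thesis}) for $\varphi$.

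The main (and essentially only) obstacle is recognising that the ``nested implications'' induction inside the proof of Lemma \ref{fororder} is a particular instance of the more general structural induction that I am running here: the proof of Lemma \ref{fororder} happens to unfold it by hand for nesting depth of $\ra$, but the crucial identity (\ref{thesis4}) on the immediate subformulas is supplied in our setting by the outer induction hypothesis, regardless of how those subformulas intermix implication with the \mvATL connectives. Once this observation is made explicit, the two earlier results glue together and Theorem \ref{th-negative} follows.
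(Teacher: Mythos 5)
Your proposal is correct and follows essentially the same route the paper intends: the paper dispenses with an explicit proof by remarking that Theorem~\ref{th-negative} is ``easily derived by induction'' from Lemma~\ref{fororder}, and your argument is precisely that induction spelled out --- necessity via the restriction to implication formulas, and sufficiency via a structural induction that dispatches the non-implication connectives through Lemma~\ref{prop:lem-gen} and the implication case through the sufficiency argument of Lemma~\ref{fororder}, with the outer induction hypothesis supplying identity~(\ref{thesis4}) for possibly implication-containing subformulas. Your explicit observation that the hand-unfolded ``nested implications'' induction in Lemma~\ref{fororder} is subsumed by the general structural induction is exactly the point the paper leaves implicit.
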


It can be seen that conditions C1, C2 imply that any translation $f$ meeting them must preserve the exact structure of the lattice $\L$.
An important consequence of that fact is:

\begin{corollary}\label{prop:onetoone}
Given a lattice $\L = (\LL, \leq)$ and its sublattice
$\L_f= (\LL_f, \leq_f)$, any function $f:\LL \ra \LL_f$ preserving
the \lattice\ bounds and satisfying translation condition
(\ref{thesis}) for all implication formulas must be one-to-one.
\end{corollary}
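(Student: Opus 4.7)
The plan is to derive injectivity directly from the structure-preservation conditions C1 and C2 established in Lemma~\ref{fororder} and restated in Theorem~\ref{th-negative}. Since the hypothesis of the corollary is exactly that $f$ satisfies the bounds-preservation property and the translation condition (\ref{thesis}) for all implication formulas, Theorem~\ref{th-negative} immediately applies, giving us both C1 (strict monotonicity) and C2 (preservation of incomparability) for free.

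Next, I would argue by contradiction: suppose $f$ is not one-to-one, so there exist distinct $x_1, x_2 \in \LL$ with $f(x_1) = f(x_2)$. A trichotomy on the order relation between $x_1$ and $x_2$ in $\L$ gives exactly three cases, and each is ruled out by one of the two conditions. If $x_1 < x_2$ (or symmetrically $x_2 < x_1$), then C1 forces $f(x_1) < f(x_2)$, contradicting $f(x_1) = f(x_2)$. If $x_1 \bowtie x_2$, then C2 forces $f(x_1) \bowtie f(x_2)$, again contradicting equality (equal elements are trivially comparable). Since $x_1 \neq x_2$ and $\L$ is a partial order, these three cases are exhaustive, and we reach a contradiction in every case.

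There is no real obstacle here; the corollary is essentially a bookkeeping consequence of Theorem~\ref{th-negative}. The only thing to be slightly careful about is making explicit that the antisymmetry of $\leq$ on $\LL$ gives the trichotomy $x_1 < x_2$, $x_2 < x_1$, or $x_1 \bowtie x_2$ whenever $x_1 \neq x_2$, so that C1 and C2 between them cover all possibilities. Once this is observed, the proof is a few lines.
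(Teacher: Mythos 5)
Your proof is correct and follows essentially the same route as the paper: invoke Theorem~\ref{th-negative} to obtain conditions C1 and C2, then split on the trichotomy $x_1<x_2$, $x_2<x_1$, or $x_1\bowtie x_2$ for distinct elements and conclude $f(x_1)\neq f(x_2)$ in each case. The only cosmetic difference is that you phrase it as a contradiction while the paper argues directly, which changes nothing of substance.
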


\begin{proof}
Suppose that $f$ satisfies the above assumption, $x_1, x_2\in L$ and $x_1\neq x_2$.
Then we have one of the following cases:
\begin{enumerate}
\item
$x_1 < x_2$ or $x_2 < x_1$.
Then  $f(x_1) \neq f(x_2)$ by C1 of Theorem \ref{th-negative}.
\item
$x_1\bowtie x_2$.
Then  $f(x_1)\bowtie  f(x_2)$ by C2 of Theorem \ref{th-negative},
which again implies $f(x_1) \neq f(x_2)$.
\end{enumerate}

\vspace*{-5mm}
\end{proof}

The meaning of Corollary~\ref{prop:onetoone} is that there is no way of reducing
$n$-valued model checking to $k$-valued model checking for $k<n$, if we want to handle
all implication formulas.
Clearly, Corollary~\ref{counterexample-2val} in Section~\ref{sec:impossibility} is a special case of the above result.

%-------------------------------------------------------------------------------
\subsection{Translation of model checking for \emph{some} implication formulas}
%-------------------------------------------------------------------------------

By Corollary~\ref{prop:twovalued}, there is a simple translation from multi-valued to classical model checking
for strategic and temporal operators.
By Corollary~\ref{prop:onetoone}, we know that it cannot be generally extended to implication formulas.
The next question is: can we construct such a translation for \emph{some} implication formulas?
If so, for which ones?
The impossibility result in Corollary~\ref{prop:onetoone} is due to the fact that implication
formulas can be used to encode the semantics in the language --- including in particular
its $n$-valued character.
However, one usually wants to model-check one formula at a time.
Then, Theorem \ref{th-negative} can be in some cases modified to provide the desired reduction:

\begin{theorem}
\label{th-positive}
Let $(\L, \LL_f,f)$ be a lattice reduction triple (LRT), let $M$ be a CGS over
$\L$ and $f(M)$ its reduction to $\L_f$.
Further, let $\varphi$ be a formula of \ATLSi and $Sub(\varphi)$ the set of all its subformulas.
Then $\varphi$ satisfies translation condition (\ref{thesis})
whenever, for any implication formula $\phi\in Sub(\varphi)$ such that $\phi=\varphi_1\ra\varphi_2$,
any state (resp.~path) $\xi$,
and $x_i = \truthvalue{\varphi_i}{M,\xi}, i=1,2,$ the following conditions hold:

\begin{center}
\begin{tabular}{llll}
 {\bf C1':} $x_1 < x_2\ \Rightarrow\  f(x_1)< f(x_2)$  &&&
{\bf C2':}  $x_1 \bowtie   x_2\  \Rightarrow\   f(x_1) >  f(x_2)$
\end{tabular}
\end{center}
\end{theorem}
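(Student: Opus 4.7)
The plan is to proceed by structural induction on the formula $\varphi$, piggybacking on Theorem~\ref{general} for all cases except implication, and isolating implication as the only non-routine step.

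For the base cases (constants $c$ and atomic propositions $p$), the translation condition is immediate from the definition of $f(M)$, since $\sigma_f(c) = f(\sigma(c))$ and $\V_f(p,q) = f(\V(p,q))$. For the inductive cases involving $\wedge, \vee$, $\Next, \Until, \WeakUntil$, and $\coop{A}, \noavoid{A}$, I would observe that the semantics of each of these operators is a combination of finite or infinite meets and joins of values of strict subformulas, so by the inductive hypothesis together with Lemma~\ref{prop:lem-gen} (which only uses bound preservation by $f$, and makes no assumption on the nature of subformulas) the translation condition propagates. In other words, these cases are structurally identical to the corresponding cases in the proof of Theorem~\ref{general}, and no use of C1' or C2' is needed at these nodes.

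The heart of the argument is the implication case $\phi = \varphi_1 \ra \varphi_2$. Here I would set $x_i = \truthvalue{\varphi_i}{M,\xi}$ and use the inductive hypothesis on $\varphi_1, \varphi_2$ to obtain $\truthvalue{\varphi_i}{f(M),\xi} = f(x_i)$. Note that $\truthvalue{\phi}{M,\xi}$ and $\truthvalue{\phi}{f(M),\xi}$ both lie in $\{\bot,\top\}$, so it is enough to show they coincide. Split into four cases on the relationship between $x_1$ and $x_2$: if $x_1 = x_2$ then $f(x_1) = f(x_2)$ and both evaluations give $\top$; if $x_1 < x_2$, then by C1' we have $f(x_1) < f(x_2)$, so both give $\top$; if $x_2 < x_1$, then by C1' applied to the pair $(x_2,x_1)$ we get $f(x_2) < f(x_1)$, so $f(x_1) \not\le f(x_2)$ and both give $\bot$; finally, if $x_1 \bowtie x_2$, then by C2' we get $f(x_1) > f(x_2)$, whence $f(x_1) \not\le f(x_2)$ and again both evaluations return $\bot$. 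Since the resulting two-valued output is invariant under $f$ (because $f^{-1}(\top) \ni \top$ and $f^{-1}(\bot) \ni \bot$ by bound preservation, and the values $\bot,\top$ of $\phi$ are fixed points of $f$ restricted to $\{\bot,\top\}$), the translation condition holds at $\phi$.

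The main obstacle I anticipate is not in the case analysis itself, which is essentially a restriction of the proof of Lemma~\ref{fororder} to the actual realized values $x_1, x_2$ of the subformulas, but in tracking the scope of the hypothesis correctly: C1' and C2' are required to hold only for pairs $(x_1,x_2)$ that actually arise as $(\truthvalue{\varphi_1}{M,\xi}, \truthvalue{\varphi_2}{M,\xi})$ for an implication subformula $\phi\in Sub(\varphi)$ at a reachable $\xi$, rather than for all pairs in $\L$ as in Theorem~\ref{th-negative}. I would therefore be careful to phrase the induction so that, at each implication node, the hypothesis is invoked only at the specific values realized there, which is enough to complete the step while avoiding the global structural preservation that made Theorem~\ref{th-negative} so restrictive.
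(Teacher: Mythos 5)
Your proposal is correct and follows essentially the same route as the paper's proof: structural induction in which the base cases and all non-implication operators are dispatched exactly as in Theorem~\ref{general} via Lemma~\ref{prop:lem-gen}, and the implication case is settled by first using the inductive hypothesis to get $\truthvalue{\varphi_i}{f(M),\xi}=f(x_i)$ and then comparing $x_1$ with $x_2$ using C1$'$ and C2$'$. The only cosmetic differences are that you resolve the implication case by a direct four-way case split ($=$, $<$, $>$, $\bowtie$) where the paper proves the forward direction of condition (\ref{thesis3}) by contradiction, and that both you and the paper's own argument read C1$'$ as applying to the realized pair in either order, a reading the literal statement (which fixes $x_i=\truthvalue{\varphi_i}{M,\xi}$ and only hypothesizes $x_1<x_2$) does not quite license.
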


\begin{proof}
To prove the thesis, we assume that C1', C2' are satisfied, and show by
structural induction that  translation condition (\ref{thesis})
\[
\truthvalue{\psi}{M_f,\xi} = x \ \ \ \  {\rm iff}\ \ \ \ \  \truthvalue{\psi}{M,\xi}\in f^{-1}(x)
\]
holds for any $\psi \in Sub(\varphi)$.

 \medskip
For atomic or constant $\psi$, the thesis follows from Theorem \ref{general}.
Suppose now that Equation~(\ref{thesis}) holds for all subformulas of $\varphi$ having rank $k$,
and assume $\psi$ is of rank $k+1$.
If $\psi$ is obtained from subformulas of rank at most $k$ using any operator $Op$
other than $\ra$, then Equation~(\ref{thesis}) follows from the fundamental
Lemma \ref{prop:lem-gen}.

\medskip
Thus, it remains to consider the case of $\ra$.
Assume $\psi= \psi_1\ra \psi_2$, with Equation~(\ref{thesis}) being satisfied for both $\psi_1, \psi_2$.
Since $\psi$ is an implication formula, according to what we have already
noted in the proof of Lemma~\ref{fororder}, proving (\ref{thesis}) for $\psi$ reduces to showing condition (\ref{thesis3}), i.e.,
\[
\truthvalue{\psi}{M_f,q} = \top \ \ \ \  {\rm iff}\ \ \ \ \  \truthvalue{\psi}{M,q}\in f^{-1}(\top).
\]
Note that since $\psi_1, \psi_2$ are  in  $Sub(\varphi)$,
then C1', C2' hold for $x_i= \truthvalue{\varphi_i}{M,q}, i=1,2$.
By the inductive assumption, we also have
\begin{equation}
\label{thesis7}
\truthvalue{\psi_i}{M_f,q} = f(\truthvalue{\psi_i}{M,q}), \ \ i=1,2\ \vspace*{1mm}
\end{equation}
\Ra
We begin with the forward implication in
(\ref{thesis3}).
Assume that $\truthvalue{\psi}{M_f,q} = \top$.
Then $\truthvalue{\psi_1}{M_f,q}\leq \truthvalue{\psi_2}{M_f,q}$.
By (\ref{thesis7}),
this implies $f(\truthvalue{\psi_1}{M,q})\leq f(\truthvalue{\psi_2}{M,q}$.
We show by contradiction that it implies
\begin{equation}
\label{thesis2*}
\truthvalue{\psi_1}{M,q}\leq \truthvalue{\psi_2}{M,q}
\end{equation}
Suppose that (\ref{thesis2*}) does not hold, then we have two possible cases: \vspace*{1.6mm}

Case 1: $\truthvalue{\psi_1}{M,q}> \truthvalue{\psi_2}{M,q}$.
Then by condition C1' we have  $f(\truthvalue{\psi_1}{M,q}) >  f(\truthvalue{\psi_2}{M,q})$,
whence from (\ref{thesis7})
we get $\truthvalue{\psi_1}{M_f,q} > \truthvalue{\psi_2}{M_f,q}$ and $\truthvalue{\psi}{M_f,q} = \bot$,
which is a contradiction.

Case 2: $\truthvalue{\psi_1}{M,q}\bowtie  \truthvalue{\psi_2}{M,q}$.
Then $f(\truthvalue{\psi_1}{M,q}) >   f(\truthvalue{\psi_2}{M,q})$ by condition C2',
which again leads to a contradiction by what we have already proved for Case 1.

Thus (\ref{thesis2*}) above holds, whence $\truthvalue{\psi}{M,q} = \top \in f^{-1}(\top)$,
and  the forward implication in (\ref{thesis3})
holds. \vspace*{1mm}

\La
The final step consists in proving the backward implication in (\ref{thesis3}).
Assume that $\truthvalue{\psi}{M,q} = f^{-1}(\top)$.
As $\truthvalue{\psi}{M,q}\in \{\bot, \top\}$ and $f(\bot)\neq \top$ by the preservation
of bounds by $f$ and the non-triviality of $\L,\L_f$, we get $\truthvalue{\psi}{M,q}\!=\!\top$,
and consequently  $\truthvalue{\psi_1}{M,q}\leq \truthvalue{\psi_2}{M,q}$.
Since $f$ preserves bounds, this implies $f(\truthvalue{\psi_1}{M,q}) \leq  f(\truthvalue{\psi_2}{M,q})$,
whence from (\ref{thesis7}) we obtain $\truthvalue{\psi_1}{M_f,q} \leq \truthvalue{\psi_2}{M_f,q}$.
This yields  $\truthvalue{\psi}{M_f,q} = \top$, whence the backward implication
in (\ref{thesis3}) holds, too.
\end{proof}

Assume that our \mvmodels\ are defined over distributive lattices.
We now show that the translation method of Section~\ref{sec:mcheck-mv-translation},
based on join irreducible elements $\JI(\L)$, can be applied to a formula $\varphi$
of \ATLs and an \mvmodel\ $M$, provided that the assumptions of Theorem \ref{th-positive} are satisfied.
By (\ref{decomp}), for each $x \in \LL$ we have $x = \bigjoin (\JI(\L)\; \cap \downarrow x)$.
Let $M^\ell$ be the model obtained using the translation $f_\ell$.
Therefore, according to Theorem \ref{th-positive}:
$
\truthvalue{\varphi}{M^\ell,\xi} = x \mbox{ iff } \truthvalue{\varphi}{M,\xi} \in f_\ell^{-1}(x)
$
whence $\truthvalue{\varphi}{M^\ell,\xi} = \top \mbox{ iff } \truthvalue{\varphi}{M,\xi} \in\; \uparrow\! \ell$.
Thus,
\begin{equation}
\label{for2}
\truthvalue{\varphi}{M,\xi} = \bigjoin\{ \ell \in \JI(\L) \;\mid \; \truthvalue{\varphi}{M^\ell,\xi} = \top\}.
\end{equation}

\begin{example}
Consider model $\Mmulti$ in Figure~\ref{fig:drones-mv} and formula $\phi = \coop{1}\Always(\prop{pol_1}\ra(\prop{target}\land\prop{pol_2}))$.
Subformula $\prop{pol_1}$ can take the following truth values throughout the model: $\bot, u, \top\!_d, \top$.
Similarly, $\prop{target}\land\prop{pol_2}$ can evaluate to $\bot, \top\!_d$.
Thus by Theorem~\ref{th-positive} mapping $f_{\top\!_d}$ meets translation condition~(\ref{thesis}),
and we can use the translation method of Section~\ref{sec:mcheck-mv-translation} to check if the value of $\phi$ is at least~$\top\!_d$.

On the other hand, all the other ``cutoff'' mappings (i.e.,
$f_{\bot_d\meet\bot_g}, f_{\bot_d}, f_{\bot_g}, f_{\top\!_g}$, and $f_\top$) do not satisfy condition C1',
and hence the correctness of the translation is not guaranteed for those truth values.
\end{example}

The following is an immediate consequence of Theorem~\ref{th-positive}.

\begin{corollary}
Let $\L$, its sublattice $\L_f$, $f:\LL \ra \LL_f$, and $M, M_f$ be as in Theorem \ref{general}.
Further, let $\varphi$ be a formula of \ATLSi such that every implication subformula of $\varphi$
is of the form $\psi_1\ra\psi_2$, where $\psi_i\in\{ \const{\bot}, \const{\top}\}$ for some $i\in\{1,2\}$.
Then $\varphi$ satisfies the translation condition (\ref{thesis}).
\end{corollary}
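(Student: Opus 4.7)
The plan is to invoke Theorem~\ref{th-positive} after checking that its hypothesis (conditions C1' and C2') holds for every implication subformula of $\varphi$ of the prescribed form.

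First I would fix an arbitrary implication subformula $\phi = \psi_1 \ra \psi_2$ of $\varphi$, any state or path $\xi$, and write $x_i = \truthvalue{\psi_i}{M,\xi}$. By hypothesis some $x_i$ equals $\bot$ or $\top$. Since $\bot$ and $\top$ are comparable to every element of $\L$, the values $x_1$ and $x_2$ are always comparable, so the antecedent of C2' can never hold and C2' is vacuously satisfied for $\phi$.

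For C1' I would perform a short case split on where the boundary constant sits. If $\psi_1 = \const{\top}$ (so $x_1 = \top$) or $\psi_2 = \const{\bot}$ (so $x_2 = \bot$), then $x_1 < x_2$ is impossible, the antecedent of C1' is false, and C1' holds vacuously; Theorem~\ref{th-positive} then applies to $\phi$ directly. If instead $\psi_1 = \const{\bot}$ (so $x_1 = \bot$) or $\psi_2 = \const{\top}$ (so $x_2 = \top$), then $x_1 \leq x_2$ holds universally, hence by the semantics of $\ra$ the value of $\phi$ is $\top$ at every $\xi$ in $M$; because $f$ preserves bounds, the interpretation of $\const{\bot}$ and $\const{\top}$ in $f(M)$ is $\bot$ and $\top$ of $\L_f$ respectively, so $\phi$ likewise evaluates to $\top$ everywhere in $f(M)$. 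Thus in these cases the translation condition for $\phi$ is delivered directly, bypassing C1' altogether.

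Putting the pieces together, every implication subformula of $\varphi$ either meets the C1'/C2' hypothesis of Theorem~\ref{th-positive} vacuously, or satisfies the translation condition on its own because it is constantly $\top$ in both $M$ and $f(M)$. The main thing to check will be that the induction underlying Theorem~\ref{th-positive} goes through smoothly in the trivial cases as well; I expect no real obstacle, since in those cases the implication subformula behaves like a constant atom in both models, so Lemma~\ref{prop:lem-gen} continues to handle every surrounding operator and the translation condition propagates all the way up to $\varphi$.
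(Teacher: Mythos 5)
Your proof is correct and follows the same overall strategy as the paper's --- reduce everything to Theorem~\ref{th-positive} --- but your execution is in fact more careful than the paper's own argument. The paper disposes of the corollary in one sentence, asserting that whenever one of $\psi_1,\psi_2$ is $\const{\bot}$ or $\const{\top}$ the subformula ``trivially satisfies'' conditions C1$'$ and C2$'$. As you observe, that is only literally true for C2$'$ (comparability with $\bot$ and $\top$ falsifies the antecedent $x_1\bowtie x_2$) and for C1$'$ in the cases $\psi_1=\const{\top}$ or $\psi_2=\const{\bot}$. In the remaining cases ($\psi_1=\const{\bot}$ or $\psi_2=\const{\top}$) the antecedent $x_1<x_2$ of C1$'$ can hold while its conclusion fails --- e.g.\ for a threshold map $f_\ell$ and a value $x_2$ with $\bot<x_2$ and $x_2\not\ge\ell$ one gets $f(\bot)=f(x_2)=\bot$ --- so C1$'$ is genuinely violated there and cannot be invoked. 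Your patch is exactly what is needed: in those cases the implication subformula evaluates to $\top$ at every state of both $M$ and $f(M)$ (since $f$, preserving bounds, sends $\bot$ and $\top$ to the extrema of $\L_f$), so the translation condition for that subformula holds outright; and the worry you flag at the end is unfounded, since the induction in Theorem~\ref{th-positive} only needs condition (\ref{thesis3}) for each implication subformula, which your direct argument supplies, after which Lemma~\ref{prop:lem-gen} propagates the translation condition through the surrounding operators as usual. In short, your route yields a rigorous proof where the paper's stated justification is, strictly speaking, too quick.
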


\begin{proof}
It suffices to observe that if at least one of the formulas $\psi_1, \psi_2$ is either $\const{\bot}$ or $\const{\top}$, then the implication subformula $\psi_1\ra\psi_2$ trivially satisfies conditions C1' and C2' of Theorem~\ref{th-positive}.
\end{proof}

%-------------------------------------------------------------------------------
\subsection{Recursive model checking of \ATLs}\label{sec:recursive}
%-------------------------------------------------------------------------------

\begin{figure}[!b]\small
\centering
\begin{myalgorithm}{$gmcheck_{rec}(M,\varphi)$}\vspace{-2mm}
\itemsep=0.4pt
\item If $\varphi$ contains no instance of the comparison operator $\ra$, then return $gmcheck_{tr}(M,\varphi)$;
\item Else, pick the first implication subformula $\varphi_1\ra\varphi_2$ in $\varphi$, and:
  \begin{itemize}
  \item[$\bullet$] Compute $\V_{\varphi_1} := gmcheck_{rec}(M,\varphi_1)$ and $\V_{\varphi_2} := gmcheck_{rec}(M,\varphi_2)$;
  \item[$\bullet$] Create an extension $M'$ of model $M$ by adding a fresh atomic proposition $\prop{p}$,
	                 and fixing its valuation so that $V(\prop{p},q) = \top$ if $\V_{\varphi_1}(q) \le \V_{\varphi_2}(q)$ and $V(\prop{p},q) = \bot$ otherwise;
  \item[$\bullet$] Create formula $\varphi'$ by replacing every occurrence of $\varphi_1\ra\varphi_2$ by $\prop{p}$;
  \item[$\bullet$] Return $gmcheck_{rec}(M',\varphi')$.
  \end{itemize} \vspace{-2mm}
\end{myalgorithm}
\caption{Recursive global model checking for \ATLs}
\label{fig:recursive-mcheck}
\end{figure}

For many model checking instances, the assumptions of Theorem \ref{th-positive} do not hold.
In those cases, we cannot translate the multi-valued model checking of \ATLSi formulas
to the classical model checking for \ATLs.
Then, the simplest solution is to adapt the standard recursive algorithm that,
in order to model-check formula $\varphi$, proceeds bottom-up from the simplest subformulas, and replaces them with fresh atomic propositions.
In our case, this means that model checking of each implication formula $\varphi_1 \ra \varphi_2$
consists in computing the values of $\varphi_1, \varphi_2$ by means of the translation
in Section~\ref{sec:mcheck-mv-translation}, and then fixing the valuation
of the fresh variable $\prop{p}_{\varphi_1 \ra \varphi_2}$ according to their comparison.
The detailed algorithm is presented in Figure~\ref{fig:recursive-mcheck}.

The main disadvantage of the above method compared to the direct translation method is that it requires computing the values of the new atomic
propositions for all states of the model $M$.
In other words, we need to carry out global model checking, whereas for formulas without the implication operator $\ra$ both global and local model checking were possible.
The method can be possibly improved if we assume that a specific symbolic model checking method for two-valued \ATLs\ is used; we leave a study of this subject for future work.

Nevertheless, the algorithm presented in Figure ~\ref{fig:recursive-mcheck} has two important consequences.
First, it provides a general linear-time reduction from model checking \ATLSi\ (resp.~\ATLi)
to model checking standard 2-valued \ATLs (resp.~\ATL).
We state it formally as follows.

\begin{theorem}
The one-to-many reduction from multi-valued model checking of \ATLSi\ to 2-valued model checking of \ATLs runs
in linear time with respect to the size of the model, the length of the formula, and the number of truth values.
\end{theorem}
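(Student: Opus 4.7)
The plan is to bound the total cost of $gmcheck_{rec}$ by separately counting the number of terminal $gmcheck_{tr}$ invocations it spawns and the size of the classical 2-valued instances they produce. The two bounds multiply to give the reduction's time complexity, and the argument proceeds by a simple structural accounting on the number of implication operators in the input formula.

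First I would track how implication occurrences are partitioned across the three recursive calls. Let $n(\varphi)$ denote the number of $\ra$ occurrences in $\varphi$. When a call $gmcheck_{rec}(M,\varphi)$ selects the first implication subformula $\varphi_1 \ra \varphi_2$, every other $\ra$ in $\varphi$ lies strictly inside $\varphi_1$, strictly inside $\varphi_2$, or strictly outside the picked subformula and thus survives the substitution by $\prop{p}$ into $\varphi'$. This yields the identity $n(\varphi) = n(\varphi_1) + n(\varphi_2) + 1 + n(\varphi')$. Letting $T(n)$ denote the number of $gmcheck_{tr}$ invocations reached on any input with $n$ implications, we have $T(0)=1$ (the base case of step 1 in Figure~\ref{fig:recursive-mcheck}) and $T(n) = T(n_1) + T(n_2) + T(n')$ whenever $n_1 + n_2 + n' = n - 1$. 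A one-line induction shows $T(n) = 2n + 1$, which is $O(|\varphi|)$.

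Second, I would apply the earlier theorem for $\mvATL$: each $gmcheck_{tr}$ invocation spawns exactly $|\JI(\L)|$ classical $\ATLs$ model checking instances. Combining with the bound above, the total number of 2-valued instances produced is $(2n(\varphi) + 1) \cdot |\JI(\L)| = O(|\varphi| \cdot |\JI(\L)|)$. Each processed implication subformula enlarges the model by one fresh atomic proposition equipped with a valuation of size $|\States|$, so the final models have size $O(|M| + |\varphi| \cdot |\States|) \subseteq O(|M| \cdot |\varphi|)$, and every rewritten formula has length at most $|\varphi|$. The per-step bookkeeping (state-by-state comparison of two lattice valuations, plus one substitution) is dominated by these bounds.

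The subtle point, which I would take care to formalize, is the partition identity $n(\varphi) = n(\varphi_1) + n(\varphi_2) + 1 + n(\varphi')$: since the algorithm replaces every syntactic occurrence of $\varphi_1 \ra \varphi_2$ in $\varphi$ by the fresh atom, one must check that this removes exactly the picked implication and all $\ra$ occurrences nested inside its (possibly repeated) copies, while leaving the remaining $\ra$ occurrences in $\varphi'$ untouched. Once this accounting is verified, multiplying the two bounds gives total reduction time $O(|M| \cdot |\varphi| \cdot |\JI(\L)|)$, which is linear in the size of the model, the length of the formula, and the number of truth values (since $|\JI(\L)| \le |\L|$), as claimed.
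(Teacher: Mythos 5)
Your proposal is correct; note that the paper states this theorem without any proof, presenting it as an immediate consequence of the structure of $gmcheck_{rec}$, so your counting argument (at most $T(n)=2n+1$ terminal $gmcheck_{tr}$ calls, each spawning $|\JI(\L)|\le|\LL|$ classical instances over models of size $O(|M|+|\varphi|\cdot|\States|)$ and formulas of length at most $|\varphi|$) simply makes explicit what the paper leaves implicit. The one refinement worth recording is the point you already flag: when the selected subformula $\varphi_1\ra\varphi_2$ occurs several times in $\varphi$, the partition relation becomes the inequality $n(\varphi)\ge n(\varphi_1)+n(\varphi_2)+1+n(\varphi')$ rather than an identity, which only strengthens the bound $T(n)\le 2n+1$, so the linear-time conclusion stands.
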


\begin{corollary}
Model checking \ATLSi\ (resp.~\ATLi) is \DExptime-complete (resp.~\Ptime-complete) in the size of the model, the length of the formula, and the number of truth values.
\end{corollary}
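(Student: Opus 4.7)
The plan is to combine the preceding reduction theorem with the known complexity of classical \ATLs/\ATL model checking, and use the conservative extension result (Theorem~\ref{conservative}) for the matching lower bounds.

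For the upper bound, I would apply the one-to-many reduction from Figure~\ref{fig:recursive-mcheck}, viewed through the preceding theorem: given an \mvmodel{} $M$ of size $|M|$, an \ATLSi{} formula $\varphi$ of length $|\varphi|$, and a distributive lattice $\L$ with $n$ truth values, the reduction produces at most $O(|\varphi|)$ classical \ATLs model-checking instances (one per implication subformula plus the join-irreducible sweeps in $gmcheck_{tr}$), each over a two-valued model of size linear in $|M|$ and a formula of length at most $|\varphi|$. The total work outside the classical model-checking calls is linear in $|M|\cdot|\varphi|\cdot n$. Invoking the known complexity of two-valued \ATLs model checking (\DExptime-complete, cf.~\cite{Alur02ATL}) for each invocation, and observing that polynomially many \DExptime{} calls remain in \DExptime, yields the \DExptime{} upper bound for \ATLSi. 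For \ATLi, the same pipeline ends in polynomially many \Ptime{} classical \ATL{} model-checking instances, so the composed procedure is in \Ptime.

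For the matching lower bounds, I would appeal to Theorem~\ref{conservative}: every CGS $M$ and every \ATLs{} (resp.~\ATL) formula $\varphi$ can be identified with an \mvmodel{} $M'$ over the two-valued lattice ${\bf 2}$ and an \ATLSi{} (resp.~\ATLi) formula so that $M',q\models\varphi$ iff $M,q\models\varphi$. Since classical \ATLs{} model checking is \DExptime-hard and classical \ATL{} model checking is \Ptime-hard, any \DExptime-hard (resp.~\Ptime-hard) instance of two-valued model checking is already an instance of the multi-valued problem with $n=2$, so hardness transfers immediately. This gives completeness in all three parameters simultaneously, since the reductions witnessing classical hardness are already linear in model and formula size while fixing $n=2$.

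The only mild subtlety — which I do not expect to be a real obstacle — is bookkeeping in the upper bound: one must check that the $O(|\varphi|)$ recursive calls of $gmcheck_{rec}$ do not blow up the effective model size across the recursion, since each implication subformula is abstracted by a single fresh proposition whose valuation is computed once and stored. Hence the accumulated overhead is additive in $|\varphi|\cdot n$ classical calls, keeping the overall bound within the stated class. This completes the sketch.
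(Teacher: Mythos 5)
Your proposal is correct and matches the paper's intended argument: the paper leaves the corollary unproved as an immediate consequence of the preceding linear-time reduction theorem, with the upper bound following from polynomially many classical \ATLs (resp.~\ATL) model-checking calls and the lower bound from the conservative-extension result over the lattice $\mathbf{2}$. Your bookkeeping of the recursive algorithm and the hardness transfer is exactly the reasoning the paper relies on implicitly.
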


Secondly, we note that correctness of the translation does not depend on the type of strategies being used in the semantics of \ATLSi.
As it is, the translation provides a model checking reduction to the \IR variant of \ATLs (perfect information + perfect recall).
If we used memoryless strategies of type $s_a:\States\to\Actions$ instead of perfect recall, the translation would yield reduction to
the \Ir variant of \ATLs (perfect information + imperfect recall~\cite{Schobbens04ATL}).
Since the \IR and \Ir semantics coincide in 2-valued \ATL (though not in \ATLs!), we get the following.

\begin{theorem}
For \ATLonostar, memory is irrelevant, i.e., its semantics can be equivalently given by memoryless strategies.
\end{theorem}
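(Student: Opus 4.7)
The plan is to leverage the recursive translation of Figure~\ref{fig:recursive-mcheck} together with the well-known memory-irrelevance of standard (two-valued) \ATL. The key observation is that the construction of the translation never inspects how strategies are chosen; it only relies on the semantic clauses for the strategic operators $\coop{A}$ and $\noavoid{A}$. Hence the same syntactic reduction works uniformly for \IR-strategies and for \Ir-strategies, and the output is always a 2-valued \ATL formula over a classical CGS.

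More concretely, I would proceed as follows. First, I would fix an arbitrary \mvmodel{} $M$, state $q$, and an \ATLi formula $\varphi$, and run $gmcheck_{rec}(M,\varphi)$ in two parallel versions: one where the underlying 2-valued model checker uses perfect-recall strategies, and one where it uses memoryless strategies. By inspection of Figure~\ref{fig:mcheck-translation} and Figure~\ref{fig:recursive-mcheck}, each recursive call reduces the computation of a truth value to (i) join-irreducible threshold projections $M_{f_\ell}$ of the current model and (ii) a classical model-check of a 2-valued \ATL formula (because \ATLi combines strategic modalities directly with temporal operators, no \ATLs-like path subformulas arise, and the freshly introduced propositions $\prop{p}_{\varphi_1\ra\varphi_2}$ are likewise inside \ATL compound modalities). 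Next, I would invoke Schobbens' theorem~\cite{Schobbens04ATL} stating that, for 2-valued \ATL, the \IR and \Ir semantics coincide on every CGS. Applied to each classical sub-check produced by the recursion, this equates the two runs stepwise. Finally, I would reassemble the results bottom-up: since at every level of the recursion the two versions return identical truth values, the final output $\truthvalue{\varphi}{M,q}$ is the same regardless of whether \IR- or \Ir-strategies are used in the semantics of \ATLi.

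A cleaner alternative, which I would actually prefer to present, is a direct structural induction on $\varphi$ that invokes the 2-valued memory-irrelevance result only in the base step of the strategic operators. For each subformula of the form $\coop{A}\Next\psi$, $\coop{A}\psi_1\Until\psi_2$, or $\coop{A}\psi_1\WeakUntil\psi_2$ (and dually for $\noavoid{A}$), Corollary~\ref{prop:twovalued} together with Equation~(\ref{decomp}) gives
\[
\truthvalue{\coop{A}\gamma}{M,q}\;=\;\bigjoin\{\ell\in\JI(\L)\mid M_{f_\ell},q\models\coop{A}\gamma\},
\]
and the classical fact that each $M_{f_\ell},q\models\coop{A}\gamma$ holds under \IR iff it holds under \Ir lets me swap the strategy type without changing the join. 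The inductive hypothesis handles the subformulas, and the implication and boolean cases are insensitive to the strategy type because their clauses do not quantify over strategies at all.

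The main obstacle I anticipate is bookkeeping around the implication operator: the recursive step replaces an implication subformula $\varphi_1\ra\varphi_2$ by a fresh atom $\prop{p}$ whose valuation depends on the pointwise comparison of $\truthvalue{\varphi_1}{M,\cdot}$ and $\truthvalue{\varphi_2}{M,\cdot}$. I must check that this replacement valuation is itself independent of the strategy type, which follows from the inductive hypothesis applied to $\varphi_1$ and $\varphi_2$ at \emph{every} state of $M$ (that is why the argument inherently needs the global version of the translation). Once this is in place, the step-by-step equality of truth values under \IR and \Ir follows, and the theorem is established.
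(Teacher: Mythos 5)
Your proposal is correct and follows essentially the same route as the paper: the translation to two-valued model checking is agnostic to the type of strategies used, so the claim reduces to the classical fact that the perfect-recall and memoryless semantics coincide for two-valued \ATL (but not \ATLs). The extra care you take with the fresh atoms introduced for implication subformulas, which forces the global, state-wise form of the inductive hypothesis, is a detail the paper leaves implicit but does not change the argument.
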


%%%%%%%%%%%%%%%%%%%%%%%%%%

%------------------------------------------------------------------------------------
\section{Multi-valued transitions}\label{sec:mvtrans}
%-------------------------------------------------------------------------------

In this paper our aim is to propose a framework for a graded interpretation of logical statements referring to the strategic ability of agents and coalitions.
Until this point, the ``graded'' truth values have only originated from non-classical interpretation of atomic propositions and literals.
Typically, this happens because when constructing a model  we cannot determine the truth of some basic statements in absolute terms
(as either true or false).
Instead, we assign such basic statements with their ''truth degrees'' (which can be also seen as ``weights of evidence'')
drawn from a suitable lattice, which then propagate to more complex formulas.

Another source of non-classical truth values sometimes considered in the literature is a graded interpretation of transitions.
In that case, each transition is labelled according to its ``strength.''
An extension of \ATLSi with weighted transitions is discussed in this section.

%------------------------------------------------------------------------------------
\subsection{Weighted transitions: Potential interpretations}
%------------------------------------------------------------------------------------

The shift from 2-valued to multi-valued modal logic typically arises when we extend the domain of interpretation for atomic propositions
in \emph{states} of the model. The level of truth for $\prop{p_1}, \prop{p_2}, \dots$, is not crisp anymore, and this propagates
to more complex formulae $\varphi$ via semantic clauses.
So far, we have assumed that the {transition relation} {is} crisp, i.e., given states $q,q'$ and a vector of actions $\vec{\alpha}$,
the transition from $q$ to $q'$ labeled by $\vec{\alpha}$ is either fully included in the model, or is completely absent from it.
An alternative would be to consider multi-valued transition relations, with transitions that are possible to a certain degree.

\medskip
There are at least two sensible interpretations of such weighted transitions. On the one hand, the weight can be interpreted
as the strength of evidence supporting the existence of the transition.
This approach has been adopted in the previous works on multi-valued temporal logics over arbitrary lattices of truth values~\cite{kp02a,kp06},
with the additional assumption that the weights of transitions are drawn from the same lattice as the values of propositions.
A characteristic feature of the semantics in~\cite{kp02a,kp06} is that, whenever the weights on transitions decrease sufficiently,
the value of a temporal formula must also decrease.
Formally, consider a multi-valued transition system $M$, a state $q$ in $M$, and a formula $\Apath\Next\varphi$ such that
$\truthvalue{\Apath\Next\varphi}{M,q} = x$.
Moreover, let $M'$ be the same as $M$ except for the weights of all the outgoing transitions from $q$ being strictly lower than $x$.
Then we have $\truthvalue{\Apath\Next\varphi}{M',q} < \truthvalue{\Apath\Next\varphi}{M,q}$.
Analogous characterizations can be shown for all other temporal operators.

On the other hand, the transition weights can be also interpreted as a qualitative distribution, similarly to quantitative transitions
in Markov chains and Markov decision processes, used in the semantics of probabilistic temporal logics~\cite{Hansson94PCTL,Huth97quantitative,Kwiatkowska02prism} and their strategic variants~\cite{Huang12probabilisticATL,Chen13prismgames,Alfaro05discounted,Jamroga08mtl-aamas,Jamroga08mtl-prima}.
A natural assumption in that case is that the distribution is complete. In the probabilistic case, it amounts to the weights on the outgoing edges from $q$ always summing up to $1$.
The requirement is essential in models of multi-agent systems, where establishing what \emph{cannot} happen is often as important as reasoning about what can.

\medskip
In the qualitative case, at a minimum, $\truthvalue{\varphi}{M,q'} = x$ on all the successors $q'$ of $q$ should imply
$\truthvalue{\Apath\Next\varphi}{M,q} = x$ (and analogously for other temporal operators and strategic operators).
In particular, if the value of $\varphi$ is bound to be $\top$ at the next moment -- no matter how the systems evolves
 -- then $\Apath\Next\varphi$, $\coop{A}\Next\varphi$, etc., should also evaluate to $\top$.
It is easy to see that the semantics in~\cite{kp02a,kp06} do not satisfy this requirement.

\medskip
In the remainder of the section, we outline how the probabilistic approach can be adapted to arbitrary lattices of transition weights.
Our proposal is based on the concept of \emph{designated paths}, i.e., paths that are considered relevant in a given context.
We also show that the idea of may/must abstraction can be seen as a special, 3-valued case of this kind of reasoning.

%------------------------------------------------------------------------------------
\subsection{Weighted transitions in concurrent game structures}\vspace{1mm}
%------------------------------------------------------------------------------------

\begin{definition}[Weighted multi-valued CGS]
Assume two lattices: an interpreted lattice $\L^+ = (\LL, \leq, \sigma)$ of truth values,
and a lattice $\L_t= (\LL_t, \leq_t)$ for weights that will be assigned to transitions.
A \emph{weighted multi-valued concurrent game structure (w\mvmodel)} over $\L^+$ and $\L_t$
is a tuple $M = \tuple{\Agt, \States, Act, d, \trans, w, \AP, \PVal,\L^+, \L_t}$,
where $\Agt$, $\States$, $Act$,  $\trans$, $\AP$ are as in case of an mv-CGS, and
$w: \trans \rightarrow \LL_t$ is a weight function which maps each individual transition
in $\trans$ (i.e., each tuple $(q,\alpha_1,\dots,\alpha_k,\trans(q,\alpha_1,\dots,\alpha_k))$ to a value in $\LL_t$.
\end{definition}

\eject
\noindent
The interpretation of \ATLSi formulas in a w\mvmodel $M$ as above is parameterized
by the set ${\cal D}$ of designated values in  $\L_t$ --- the logical values whose assignment to a formula makes
it deemed to be satisfied.

A path in an w\mvmodel is defined analogously as in an \mvmodel.
A path $\lambda=q_0q_1q_2\dots$ is said to be designated if for every $i$
there are actions $\alpha_1,\dots,\alpha_k$ such that $t(q_i,\alpha_1,\dots,\alpha_k)= q_{i+1}$ and
$w(q_i,\alpha_1,\dots,\alpha_k,q_{i+1})\in {\cal D}$.

\medskip
Given ${\cal D}$, we reduce a w\mvmodel $M$ to an \mvmodel
\[
M_{\cal D} = \tuple{\Agt, \States, Act, d, \trans_{\cal D}, \AP, \PVal, \L^+}
\]
 where
 \[
 \trans_{\cal D}(q,\alpha_1,\dots,\alpha_k) = \left\{
  \begin{array}{ll}
   \trans(q,\alpha_1,\dots,\alpha_k)   & \mbox{if } w(q,\alpha_1,\dots,\alpha_k,\trans(q,\alpha_1,\dots,\alpha_k)) \in {\cal D}\\
  \mbox{undefined} & \mbox{otherwise}
   \end{array}
   \right.
 \]
 Then any state of $M$ is a state of $M_{\cal D}$, and any designated path in $M$ is a path in $M_{\cal D}$.
 As the interpretation of \ATLSi in $M$ we take the interpretation of \ATLSi in $M_{\cal D}$:

\medskip
For any state or designated path $\xi$ in $M$ and any formula $\vfi$ in \ATLSi, we take:
\begin{equation}
\truthvalue{\varphi}{M,\xi, {\cal D}} = \truthvalue{\varphi}{M_{\cal D},\xi}
\label{eq:designated}
\end{equation}

%------------------------------------------------------------------------------------
\subsection{Embedding may/must abstractions}
%------------------------------------------------------------------------------------

A natural example of many-valued transitions is provided by may-must transitions:
the ``may'' transitions are only possible,
but need not happen, while the ``must'' transitions will necessarily take place.
This kind of models were used in~\cite{Godefroid02abstraction}.
Also, the may/must abstractions presented in~\cite{Ball06mv-AMC,LomuscioM15,Belardinelli17abstraction}
produce models with the same or similar structure and interpretation.
Adapting
the notation, those models take the form $M^{GJ} = \tuple{\States, \AP, \delta_{must}, \delta_{may}, V, \L}$,
where $\States, \AP, V$ are defined as before,
$\L = \{\ttt, \ff, \uu\}$, and  $\delta_{must}$, $\delta_{may}\subseteq \States \times \States$ are transition relations
such that $\delta_{must} \subseteq \delta_{may}$.
The language contains negation and conjunction interpreted as in Kleene three-valued calculus over $\{\ttt, \ff,\uu\}$,
and the $AX$ operator interpreted as:
\[
\begin{array}{ll}
\truthvalue{AX\varphi}{M^{GJ},q} = \left\{
                            \begin{array}{ll}
                            \ttt & \mbox{if $\forall s' (\delta_{may}(s, s') \Rightarrow  \truthvalue{\varphi}{M^{GJ},s'}=\ttt$})
                            \\
                            \ff & \mbox{if $\exists s' (\delta_{must}(s, s') \wedge  \truthvalue{\varphi}{M^{GJ},s'}=\ff$})
                            \\
                            \uu & \mbox{otherwise}
                            \end{array}
                        \right.

\end{array}
\]
If, following \cite{Godefroid02abstraction}, we disregard explicit inclusion of agents and actions in our approach,
then if the transition relation $\delta_{may}$ is  a function, such a model can be represented as
an  w\mvmodel $M^{JKP} = \tuple{\States, \delta_{may}, \AP, \PVal,\L^+, \L_t}$ with three-valued transitions,
where $\Ltrans = \{\top, U, \bot\}$, and the weight function $w: \States \times \States \rightarrow \Ltrans$ is defined by:
$$
w(s,s') = \left\{
                            \begin{array}{ll}
                            \top  & \mbox{if $(s,s')\in\delta_{must}$}\\
                            U  & \mbox{if $(s,s')\in\delta_{may}\setminus\delta_{must}$}\\
                            \bot & \mbox{otherwise}
                            \end{array}
                            \right.
                            $$
Denote ${\cal D}_{\top} = \{\top\}, {\cal D}_{U} = \{U, \top\}$. We can show that Godefroid's-Jagadessan's
semantics based on $M^{GJ}$ can be expressed using our model $M^{JKP}$  as follows:

\begin{lemma}
 If $\varphi$ does not contain the $AX$ operator, then:
\begin{enumerate}
\item $\truthvalue{\varphi}{M^{GJ}, q} = \truthvalue{\varphi}{M^{JKP}, q, {\cal D}_{\top}}$
\item
$\truthvalue{AX\varphi}{M^{GJ},q} = \left\{
                            \begin{array}{ll}
                            \ttt  & \mbox{if $\truthvalue{AX\varphi}{M^{JKP}, q, {\cal D}_U}= \ttt$}\\
                            \ff & \mbox{if $\truthvalue{AX\varphi}{M^{JKP}, q, {\cal D}_{\top}}=\ff$}\\
                          \uu & \mbox{otherwise}
                            \end{array}
                            \right.
                            $
\end{enumerate}
\end{lemma}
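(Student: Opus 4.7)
The plan is to prove Part~1 by structural induction on $\varphi$, exploiting the fact that $M^{GJ}$ and $M^{JKP}$ share both the state valuation $V$ and the three-valued lattice $\{\ttt,\uu,\ff\}$, while a formula without $AX$ never consults any transition, so the designated set ${\cal D}$ is irrelevant on the right-hand side. The base case is immediate since $V$ coincides on the two models; the inductive steps for Kleene negation and conjunction follow because these connectives coincide with the quasi-Boolean complement and the lattice meet on $\{\ttt,\uu,\ff\}$ inherited from the $\L^+$-interpretation of $M^{JKP}$. A small auxiliary observation worth stating explicitly is that, for $AX$-free $\varphi$, we have $\truthvalue{\varphi}{M^{JKP},q,{\cal D}_\top} = \truthvalue{\varphi}{M^{JKP},q,{\cal D}_U}$, because only the $V$-labelling and the connectives matter.

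For Part~2 I would unfold definitions. First, observe from the definition of $M^{JKP}_{\cal D}$ that $M^{JKP}_{{\cal D}_\top}$ retains exactly the transitions of weight $\top$ (i.e.~the must-transitions), while $M^{JKP}_{{\cal D}_U}$ retains exactly those of weight in $\{U,\top\}$ (i.e.~the may-transitions). By the standard many-valued interpretation of the universal next operator in an mv-CGS,
\[
\truthvalue{AX\varphi}{M^{JKP},q,{\cal D}} \;=\; \bigmeet\{\,\truthvalue{\varphi}{M^{JKP},s',{\cal D}} : s' \text{ a successor of } q \text{ in } M^{JKP}_{\cal D}\,\}.
\]
Since $\varphi$ is $AX$-free, Part~1 lets me rewrite the inner value as $\truthvalue{\varphi}{M^{GJ},s'}$. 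In the three-element chain $\ff<\uu<\ttt$, this meet equals $\ttt$ iff every argument equals $\ttt$, and equals $\ff$ iff some argument equals $\ff$. Choosing ${\cal D}={\cal D}_U$ therefore reproduces the Godefroid--Jagadessan $\ttt$-clause, while choosing ${\cal D}={\cal D}_\top$ reproduces the $\ff$-clause. The ``otherwise'' branch is then handled by confirming that no other combination of the two sub-cases is possible for the $M^{GJ}$-value.

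The main obstacle is that the right-hand side of Part~2 consults our framework twice, with distinct designated sets, whereas the left-hand side is specified by a single three-valued semantic clause. I have to check that the cases of the lemma are mutually exclusive and exhaustive. Exclusivity relies on the inclusion $\delta_{must}\subseteq\delta_{may}$, or equivalently on ${\cal D}_\top\subseteq{\cal D}_U$: if every may-successor evaluates $\varphi$ to $\ttt$, then in particular every must-successor does, ruling out the existence of a must-successor with value $\ff$. Exhaustiveness is automatic, since the ``otherwise'' clause in the Godefroid--Jagadessan definition absorbs every remaining combination, and the $\uu$-valued case on the left must match the only remaining case on the right. Once this bookkeeping is done, the rest of the argument is routine definition-chasing.
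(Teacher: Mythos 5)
Your proposal is correct and follows essentially the same route as the paper: Part~1 from the fact that both models interpret $AX$-free formulas by the same Kleene three-valued clauses over the same valuation, and Part~2 by observing that $M^{JKP}_{{\cal D}_\top}$ and $M^{JKP}_{{\cal D}_U}$ retain exactly the must- and may-transitions respectively, then invoking Part~1 on the successors. The paper merely dismisses Part~1 as obvious where you spell out the induction, and leaves the exclusivity/exhaustiveness bookkeeping implicit (it is already guaranteed by $\delta_{must}\subseteq\delta_{may}$ in the Godefroid--Jagadeesan definition itself), so your extra care is harmless but not a different argument.
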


\begin{proof}
Since both $M^{JKP}$ and $M^{GJ}$ are based on Kleene 3-valued calculus of propositional formulas,
Condition 1 obviously holds.
Further, as the translation of $M^{JKP}$ to $M_{D_{U}}$ preserves all transitions in $\delta_{may}$,
we have $\truthvalue{AX\varphi}{M^{JKP}, q, {\cal D}_U}= \ttt$ iff $\truthvalue{\varphi}{M^{JKP}, q', {\cal D}_U}= \ttt$
for every $(q,q')\in \delta_{may}$. In view of Condition 1, the latter implies
$\truthvalue{\varphi}{M^{GJ}, q'}= \ttt$ for every $(q,q')\in \delta_{may}$.
Consequently, the first clause in Condition 2 holds. For the second clause, note that as $M_{D_{\top}}$
only contains transitions in $\delta_{must}$, then $\truthvalue{AX\varphi}{M^{JKP}, q, {\cal D}_{\top}}=\ff$
iff there is a transition $(q,q')\in \delta_{must}$ such that
$\truthvalue{\varphi}{M^{JKP}, q'}=\ff$.
Then by Condition 1 $\truthvalue{\varphi}{M^{GJ}, q'}= \ff$, whence $\truthvalue{AX\varphi}{M^{GJ},q} = \ff$ --- and so Condition 2 holds.
\end{proof}

%------------------------------------------------------------------------------------
\subsection{Model checking multi-valued CGS with weighted transitions}
%------------------------------------------------------------------------------------

Fortunately, the introduction of weighted transition, while enriching our models
and making them  better suited to some practical applications, does not introduce any essential complications
into model-checking compared to \mvmodel's with two-valued transitions.
Thus the results obtained in the latter case carry over to \mvmodel,
and we have the following generalization of the Reduction Theorem~\ref{general}:

\begin{theorem}
Let $\L = (\LL, \leq)$ be an arbitrary finite lattice, $\L_f = (\LL_f, \leq_f)$ a sublattice
of $\L$, and let $f: \LL \rightarrow \LL_f$  a mapping which preserves arbitrary bounds in $\L$.
Furthermore, let
$M = $ $\tuple{\Agt, \States, Act, d, \trans, w, \AP, \V, \L^+}$ be an w\mvmodel
over an interpreted lattice $\L^+ = (\LL, \leq, \sigma)$ over ${\cal C}$,
and let
$M_f = \tuple{\Agt, \States, Act, d, \trans, w_f, \AP, \V_f, (\LL_f, \leq_f, \sigma_f)}$ be the \mvmodel obtained from $M$ by ``clustering'' the truth values in $M$ according to $f$, i.e.:
\begin{enumerate}
\item $\sigma_f(c)= f(\sigma(c))$\  for any $c\in \LConst$,
\item $w_f(\tau) = f(w(\tau)))$ for any $\tau\in t$, and
\item $V_f(p,q) = f(V(p,q))$\ for any $q \in \States$ and $p \in \AP$.
\end{enumerate}

Then, for any state (respectively, path) formula $\varphi$ of \ATLSi over $\L$, any state (respectively, path) $\xi$, and any set of designated truth values $\designated$, we have
\begin{equation}
\label{thesis-weighted}
\truthvalue{\varphi}{M,\xi,\designated} \in f^{-1}(x)\qquad \mbox{iff}\qquad \truthvalue{\varphi}{M_f,\xi,\designated} = x
\end{equation}
\end{theorem}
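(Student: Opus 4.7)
The plan is to reduce this generalization directly to the already-proven Theorem~\ref{general}, using the semantic bridge given by Equation~(\ref{eq:designated}) to eliminate weighted transitions. By that definition, the left-hand side of~(\ref{thesis-weighted}) unfolds to $\truthvalue{\varphi}{M_\designated, \xi}$, while the right-hand side unfolds to $\truthvalue{\varphi}{(M_f)_\designated, \xi}$, where both $M_\designated$ and $(M_f)_\designated$ are ordinary mv-CGS of the kind handled by Theorem~\ref{general}. It therefore suffices to relate these two mv-CGS models.

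The crucial intermediate step is to show that the two operations --- restricting to designated transitions and clustering weights/valuations via $f$ --- commute, i.e., $(M_f)_\designated$ coincides with $(M_\designated)_f$ as an mv-CGS. Unfolding definitions, a tuple $(q, \alpha_1, \dots, \alpha_k, q')$ is a transition of $(M_f)_\designated$ precisely when $w_f(q, \alpha_1, \dots, \alpha_k, q') = f(w(q, \alpha_1, \dots, \alpha_k, q')) \in \designated$, whereas it is a transition of $(M_\designated)_f$ precisely when $w(q, \alpha_1, \dots, \alpha_k, q') \in \designated$. These two conditions agree under the natural compatibility assumption that $\designated$ is saturated by the fibres of $f$, i.e., $\designated = f^{-1}(f(\designated))$. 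The remaining model components (agents, states, available actions, the valuation $\V_f$, and the constant interpretation $\sigma_f$) agree by construction of $M_f$ and the definition of the designated-path reduction.

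Once the commutation $(M_f)_\designated = (M_\designated)_f$ is established, I apply Theorem~\ref{general} to the mv-CGS $M_\designated$ together with the clustering $f$, which yields
\[
\truthvalue{\varphi}{M_\designated, \xi} \in f^{-1}(x) \quad \iff \quad \truthvalue{\varphi}{(M_\designated)_f, \xi} = x.
\]
Substituting $(M_\designated)_f = (M_f)_\designated$ on the right and then rewriting both sides via~(\ref{eq:designated}) delivers the target condition~(\ref{thesis-weighted}). Note that this argument uses the distributivity of $\L$ only through its invocation of Theorem~\ref{general}; no new lattice-theoretic work is required.

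The main obstacle is the compatibility between $\designated$ and $f$. Without the assumption that $\designated$ is closed under $f$-equivalence, a transition $\tau$ could satisfy $f(w(\tau)) \in \designated$ while $w(\tau) \notin \designated$ (or vice versa), which would break the commutation and hence the reduction. This assumption is implicit in the statement, and it is satisfied in the canonical cases driving the framework --- in particular for threshold reductions $f_\ell$ together with upward-closed designated sets $\designated = \uparrow \ell$, which covers the applications envisaged in the paper such as the may/must embedding of Section~\ref{sec:mvtrans}.
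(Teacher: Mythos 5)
Your proposal follows the same route as the paper's own argument, which consists of a single line: ``straightforward from Equation~(\ref{eq:designated}) and Theorem~\ref{general}.'' What you add --- and this is the valuable part --- is the explicit intermediate step that the paper silently assumes, namely that restricting to designated transitions commutes with clustering, $(M_f)_\designated = (M_\designated)_f$. Your analysis of that step is correct: the two models agree on all components except possibly the transition function, and there they agree only if $w(\tau)\in\designated$ is equivalent to $f(w(\tau))\in\designated$ for every transition $\tau$. This equivalence genuinely fails for arbitrary $\designated$. For instance, on the chain $\{0,1,2,3\}$ with the threshold map $f=f_2$ and $\designated=\uparrow 1=\{1,2,3\}$, a transition of weight $1$ survives in $M_\designated$ but is deleted in $(M_f)_\designated$ because $f(1)=0\notin\designated$; the two reduced models then have different path sets and are easily separated by a formula. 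So the theorem as stated is missing a hypothesis, and your identification of a saturation/compatibility condition between $\designated$ and $f$ is a real repair rather than pedantry. Two small refinements: (i) saturation $\designated=f^{-1}(f(\designated))$ gives $w(\tau)\in\designated \Leftrightarrow f(w(\tau))\in f(\designated)$, so to conclude $f(w(\tau))\in\designated$ you additionally need $\designated$ and $f(\designated)$ to agree on the range of $f$ (this does hold in your canonical case $f=f_\ell$, $\designated=\uparrow\ell$); (ii) note that the statement also tacitly identifies the weight lattice $\L_t$ with the truth-value lattice $\L$ (the tuple in the theorem drops $\L_t$, and $f$ is only defined on $\LL$), which is worth making explicit since otherwise $w_f=f\circ w$ is not even well-typed. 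With those caveats recorded, your reduction to Theorem~\ref{general} via Equation~(\ref{eq:designated}) is exactly the intended proof, carried out more honestly than the original.
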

\begin{proof}
Straightforward from Equation~(\ref{eq:designated}) and Theorem~\ref{general}.

Note that the conditions of the above theorem (preservation of the bounds plus Conditions 1 and 3) correspond to those of Theorem 5.4 , with an analogous Condition 2 for weights added).
\end{proof}

This theorem can be used, in a way analogous to that employed for \ATLSi with two-valued transitions,
to reduce mv-model checking for \ATLSi\ with mv-transitions to two-valued model checking.
This is because the semantics of \ATLSi\ with many-valued transitions contains an embedded
reduction of models with mv-transitions to models with two-valued transitions ---
and for those models we can again use the reduction  based on our threshold functions.
Consequently, the local and global model checking algorithm given in Figure~\ref{fig:recursive-mcheck}
and Figure~\ref{fig:global-mcheck} also carry-over to the case of many-valued transitions.

Like previously, the positive results quoted above apply to formulas which do not involve
the implication operator. For the formulas involving that operator, the negative result obtained
in case of two-valued transitions of course still holds --- because a CGS with two-valued transitions
is just a special case of a CGS with many-valued transitions.

%-------------------------------------------------------------------------------
\section{Multi-valued verification of agents with imperfect information}\label{sec:imperfinfo}
%-------------------------------------------------------------------------------

\ATL and \ATLs were originally proposed for reasoning about agents in perfect information scenarios.
It can be argued that realistic multi-agent systems always include some degree of limited observability~\cite{Schobbens04ATL,Jamroga03FAMAS,Agotnes04atel,Jamroga04ATEL,Agotnes06action,Jamroga07constructive-jancl,Schnoor10strategic}.
However, model checking of \ATL and \ATLs with imperfect information is hard -- more precisely, \Deltwo- to \Pspace-complete for agents playing memoryless strategies~\cite{Schobbens04ATL,Jamroga08mcheckcloser,Bulling10verification} and undecidable for agents with perfect recall~\cite{Dima11undecidable}.
Furthermore, the imperfect information semantics of strategic ability does not admit standard fixpoint equivalences~\cite{Bulling14comparing-jaamas}, which makes incremental synthesis of strategies cumbersome.
Practical attempts at the problem have emerged only
recently~\cite{Pilecki14synthesis,Busard14improving,Huang14symbolic-epist,CernakLMM14,Jamroga18fixpApprox-aij,Kurpiewski19domination}, and the experimental results show that verification is feasible only for very small models.

Such hard problems can be potentially tackled by means of approximation techniques~\cite{Belardinelli17abstraction,Jamroga17fixpApprox}. In particular, abstraction techniques~\cite{Cousot77abstraction,Belardinelli17abstraction,Kouvaros17predicateAbstraction} can be used to cluster multiple states and/or transitions in the system into \emph{abstract} states and transitions, thus reducing the model size.
However, in order to be effective, the abstraction must be very coarse, which potentially results in loss of information about the truth of (some) atomic propositions  and the existence of (some) transitions.
This leads to a substantial reduction of the verification cost, possibly at the expense of introducing non-classical truth values of propositions in some abstract states, as well as transitions of various strength.
In consequence, multi-valued model checking can be extremely useful when reasoning about strategies under uncertainty.

Clearly, all the previously cited reasons for using multi-valued verification
(description of the world based on a non-classical notion of truth, lifting the logical reasoning to a richer domain of answers, inconclusive or inconsistent information about the system, conflicting evidence coming from different sources, inconclusive verification procedure, etc.)
are also relevant for agents with uncertainty.
In this section, we show that the framework of \ATLSi can be easily extended to the case of imperfect information.

%-------------------------------------------------------------------------------
\subsection{Logic \ATLSi with imperfect information}
%-------------------------------------------------------------------------------

Let us extend \mvmodel with epistemic indistinguishability relations $\sim_1,\dots,\sim_k \subseteq \States\times\States$, one per agent in $\Agt$.
The idea is that, whenever $q\sim_a q'$ and the system is in state $q$, agent $a$ might think that the system is actually in $q'$.
Each $\sim_a$ is assumed to be an equivalence relation.
We also assume that the resulting model is \emph{uniform} with respect to the indistinguishability relations,
i.e., $q\sim_a q'$ implies $d_a(q) = d_a(q')$.
In other words, the choices available to an agent are identical in the states indistinguishable for that agent.

In a similar way, strategies under imperfect information must specify identical choices in indistinguishable situations.
That is,
memoryless strategies with imperfect information (\ir strategies, for short) are functions $s_a : \States\to\Actions$ such that $q\sim_a q'$ implies $s_a(q)=s_a(q')$.
Moreover, perfect recall strategies with imperfect information (shortly: \iR strategies) are functions $s_a : \States^+\to\Actions$ st.~$q_0\sim_a q_0', \dots, q_n\sim_a q_n'$ implies $s_a(q_0\dots q_n)=s_a(q_0'\dots q_n')$.
Again, collective strategies for $A\subseteq\Agt$ are tuples of individual strategies for $a\in A$.
We denote them by $\Sigma_A^\ir$ and $\Sigma_A^\iR$, respectively.

\medskip
The semantics of \ATLo[\mathfrak{S}], parameterized by the type of stra\-tegies $\mathfrak{S}=\IR,\Ir,\ir,\iR$,
can be defined by replacing the clause for the strategic operators from Section~\ref{sec:multivalued} as follows:
\begin{center}
$\truthvalue{\coop{A}\gamma}{M,q}^\mathfrak{S}\ =\
  \bigjoin_{s_A\in\Sigma_A^\mathfrak{S}}\bigmeet_{\lambda\in out(q,s_A)}\set{\truthvalue{\gamma}{M,\lambda}^\mathfrak{S}}$;

$\truthvalue{\noavoid{A}\gamma}{M,q}^\mathfrak{S}\ =\
  \bigmeet_{s_A\in\Sigma_A^\mathfrak{S}}\bigjoin_{\lambda\in out(q,s_A)}\set{\truthvalue{\gamma}{M,\lambda}^\mathfrak{S}}$.\vspace*{-2mm}
\end{center}

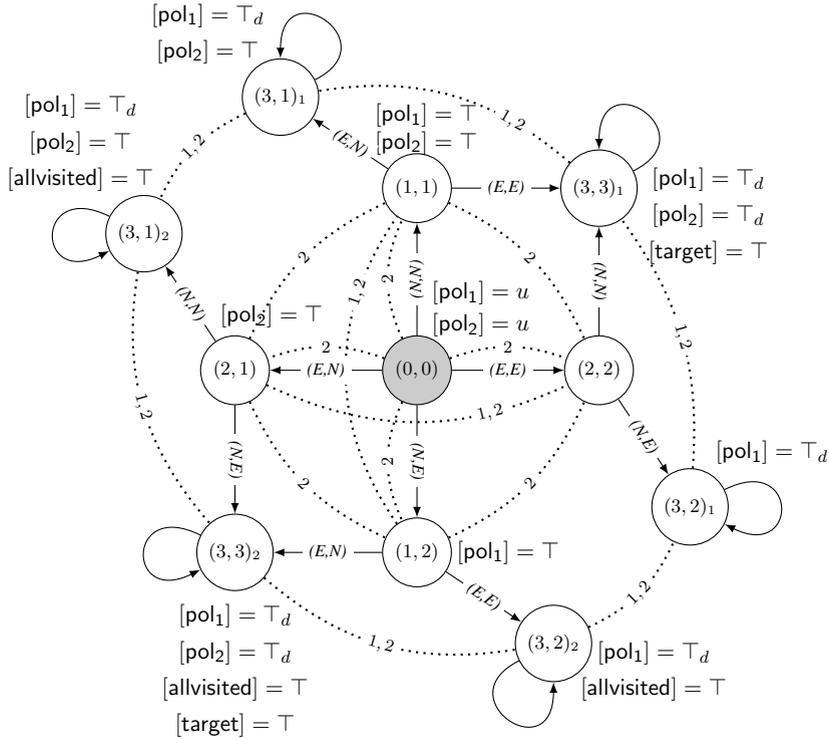
\begin{figure}[!h]
\vspace*{-3mm}
\centering
\hspace{-0.5cm}
\begin{tikzpicture}[>=latex,scale=1.2, every node/.style={scale=0.8}]
  \tikzstyle{state}=[circle,draw,trans, minimum size=8mm]
  \tikzstyle{initstate}=[circle,draw,trans, minimum size=8mm, fill=lightgrey]
  \tikzstyle{trans}=[font=\footnotesize]
  \tikzstyle{epistemic}=[dotted,thick,font=\scriptsize]

  \path (0,0) node[initstate] (q00) {$(0,0)$}
                +(0.7,0.5) node {$[\prop{pol_2}]=\undec$}
                +(0.7,0.85) node {$[\prop{pol_1}]=\undec$}
    (0,2) node[state] (q11) {$(1,1)$}
                +(0.1,0.8) node {$[\prop{pol_1}]=\top$}
                +(0.1,0.5) node {$[\prop{pol_2}]=\top$}
    (2,0) node[state] (q22) {$(2,2)$}		  							  						
    (2,2) node[state] (q33) {$(3,3)_1$}
                +(1.2,0.1) node {$[\prop{pol_1}]=\top_d$}
                +(1.2,-0.3) node {$[\prop{pol_2}]=\top_d$}	  								
                +(1.2,-0.7) node {$[\prop{target}]=\top$}					
    (-2,-2) node[state] (q33v) {$(3,3)_2$}
                +(0,-0.7) node {$[\prop{pol_1}]=\top_d$}
                +(0,-1.1) node {$[\prop{pol_2}]=\top_d$} 			  												
                +(0,-1.5) node {$[\prop{allvisited}]=\top$}					
                +(0,-1.9) node {$[\prop{target}]=\top$}					
    (0,-2) node[state] (q12) {$(1,2)$}
                +(1,0) node {$[\prop{pol_1}]=\top $}
    (-2,0) node[state] (q21) {$(2,1)$}
                +(0.4,0.6) node {$[\prop{pol_2}]=\top$}					  	
    (-3,1.5) node[state] (q31v) {$(3,1)_2$}
                +(-0.7,1.4) node {$[\prop{pol_1}]=\top_d$}
                +(-0.7,1) node {$[\prop{pol_2}]=\top$}
                +(-0.7,0.6) node {$[\prop{allvisited}]=\top$}					
    (-1.5,3) node[state] (q31) {$(3,1)_1$}
                +(-0.8,0.9) node {$[\prop{pol_1}]=\top_d$}
                +(-0.8,0.5) node {$[\prop{pol_2}]=\top$}
    (3,-1.5) node[state] (q32) {$(3,2)_1$}
                +(0.9,0.6) node {$[\prop{pol_1}]=\top_d$}
    (1.5,-3) node[state] (q32v) {$(3,2)_2$}
                +(1.1,-0.1) node {$[\prop{pol_1}]=\top_d$}
                +(1.1,-0.5) node {$[\prop{allvisited}]=\top$}					
      ;

  % EPISTEMIC
 \path[epistemic] (q00)
       edge[bend left=25]
         node[midway,sloped,rotate=180]{\onlabel{$2$}} (q11)
       edge[bend right=25]
         node[midway,sloped]{\onlabel{$2$}} (q12)
       edge[bend left=20]
         node[midway,sloped]{\onlabel{$2$}} (q22)
       edge[bend right=20]
         node[midway,sloped]{\onlabel{$2$}} (q21);
 \path[epistemic] (q11)
       edge[bend left=20]
         node[midway,sloped]{\onlabel{$2$}} (q22);
 \path[epistemic] (q12)
       edge[bend right=20]
         node[midway,sloped]{\onlabel{$2$}} (q22)
       edge[bend left=35]
         node[near end,sloped]{\onlabel{$1,2$}} (q11);
 \path[epistemic] (q21)
       edge[bend left=20]
         node[midway,sloped]{\onlabel{$2$}} (q11)
       edge[bend right=25]
         node[near end,sloped]{\onlabel{$1,2$}} (q22)
       edge[bend right=20]
         node[midway,sloped]{\onlabel{$2$}} (q12);
 \path[epistemic]
 (q33v) edge[bend right=25]
         node[midway,sloped]{\onlabel{$1,2$}} (q32v)
 (q32v) edge[bend right=20]
         node[midway,sloped]{\onlabel{$1,2$}} (q32)
 (q32) edge[bend right=20]
         node[midway,sloped]{\onlabel{$1,2$}} (q33)
 (q33) edge[bend right=25]
         node[near start,sloped]{\onlabel{$1,2$}} (q31)
 (q31) edge[bend right=20]
         node[midway,sloped]{\onlabel{$1,2$}} (q31v)
 (q31v) edge[bend right=25]
         node[midway,sloped]{\onlabel{$1,2$}} (q33v)
 ;

  % TRANSITIONS
 \path[->,font=\scriptsize] (q00)
       edge
         node[midway,sloped]{\onlabel{(N,N)}} (q11)
       edge
         node[midway,sloped]{\onlabel{(E,E)}} (q22)
       edge
         node[midway,sloped]{\onlabel{(N,E)}} (q12)
       edge
         node[midway,sloped]{\onlabel{(E,N)}} (q21);
 \path[->,font=\scriptsize] (q11)
       edge
         node[midway,sloped]{\onlabel{(E,E)}} (q33)
       edge
         node[midway,sloped]{\onlabel{(E,N)}} (q31);
 \path[->,font=\scriptsize] (q22)
       edge
         node[midway,sloped]{\onlabel{(N,N)}} (q33)
       edge
         node[midway,sloped]{\onlabel{(N,E)}} (q32);
 \path[->,font=\scriptsize] (q12)
       edge
         node[midway,sloped]{\onlabel{(E,N)}} (q33v)
       edge
         node[midway,sloped]{\onlabel{(E,E)}} (q32v);
 \path[->,font=\scriptsize] (q21)
       edge
         node[midway,sloped]{\onlabel{(N,N)}} (q31v)
       edge
         node[midway,sloped]{\onlabel{(N,E)}} (q33v);

  % LOOPS
 \draw[-latex,black](q33) ..controls +(1.2,0.6) and +(0,1.4).. (q33);
 \draw[-latex,black](q31) ..controls +(1.2,0.6) and +(0,1.4).. (q31);
 \draw[-latex,black](q32v) ..controls +(-1.2,-0.6) and +(-0,-1.4).. (q32v);
 \draw[-latex,black](q31v) ..controls +(-1.2,0.6) and +(-1.2,-0.6).. (q31v);
 \draw[-latex,black](q32) ..controls +(1.2,0.6) and +(1.2,-0.6).. (q32);
 \draw[-latex,black](q33v) ..controls +(-1.2,0.6) and +(-1.2,-0.6).. (q33v);

\end{tikzpicture}\vspace*{-4mm}
\caption{Multi-valued model $\Mmultiimperf$ for drones with imperfect information. Epistemic indistinguishability is depicted by dotted lines.}
\label{fig:drones-mv-imperfect}\vspace*{-2mm}
\end{figure}

\begin{example}[Drones with partial information]\label{ex:drones-mv-imperfect}
Consider again the drone model introduced in Example~\ref{ex:drones-mv} and Figure~\ref{fig:drones-mv}.
We assume now that drone $1$ sees its own position but not that of drone $2$, whereas drone $2$ only sees if the other drone is in the same location but does not recognize the location itself.
Moreover, each drone can identify the initial state (i.e., $(0,0)$), as well as recognize that it has run out of battery (states $(3,3)_1$ and $(3,3)_2$).
Finally, drone $2$ -- not knowing its exact position -- may try to fly in a direction which is not available for a given location (e.g., fly North in location $2$). In that case, the attempt fails, and the drone stays in its current location.
The updated \mvmodel $\Mmultiimperf$ is presented in Figure~\ref{fig:drones-mv-imperfect}.

\medskip
For the formulas from Example~\ref{ex:drones-mvatl}, we now have :

\begin{itemize}
\item $\truthvalue{\coop{1}\Sometm\pol{1}}{\Mmulti,(0,0)}^\ir = \truthvalue{\coop{1}\Sometm\pol{1}}{\Mmulti,(0,0)}^\iR = \top$, as the strategy to fly North in state $(0,0)$, and then East in $(1,1)$ or $(1,2)$ is uniform for drone $1$;

\smallskip
\item $\truthvalue{\coop{2}\Sometm\pol{2}}{\Mmulti,(0,0)}^\ir = \truthvalue{\coop{2}\Sometm\pol{2}}{\Mmulti,(0,0)}^\iR = \top$ (the analogous strategy for drone $2$ is \emph{not} uniform, but the agent can achieve the goal by playing $N$ in all the states);

\smallskip
\item $\truthvalue{\coop{1,2}\Sometm(\prop{target} \land \prop{allvisited} \land (\pol{1}\lor\pol{2}))}{\Mmulti,(0,0)}^\ir = \bot$ because neither of the uniform memoryless strategies leads to a state where $\prop{target} \land \prop{allvisited}$ holds;

\smallskip
\item $\truthvalue{\coop{1,2}\Sometm(\prop{target} \land \prop{allvisited} \land (\pol{1}\lor\pol{2}))}{\Mmulti,(0,0)}^\iR = \top_d$ (example strategy: drone $1$ flies North in the first step, and East in the second, while drone $2$ moves East and then North).
\finis
\end{itemize}
\end{example}

\para{Objective vs.~subjective semantics of ability.}
We note that the above semantic rule corresponds to the notion of \emph{objective ability}.
That is, given a strategy, we only look at its outcome paths starting from the current global state of the system $q$.
The alternative, \emph{subjective ability}, requires the strategy to succeed on all the paths starting from states indistinguishable from $q$.
Let $\sim_a\!\!(q) = \{q' \mid q \sim_a q'\}$.
This can be formalized by the following adaptation of the semantic rule:

\begin{center}
$\truthvalue{\coop{A}\gamma}{M,q}^\mathfrak{S}\ =\
  \bigjoin_{s_A\in\Sigma_A^\mathfrak{S}}\bigmeet_{a\in A}\bigmeet_{q' \in \sim_a(q)}
	\bigmeet_{\lambda\in out(q',s_A)}\set{\truthvalue{\gamma}{M,\lambda}^\mathfrak{S}}$;

$\truthvalue{\noavoid{A}\gamma}{M,q}^\mathfrak{S}\ =\
  \bigmeet_{s_A\in\Sigma_A^\mathfrak{S}}\bigjoin_{a\in A}\bigjoin_{q' \in \sim_a(q)}
	\bigjoin_{\lambda\in out(q',s_A)}\set{\truthvalue{\gamma}{M,\lambda}^\mathfrak{S}}$.
\end{center}

A more detailed discussion on the epistemic aspects of strategic ability can
be found in~\cite{Agotnes15handbook,Jamroga15specificationMAS}.
We leave the proper treatment of diverse epistemic variants of \ATLSi for the future.

%-------------------------------------------------------------------------------
\subsection{Model checking techniques and formal results}
%-------------------------------------------------------------------------------

We emphasize again that the correctness of the techniques proposed in Section~\ref{sec:mcheck-mv}
\emph{does not depend on the actual definition of the strategy sets $\Sigma_A$}.
In consequence, the results carry over to the imperfect information case, and the techniques
can be applied \emph{in exactly the same way} to obtain model checking reductions from \ATLo[\mathfrak{S}]
to the corresponding 2-valued cases.
This demonstrates the power of the translation method that can be directly applied to a vast array of possible semantics for \ATLs.
Again, multi-valued verification of \ATLo[\mathfrak{S}] incurs only linear increase in the complexity compared to the 2-valued case.

%-------------------------------------------------------------------------------
\section{Case study: Multi-valued verification of the drone model}\label{sec:drones}
%-------------------------------------------------------------------------------

Besides the theoretical results discussed in the preceding sections, we present an experimental evaluation
of our approach to verification of strategic abilities.
To this end, we propose a new scalable benchmark based on the running example employed throughout the paper.
We use the CGS template of the team of drones patrolling for pollution in a city (cf.~Examples~\ref{ex:drones-mv} and~\ref{ex:drones-mv-imperfect},
as well as the graphs in Figures~\ref{fig:drones-mv} and~\ref{fig:drones-mv-imperfect}), but with a more complex map to make the study more realistic.
The details and outcomes of the experiments are presented further on in this section.

%-------------------------------------------------------------------------------
\subsection{Model description}
%-------------------------------------------------------------------------------

The benchmark is an extension of the drone model used in the previous sections.
To recall, we consider a number of drones flying over a fixed area, with each drone modeled as a separate agent.
The map is represented by a directed graph that defines the locations $Loc$ and the paths used by the agents
to move between those locations.
We employ the map shown in Figure~\ref{fig:experimentsMap}.
For the experiments, we assume that the connections between locations are symmetric (i.e., can be traversed both ways),
and hence an undirected graph is a sufficient representation of the map.

\begin{figure}[h]
\vspace{1mm}
\centering
\includegraphics[scale=0.25]{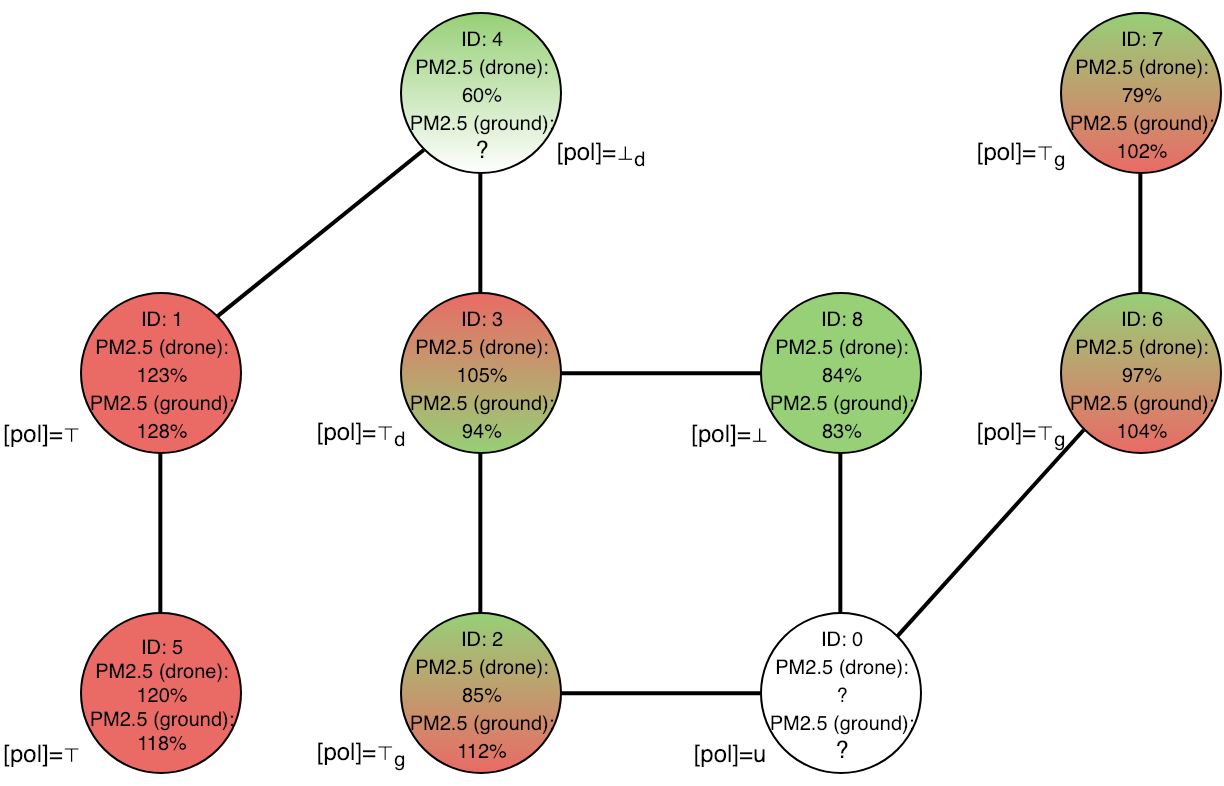}
\caption{The map used in the experiments}
\label{fig:experimentsMap}\vspace*{-2mm}
\end{figure}

The system consists of a number of drone agents and the environment.
The set of all drones is denoted by $D$.
A drone can use its sensors to measure the pollution at its current location.
Moreover, it can communicate with the other drones at the same and adjacent locations using bluetooth,
and obtain their current readings.
The readings from all the ground sensors are broadcasted by the monitoring center, and hence are available
to all drones at all times.
This is modeled by an epistemic indistinguishability relation, with the following information available to the drone:
\begin{itemize}
  \item Its current position (i.e., a location number);
  \item Reading from the drone sensor in its current position;
  \item Readings from the adjacent drones;
  \item Readings from all the ground sensors;
  \item A battery charge level;
  \item A set of already visited places.
\end{itemize}
We assume that the time span of the mission is at most $30$ mins (currently, there are still relatively
few types of drones that can fly longer than a couple of minutes, and they are mostly used in industrial and military contexts).
With this provision, we can assume that the environment is stationary throughout the mission.
That is, while traversing the map the drones will always get the same readings from a given location.

Each drone can perform five possible actions: \emph{go North}, \emph{South}, \emph{East}, \emph{West}, and \emph{Wait}.
Any movement consumes energy.
When the battery level drops to zero, the only action that the drone can perform is \emph{Wait}.
This means it will stay at its current location forever (since our model does not feature battery recharging).
However, such an immobilized drone can still broadcast information to the nearby drones.

As before, we use multi-valued atomic propositions $\prop{pol_d}$, $d\in D$, with values drawn from the lattice $\dlattice$.
The interpretation of $\prop{pol_d}$ is given by the combined readings of drone $d$'s sensor
and of the ground sensor at the current location of $d$.

\medskip
The models are scaled with respect to the following parameters:
\begin{itemize}
	\item Number of drones;
	\item Initial battery level (the same for each drone).
\end{itemize}

%-------------------------------------------------------------------------------
\subsection{Formulas}
%-------------------------------------------------------------------------------

In the rest of this section, \prop{d} will refer to an arbitrary drone in the set $D$.
The first formula to be verified is
$$\phi_1\quad = \quad \Epath\Sometm \prop{pol_d}\ \ra\ \coop{d}\Sometm \prop{pol_d}$$
It says that if drone $d$ \emph{might} detects pollution to some degree,
then $d$ has a strategy to guarantee that this will indeed be the case.
Note that the formula is an implication, and hence -- due to the results in Section~\ref{sec:impossibility} -- a straightforward reduction to classical model checking is problematic. Because of that, we use the recursive reduction algorithm of Section~\ref{sec:recursive}.
That is, we split $\phi_1$ into its left hand side ($\Epath\Sometm \prop{pol_d}$) and right hand side ($\coop{d}\Sometm \prop{pol_d}$).
We also observe that the left hand side of the implication, expressed in ATL and transformed to the negation normal form, becomes
$\noavoid{\emptyset}\Sometm\prop{pol_d}$.
Thus, in order to determine the value of $\phi_1$, we need to carry out multi-valued model checking of the following two formulas:
\begin{itemize}
\item $\phi_{1L}\ = \noavoid{\emptyset}\Sometm \prop{pol_d}$,\
and
\item $\phi_{1R}\ = \coop{d}\Sometm \prop{pol_d}$,
\end{itemize}
each of them satisfying the preconditions of Theorem~\ref{general}.

\medskip
We observe that the above specification is relatively weak: it requires that if pollution is present somewhere
then the drone is able to find it at some location.
In order to allow for a finer-grained specification, we add to the drone model a family of atomic propositions
$\prop{at_{d,loc}}$ with classical, 2-valued interpretation.
More precisely, $\prop{at_{d,loc}}$ evaluates to $\top$ in the states where drone $d$ is at location $loc \in Loc$,
and to $\bot$ everywhere else.
In addition, we allow for cooperation between the drones.
More exactly, we will be looking at joint strategies of the team of all drones $D$ with the following property:
if any of the drones might detect pollution at location $loc$, then the drones can ensure that one of them will indeed detect it:
$$\phi_2\quad = \quad \bigwedge_{loc \in Loc}\ \big(\Epath\Sometm \bigvee_{d\in D} (\prop{at_{d,loc}} \land \prop{pol_d})
     \quad \ra\quad \coop{D}\Sometm \bigvee_{d\in D} (\prop{at_{d,loc}} \land \prop{pol_d}) \big).$$

Again, the formula is an implication, and thus requires separate treatment of the left and right hand sides of ``$\ra$.''
Here, we only report the verification results for the right-hand subformula, i.e.:
\begin{itemize}
\item $\phi_{2R}^{loc}\ = \coop{D}\Sometm \bigvee_{d\in D} (\prop{at_{d,loc}} \land \prop{pol_d})$
\end{itemize}
for an arbitrary selected value of $loc$.

\medskip
The considered formulas emphasize the importance of the comparison operator $\ra$ for actual specification of properties.
Many (if not most) relevant properties of multi-agent systems are expressed as an implication:
if the assumptions are satisfied, the target property should hold as well.
The multi-valued variant of such requirements demands that $\psi$ is satisfied to at least the same degree as $\varphi$.

%-------------------------------------------------------------------------------
\subsection{Semantics and algorithms}
%-------------------------------------------------------------------------------

We note that formula $\phi_{1L}$ refers only to the abilities of the empty coalition, and hence does not involve reasoning about imperfect information. In consequence, one can as well evaluate it using the perfect information semantics of \mvATL. Then, the translation in Section~\ref{sec:mcheck-mv-translation} reduces the multi-valued verification of $\phi_{1L}$ to model checking of 2-valued \ATL with perfect information. We implement the latter by means of the standard fixpoint algorithm from~\cite{Alur02ATL}.

\medskip
In contrast, the semantics of formulas $\phi_{1R}$ and $\phi_{2R}$ refers to strategies with imperfect information.
Accordingly, the translation in Section~\ref{sec:mcheck-mv-translation} reduces the problem to model checking of 2-valued \ATL with imperfect information.
Since the exact model checking of abilities under imperfect information is hard, both theoretically~\cite{Schobbens04ATL,Jamroga06atlir-eumas} and in practice~\cite{Lomuscio15mcmas,Busard17phd,Pilecki14synthesis},
we go around the complexity by using the fixpoint-based approximate model checking algorithm proposed recently in~\cite{Jamroga17fixpApprox}.
That is, the 2-valued model checking of $\phi_{1R}$ proceeds by a model-independent translation to its upper and lower variants $\phi_{1R}^U,\phi_{1R}^L$, both of which can be verified by fixpoint algorithms.
If the verification output for $\phi_{1R}^U$ and $\phi_{1R}^L$ matches, it is guaranteed correct for $\phi_{1R}$, too; otherwise, the outcome is inconclusive.
The 2-valued model checking for $\phi_{2R}$ is obtained analogously.

\medskip
As we will see, the output of the lower and the upper approximation always matched in our experiments (cf.~Figures~\ref{fig:experimentsResults-f1R} and~\ref{fig:experimentsResults-f2R}), thus providing fully conclusive outcome.

%-------------------------------------------------------------------------------
\subsection{Experimental results}
%-------------------------------------------------------------------------------

The results of the experiments are presented in Figures~\ref{fig:experimentsResults-f1L}
(for formula $\phi_{1L}$), \ref{fig:experimentsResults-f1R} (formula $\phi_{1R}$), and~\ref{fig:experimentsResults-f2R} (formula $\phi_{2R}$).
For each of the formulas, we considered several configurations of the drone model.
The main scaling factor was the number of the drones in the system.
The second source of complexity was the initial energy level (the same for every drone).
The initial location on the map was always 0, for all the drones in the system.

\begin{figure}[!h]
\vspace{3mm}
\centering
\scalebox{0.86}{
  \begin{tabular}{|c|Hc!{\vrule width 1pt}c|cHH|c|c|}
    \hline
    
    \#drones & init. place & energy  & \#states  & tgen   & tverif (ir) & output (ir) & tverif & output \\
    \hline
    \hline
    1 & 0 & 1 & 5 & 0.007 & 0.02 & $\top_g$ & 0.01 & $\top_g$ \\ \hline
    1 & 0 & 2 & 14 & 0.005 & 0.06 & $\top_d\join\top_g$ & 0.05 & $\top_d\join\top_g$ \\ \hline
    1 & 0 & 3 & 32 & 0.02 &0.12 & $\top_d\join\top_g$ & 0.10 & $\top_d\join\top_g$ \\ \hline
    1 & 0 & 4 & 61 & 0.04 & 0.23 & $\top$ & 0.26 & $\top$ \\ \hline
    1 & 0 & 5 & 106 & 0.03 & 0.38 & $\top$ & 0.38 & $\top$ \\ \hline
    1 & 0 & 10 & 601 & 0.37 & 2.32 & $\top$ & 2.13 & $\top$ \\ \hline
    1 & 0 & 100 & 16740 & 8.84 & 89.97 & $\top$ & 85.91 & $\top$ \\ \hline
    1 & 0 & 1000 & 178740 & 85.85 & timeout & $-$ & timeout & $-$ \\ \hline
    \hline
    2 & 0 & 1 & 17 & 0.01 & 0.12 & $\top_g$ & 0.08 & $\top_g$ \\ \hline
    2 & 0 & 2 & 98 & 0.05 & 0.57 & $\top_d\join\top_g$ & 0.54 & $\top_d\join\top_g$ \\ \hline
    2 & 0 & 3 & 422 & 0.34 & 2.17 & $\top_d\join\top_g$ & 2.16 & $\top_d\join\top_g$ \\ \hline
    2 & 0 & 4 & 1263 & 1.23 & 6.45 & $\top$ & 6.32 & $\top$ \\ \hline
    2 & 0 & 5 & 3288 & 3.83 & 17.86 & $\top$ & 16.77 & $\top$ \\ \hline
    2 & 0 & 10 & 55757 & 89.23 & 779.57 & $\top$ & 719.81 & $\top$ \\ \hline
    2 & 0 & 100 & $-$ & timeout & $-$ & $-$ & $-$ & $-$ \\ \hline
    \hline
    3 & 0 & 1 & 65 & 0.02 & 0.54 & $\top_g$ & 0.46 & $\top_g$ \\ \hline
    3 & 0 & 2 & 794 & 1.02 & 5.24 & $\top_d\join\top_g$ & 5.34 & $\top_d\join\top_g$ \\ \hline
    3 & 0 & 3 & 6626 & 12.44 & 51.52 & $\top_d\join\top_g$ & 43.09 & $\top_d\join\top_g$  \\ \hline
    3 & 0 & 4 & 31015 & 70.55 & 301.24 & $\top$ & 293.59 & $\top$ \\ \hline
    3 & 0 & 5 & 122140 & 414.57 & 3322.00 & $\top$ & 3003.56 & $\top$ \\ \hline
    3 & 0 & 10 & $-$ & timeout & $-$ & $-$ & $-$ & $-$ \\ \hline
    \end{tabular}
    
}
\caption{Experimental results for $\phi_{1L}$}
\label{fig:experimentsResults-f1L}
\end{figure}

\begin{figure}[H]
\vspace{1mm}
\centering
\scalebox{0.84}{
  \begin{tabular}{|c|Hc!{\vrule width 1.2pt}c|c|c|c|c|c|}
    \hline
    \multirow{2}{*}{\#drones} & \multirow{2}{*}{init. place} & \multirow{2}{*}{energy}  & \multirow{2}{*}{\#states}  & \multirow{2}{*}{tgen}   & \multicolumn{2}{c|}{Lower approx.} & \multicolumn{2}{c|}{Upper approx.} \\
    \cline{6-9}
     & & & & & tverif & output & tverif & output \\
    \hline\hline
    1 & 0 & 1         & 5             & 0.008     & 0.03       & $\top_g$                   & 0.02 & $\top_g$ \\ \hline
    1 & 0 & 2         & 14           & 0.006     & 0.05       & $\top_d\join\top_g$ & 0.06 & $\top_d\join\top_g$ \\ \hline
    1 & 0 & 3         & 32           & 0.01       & 0.14       & $\top_d\join\top_g$ & 0.21 & $\top_d\join\top_g$ \\ \hline
    1 & 0 & 4         & 61           & 0.02       & 0.22       & $\top$ & 0.28 & $\top$                      \\ \hline
    1 & 0 & 5         & 106         & 0.03       & 0.39       & $\top$ & 0.48 & $\top$                      \\ \hline
    1 & 0 & 10       & 601         & 0.27       & 2.12       & $\top$  & 2.38 & $\top$                     \\ \hline
    1 & 0 & 100     & 16740     & 9.70       & 59.27     & $\top$  & 64.40 & $\top$                     \\ \hline
    1 & 0 & 1000   & 178740   & 92.78     & 615.23   & $\top$  & 669.85 &  $\top$                    \\ \hline
    1 & 0 & 10000 & 1798740 & 890.85   & 6125.77 & $\top$  & 6054.80 & $\top$                     \\ \hline
    1 & 0 & 12000 & 2158740 & 1122.84 & timeout  & $-$  & timeout & $-$                           \\ \hline
    \hline
    2 & 0 & 1        & 17            & 0.01       & 0.18       & $\top_g$ & 0.10 & $\top_g$  \\ \hline
    2 & 0 & 2        & 98            & 0.06       & 0.66       & $\top_d\join\top_g$ & 0.64 & $\top_d\join\top_g$ \\ \hline
    2 & 0 & 3        & 422          & 0.33       & 2.63       & $\top_d\join\top_g$ & 2.40 & $\top_d\join\top_g$ \\ \hline
    2 & 0 & 4        & 1263        & 1.21       & 8.43       & $\top$ & 7.31 & $\top$ \\ \hline
    2 & 0 & 5        & 3288        & 3.53       & 19.08     & $\top$ & 18.19 & $\top$ \\ \hline
    2 & 0 & 10      & 55757      & 89.30     & 323.42   & $\top$ & 321.15 & $\top$ \\ \hline
    2 & 0 & 100    & $-$               & timeout  & $-$  & $-$ & $-$ & $-$   \\ \hline
    \hline
    3 & 0 & 1       & 65            & 0.05        & 0.47       & $\top_g$ & 0.52 & $\top_g$ \\ \hline
    3 & 0 & 2       & 794          & 0.95        & 5.71       & $\top_d\join\top_g$ & 5.90 & $\top_d\join\top_g$ \\ \hline
    3 & 0 & 3       & 6626        & 12.31      & 55.60     & $\top_d\join\top_g$ & 50.19 & $\top_d\join\top_g$ \\ \hline
    3 & 0 & 4       & 31015      & 78.63      & 271.55   & $\top$& 239.28 &  $\top$ \\ \hline
    3 & 0 & 5       & 122140    & 382.30    & 1719.81 & $\top$ & 932.31 &  $\top$ \\ \hline
    3 & 0 & 10      & $-$              & timeout    & $-$  & $-$ & $-$ & $-$ \\ \hline
    \end{tabular}
}
\caption{Experimental results for $\phi_{1R}$
}
\label{fig:experimentsResults-f1R} \vspace{6mm}

\centering
\scalebox{0.84}{
  \begin{tabular}{|Hc!{\vrule width 1.2pt}Hc|c|c|c|c|c|}
    \hline
    \multirow{2}{*}{\#drones} & \multirow{2}{*}{energy} & \multirow{2}{*}{loc}  & \multirow{2}{*}{\#states}  & \multirow{2}{*}{tgen}   & \multicolumn{2}{c|}{Lower approx.} & \multicolumn{2}{c|}{Upper approx.} \\
    \cline{6-9}
     & & & & & tverif & output & tverif & output \\
    \hline\hline
    3 & 1 & 7 & 65 & 0.04 & 0.45 & $\bot$ & 0.46 & $\bot$  \\ \hline
    3 & 2 & 7 & 794 & 1.05 & 5.60 & $\top_g$ & 5.43 & $\top_g$  \\ \hline
    3 & 3 & 7 & 6626 & 12.05 & 62.64 & $\top_g$ & 45.45 & $\top_g$  \\ \hline
    3 & 4 & 7 & 31015 & 77.23 & 233.45 & $\top_g$ & 227.44 & $\top_g$  \\ \hline
    3 & 5 & 7 & 122140 & 379.29 & 951.14 & $\top_g$ & 854.41 & $\top_g$  \\ \hline
    3 & 6 & 7 & 349121 & 1276.82 & 2706.15 & $\top_g$ & 2423.55 & $\top_g$  \\ \hline
    3 & 7 & 7 & 880562 & 3446.30 & 6383.17 & $\top_g$ & 6052.17 & $\top_g$  \\ \hline
    3 & 8 & 7 & 1850861 & 9718.62 & timeout & $-$ & timeout & $-$  \\ \hline
    \end{tabular}    
}
\caption{Experimental results: $\phi_{2R}^{loc}$ for $\#drones=3$ and $loc=7$}
\label{fig:experimentsResults-f2R}\vspace*{-3mm}
\end{figure}

\medskip
The experiments were conducted on an Intel Core i7-6700 CPU with dynamic clock speed of 2.60--3.50 GHz, 32 GB RAM, running under 64bit Windows 10.
The times are given in seconds; the timeout was set to 2 hours.
As the performance results show, multi-valued verification of strategic ability scales up similarly to two-valued model checking~\cite{Jamroga17fixpApprox},
which confirms the theoretical results in Section~\ref{sec:recursive}.

The software used to conduct the experiments can be found at the address\\
\texttt{https://github.com/blackbat13/stv}.
The software is implemented in Python 3. As it is an on-going development,
it does not accept any input language. Instead, model generators are used.
Models are generated as transition graphs and stored explicitly in the memory.

\medskip
As the experiments show, the result of the formula depends mostly on the initial energy of the drones. If given enough energy, drones can visit every place on the map, hence detecting any pollution.
On the other hand, even a drone with very small capacity of the battery can detect something.
As can be seen in Figures~\ref{fig:experimentsResults-f1L} and~\ref{fig:experimentsResults-f1R}, for the cases in which initial energy of the drones were less than 3, answer was more informative than simple false, as it would have been if we had used two-valued logic. It shows that multi-valued logics can provide the designer or analyst with much more useful information beyond a simple yes/no answer.

%-------------------------------------------------------------------------------
\section{Conclusions}\label{sec:conclusions}
%-------------------------------------------------------------------------------

In this paper we study a variant of alternating-time temporal logic, denoted as \ATLSi,
where the truth values are taken from an arbitrary distributive lattice.
We argue that multi-valued model checking of \ATLSi specifications can be useful,
especially for systems whose models cannot be fully analyzed due to their complexity and/or
inaccessibility of the relevant information.
Other examples include systems with information coming from multiple, potentially conflicting sources.
We propose the semantics of \ATLSi first in the simplest case of perfect information strategies and models with crisp,
classical transition functions.
Then, we show how to extend the framework to the case of multi-valued transitions, as well as other notions of strategies
(in particular, variants of strategic reasoning for agents with limited observation capabilities).

In terms of technical results, we prove that our multi-valued semantics of \ATLSi provides a conservative extension of
the classical 2-valued variant.
More importantly, we propose efficient (i.e., polynomial-time) translations from multi-valued model checking to the 2-valued case.
We formally characterize the conditions under which the translation can be carried out by non-recursive one-to-many reduction,
and propose a recursive procedure for the remaining instances of the problem.
The proposed techniques are elegant enough to be directly applicable to other semantic variants of strategic ability,
for example, those referring to imperfect information scenarios.
This allows for non-classical model checking of abilities while benefiting from the ongoing development
of classical model checkers and game solvers.

Finally, we back up our proposal by a series of experiments in a simulated scenario of drones patrolling for pollution in a city.
Besides promising performance results, the experiments demonstrate also the use of the \emph{relevant implication}, based on comparison of truth values, which is among the main novel contributions of this paper.
The operator can be used to provide a multi-valued counterpart of material implication, with an intuitive and appealing interpretation.
This is especially important in multi-agent systems where many relevant properties are indeed based on implication,
which makes them difficult to formalize in the multi-valued case.

In the future, we plan to extend the framework of \ATLSi to richer specification languages,
such as Strategy Logic~\cite{Mogavero10stratLogic,Mogavero14behavioral,Berthon17sl-imperfectinfo}.
We would also like to take a closer look at multi-valued models arising from state and action abstractions,
and to the application of multi-valued model checking to verification of strategic ability under imperfect information.

\medskip\noindent\textbf{Acknowledgements.}
The authors thank Arthur Queffelec for his help in the implementation of the model checking algorithm.
Moreover, Wojciech Jamroga acknowledges the support of the 7th Framework Programme of the European Union under the Marie Curie
IEF project ReVINK (PIEF-GA-2012-626398).
Wojciech Jamroga, Damian Kurpiewski, and Wojciech Penczek acknowledge the support of the National Centre for Research and Development (NCBR),
Poland, under the PolLux projects VoteVerif (POLLUX-IV/1/2016) and STV (POLLUX-VII/1/2019).
Damian Kurpiewski and Wojciech Penczek acknowledge also the support of CNRS/PAN under the project PARTIES.

\end{document}